\theoremstyle{plain}
\newtheorem{theorem}{Theorem}[section]
\newtheorem{lemma}[theorem]{Lemma}
\newtheorem{prop}[theorem]{Proposition}
\newtheorem{cor}[theorem]{Corollary}
\newtheorem{conjecture}[theorem]{Conjecture}
\theoremstyle{definition}
\numberwithin{equation}{section}
\newcommand{\R}{\mathbb R}
\renewcommand{\H}{\mathbb H}
\newcommand{\C}{\mathbb C}
\newcommand{\dist}{\operatorname{dist}}
\newcommand{\KN}{\text{\tiny{KN}}}
\newcommand{\tr}{\operatorname{Tr}}
\newcommand{\EM}{\text{\tiny{EM}}}
\newcommand{\MP}{\text{\tiny{MP}}}
\begin{document}

\title[The Positive mass theorem for multiple rotating charged black holes] {The Positive mass theorem for multiple rotating charged black holes}

\author[Khuri]{Marcus Khuri}
\address{Department of Mathematics\\
Stony Brook University\\
Stony Brook, NY 11794, USA}
\email{khuri@math.sunysb.edu}

\author[Weinstein]{Gilbert Weinstein}
\address{Department of Physics and Department of Mathematics\\
Ariel University\\
Ariel, 40700, Israel}
\email{gilbertw@ariel.ac.il}

\thanks{M. Khuri acknowledges the support of NSF Grant DMS-1308753.}

\begin{abstract}
In this paper a lower bound for the ADM mass is given in terms of the angular momenta and charges
of black holes present in axisymmetric initial data sets for the Einstein-Maxwell equations. This generalizes the mass-angular momentum-charge inequality obtained by Chrusciel and Costa to the case of multiple black holes. We also weaken the hypotheses used in the proof of this result for single black holes, and establish the associated rigidity statement.
\end{abstract}
\maketitle

\section{Introduction}
\label{sec1}

Based on heuristic arguments reminiscent of those used to motivate the Penrose inequality (see Appendix \ref{sec5}), one
may derive the following inequality
\begin{equation}\label{0}
m^2\geq\frac{q^{2}+  \sqrt{q^4 + 4\mathcal{J}^2}}{2},
\end{equation}
relating the ADM mass $m$, ADM angular momentum $\mathcal{J}$, and total charge of asymptotically flat axisymmetric initial data for the Einstein-Maxwell equations. This inequality implies both the mass-angular momentum inequality $m\geq\sqrt{|\mathcal{J}|}$ and the mass-charge inequality $m\geq|q|$; the later is often referred to as the positive mass theorem with charge. While the mass-charge inequality has been rigorously established in great generality \cite{GHHP}, without the axisymmetric assumption and for multiple black holes, the same is not true of the mass-angular momentum inequality or the mass-angular momentum-charge inequality \eqref{0}. For these inequalities, the axisymmetric condition is necessary as it is
related to conservation of angular momentum, without which the motivating heuristic arguments would no longer apply. In fact, counterexamples exist \cite{HuangSchoenWang} without the axisymmetric hypothesis. In this setting, and with the addition of supplementary hypotheses to be discussed below, the mass-angular momentum inequality was established for a single black hole by Dain in \cite{Dain}, and was later extended and improved upon by Schoen and Zhou \cite{SchoenZhou}. The case of multiple black holes was taken up
by Chrusciel, Li, and Weinstein \cite{ChruscielLiWeinstein} who proved the lower bound
\begin{equation}\label{0.1}
m\geq \mathcal{F}(\mathcal{J}_{1},\ldots,\mathcal{J}_{N}),
\end{equation}
where $\mathcal{F}$ is a function of the angular momentuma $\mathcal{J}_{n}$ associated with the $N$ black holes. It is an open question whether this function agrees with the predicted value $\sqrt{|\mathcal{J}|}$, where
$\mathcal{J}=\sum_{n=1}^{N}\mathcal{J}_{n}$. The inequality \eqref{0} has also been settled under certain conditions for single black holes by Chrusciel and Costa \cite{ChruscielCosta}, \cite{Costa}. It is the primary
purpose of the present article to extend this result to the case of multiple black holes, by establishing in this setting a lower bound for the mass in the spirit of \eqref{0.1}.

An initial data set $(M, g, k, E, B)$ for the Einstein-Maxwell equations consists of a 3-manifold $M$, Riemannian metric $g$, symmetric 2-tensor $k$ representing
extrinsic curvature, and vector fields $E$ and $B$ which constitute the electromagnetic field. Let $\mu_\EM$ and $J_\EM$
be the energy and momentum densities of the matter fields after contributions from the Maxwell field have been removed.
If charged matter is not present, the initial data satisfy the following set of constraints
\begin{align}\label{1}
\begin{split}
16\pi\mu_\EM &= R+(\tr_{g}k)^{2}-|k|_{g}^{2}-2(|E|_{g}^{2}+|B|_{g}^{2}),\\
8\pi J_\EM &= \operatorname{div}_{g}(k-(\tr_{g}k)g)+2E\times B,\\
\operatorname{div}_{g} E & =\operatorname{div}_{g} B=0,
\end{split}
\end{align}
where $R$ is the scalar curvature of $g$, and $(E\times B)_{i}=\epsilon_{ijl}E^{j}B^{l}$ is the cross product with
$\epsilon$ the volume form of $g$.

It will be assumed throughout that the data are axially symmetric. This means that the group of isometries of the
Riemannian manifold $(M,g)$ has a subgroup
isomorphic to $U(1)$, and that all quantities defining the initial data are invariant under the $U(1)$ action. Thus, if
$\eta$ is the Killing field associated with this symmetry, then
\begin{equation}\label{5}
\mathfrak{L}_{\eta}g=\mathfrak{L}_{\eta}k=\mathfrak{L}_{\eta}E=\mathfrak{L}_{\eta}B=0,
\end{equation}
where $\mathfrak{L}_{\eta}$ denotes Lie differentiation. We will also postulate that
$M$ has at least two ends, with one designated end being asymptotically flat, and the others being either asymptotically
flat or asymptotically cylindrical. Recall that a domain $M_{\text{end}}\subset M$ is an
asymptotically flat end if it is diffeomorphic to $\mathbb{R}^{3}\setminus\text{Ball}$, and in the coordinates given by
the asymptotic
diffeomorphism the following fall-off conditions hold
\begin{equation}\label{3}
g_{ij}=\delta_{ij}+o_{l}(r^{-\frac{1}{2}}),\text{ }\text{ }\text{ }\text{ }\partial g_{ij}\in
L^{2}(M_{\text{end}}),\text{
}\text{ }\text{ }
\text{ }k_{ij}=O_{l-1}(r^{-\lambda}),\text{ }\text{ }\text{ }\text{ }\mu_{\EM}, J_{\EM}^{i}\in L^{1}(M_{\text{end}}),
\end{equation}
\begin{equation}\label{4}
E^{i}=O_{l-1}(r^{-\lambda}),\text{ }\text{ }\text{ }\text{ }\text{ }B^{i}=O_{l-1}(r^{-\lambda}),\text{
}\text{ }\text{ }
\text{ }\lambda>\frac{3}{2},
\end{equation}
for some $l\geq 5$.\footnote{The notation $f=o_{l}(r^{-a})$ asserts that
$\lim_{r\rightarrow\infty}r^{a+j}\partial_{r}^{j}f=0$
for all $j\leq l$, and
$f=O_{l}(r^{-a})$ asserts that $r^{a+j}|\partial_{r}^{j}f|\leq C$ for all $j\leq l$. The assumption $l\geq 5$ is needed for
the results in \cite{Chrusciel} and \cite{Sokolowsky}.}

Let $M$ be simply connected.
Then it is shown in
\cite{Chrusciel} (see also \cite{Sokolowsky} for the case when cylindrical ends are present) that $M\cong\mathbb{R}^{3}\setminus\sum_{n=1}^{N}p_{n}$, and that there exists a global (cylindrical)
Brill coordinate system $(\rho,z,\phi)$ on $M$, where the points $p_{n}$ representing black holes all lie on the
$z$-axis, and in which the Killing field is given by $\eta=\partial_{\phi}$.  In these coordinates the metric takes a
simple form
\begin{equation}\label{16}
g=e^{-2U+2\alpha}(d\rho^{2}+dz^{2})+\rho^{2}e^{-2U}(d\phi+A_{\rho}d\rho+A_{z}dz)^{2},
\end{equation}
where $\rho e^{-U}(d\phi+A_{\rho}d\rho+A_{z}dz)$ is the dual 1-form to $|\eta|^{-1}\eta$ and all coefficient functions
are independent of $\phi$. Let $M_{\text{end}}^{0}$ denote the designated asymptotically flat end associated with the
limit $r=\sqrt{\rho^{2}+z^{2}}\rightarrow\infty$. Then in this end
\begin{equation}\label{17}
U=o_{l-3}(r^{-\frac{1}{2}}),\text{ }\text{ }\text{ }\text{ }\alpha=o_{l-4}(r^{-\frac{1}{2}}),\text{ }\text{ }\text{
}\text{ }A_{\rho}=\rho o_{l-3}(r^{-\frac{5}{2}}),\text{ }\text{ }\text{ }\text{ }A_{z}=o_{l-3}(r^{-\frac{3}{2}}).
\end{equation}
The remaining ends associated with the points $p_{n}$ will be denoted by $M_{\text{end}}^{n}$, and are associated with
the limit $r_{n}\rightarrow 0$, where $r_{n}$ is the Euclidean distance to $p_{n}$. As the remaining ends may be either asymptotically flat or asymptotically cylindrical, we list both types of asymptotics
\begin{equation}\label{18}
U=2\log r_{n}+o_{l-4}(r_{n}^{\frac{1}{2}}),\text{ }\text{ }\text{ }\text{ }\alpha=o_{l-4}(r_{n}^{\frac{1}{2}}),\text{
}\text{ }\text{ }\text{ }A_{\rho}=\rho o_{l-3}(r_{n}^{\frac{1}{2}}),\text{ }\text{ }\text{ }\text{
}A_{z}=o_{l-3}(r_{n}^{\frac{3}{2}}),
\end{equation}
\begin{equation}\label{19}
U=\log r_{n}+O_{l-4}(1),\text{ }\text{ }\text{ }\text{ }\alpha=O_{l-4}(1),\text{
}\text{ }\text{ }\text{ }A_{\rho}=\rho o_{l-3}(r_{n}^{\frac{1}{2}}),\text{ }\text{ }\text{ }\text{
}A_{z}=o_{l-3}(r_{n}^{\frac{3}{2}}),
\end{equation}
respectively.

The fall-off conditions in the designated asymptotically flat end guarantee that the ADM mass, ADM angular momentum, and
total charges are well-defined by the following limits
\begin{equation}\label{6}
m=\frac{1}{16\pi}\int_{S_{\infty}}(g_{ij,i}-g_{ii,j})\nu^{j},
\end{equation}
\begin{equation}\label{7}
\mathcal{J}=\frac{1}{8\pi}\int_{S_{\infty}}(k_{ij}-(\tr_{g} k)g_{ij})\nu^{i}\eta^{j}
+\frac{1}{4\pi}\int_{S_{\infty}}(E_{i}\nu^{i})(\vec{A}_{j}\eta^{j}),
\end{equation}
\begin{equation}\label{8}
q_{e} = \frac1{4\pi} \int_{S_\infty} E_i \nu^i\, , \qquad
q_{b} = \frac1{4\pi} \int_{S_\infty} B_i \nu^i\, ,
\end{equation}
where $S_{\infty}$ indicates the limit as $r\rightarrow\infty$ of integrals over coordinate spheres $S_{r}$, with unit
outer normal $\nu$, and $\vec{A}$ is the vector potential for the magnetic field. Due to topological considerations some care must be taken to construct the vector potential, moreover its contribution to \eqref{7} vanishes under appropriate asymptotic conditions \cite{DainKhuriWeinsteinYamada}; thus the current definition of angular momentum typically agrees the with the standard ADM notion. Here $q_{e}$ and $q_{b}$ denote the total electric and magnetic charge respectively,
and we denote the square of the total charge by $q^{2}=q_{e}^{2}+q_{b}^{2}$. Note that the fall-off in \eqref{3} is not
strong enough to imply that the ADM linear momentum vanishes, as is
typically assumed in the study of mass-angular momentum type inequalities. Therefore the expression \eqref{6}, which
represents the ADM energy, does not necessarily coincide with the standard definition of ADM mass as the length of the
4-momentum. Nevertheless, here, we will continue to refer to \eqref{6} as the mass. Furthermore, note that the asymptotics \eqref{3} are not necessarily strong enough to guarantee that the angular momentum is finite, since the Killing fields grow like $r$. However, under the addition hypothesis that $J_{\EM}(\eta)\in L^{1}(M_{\text{end}})$ it follows that \eqref{7} is finite, as may be seen from the proof of Lemma 2.1 in \cite{DainKhuriWeinsteinYamada}.

In the presence of an electromagnetic field, angular momentum is conserved \cite{Dain}, \cite{DainKhuriWeinsteinYamada}
if
\begin{equation}\label{9}
J_\EM^{i}\eta_{i}=0.
\end{equation}
In this case
\begin{equation}\label{14}
\mathcal{J}=\sum_{n=1}^{N}\mathcal{J}_{n},
\end{equation}
where $\mathcal{J}_{n}$ represents the angular momentum of the black hole $p_{n}$. Moreover, it will
be shown in the next section that the condition \eqref{9} gives rise to a charged twist potential $v$
which encodes the angular momentum by
\begin{equation}\label{13}
\mathcal{J}_{n}=\frac{1}{4}(v|_{I_{n}}-v|_{I_{n-1}}),
\end{equation}
where $I_{n}$ denotes the interval of the $z$-axis between $p_{n}$ and $p_{n+1}$, where $p_{0}=-\infty$ and
$p_{N+1}=\infty$. Potentials $\chi$ and $\psi$ may also be obtained for the electric and magnetic fields, respectively,
as a result of the constraints $\operatorname{div}_{g}E=\operatorname{div}_{g}B=0$. Similarly, the charges of
each black hole are given by
\begin{equation}\label{13.1}
q^{e}_{n}=\frac{1}{2}(\chi|_{I_{n}}-\chi|_{I_{n-1}}),\text{ }\text{ }\text{ }\text{ }\text{ }
q^{b}_{n}=\frac{1}{2}(\psi|_{I_{n}}-\psi|_{I_{n-1}}),
\end{equation}
with total charges
\begin{equation}\label{14.1}
q^{e}=\sum_{n=1}^{N}q^{e}_{n},\text{ }\text{ }\text{ }\text{ }\text{ }
q^{b}=\sum_{n=1}^{N}q^{b}_{n}.
\end{equation}

In the case of a single black hole, the mass-angular momentum-charge inequality \eqref{0} may be established in two
steps \cite{ChruscielCosta}, \cite{Costa}, \cite{SchoenZhou}. The first consists of proving a lower bound for the ADM
mass in terms of a harmonic map energy functional
\begin{equation}\label{20}
m\geq\mathcal{M}(U,v,\chi,\psi),
\end{equation}
where
\begin{equation}\label{21}
\mathcal{M}(U,v,\chi,\psi)
=\frac{1}{8\pi}\int_{\mathbb{R}^{3}}\left(|\nabla U|^{2}
+\frac{e^{4U}}{\rho^{4}}|\nabla v
+\chi\nabla\psi-\psi\nabla\chi|^{2}
+\frac{e^{2U}}{\rho^{2}}\left(|\nabla\chi|^{2}
+|\nabla\psi|^{2}\right)\right)dx
\end{equation}
with $|\nabla U|^2=(\partial_{\rho}U)^2+(\partial_{z}U)^2$ and $dx$ denoting the Euclidean volume element; this notation will be used throughout the paper.
The inequality \eqref{20} relies heavily on the assumption of a maximal data set $\tr_{g}k=0$, however
proposals for treating the nonmaximal case have been recently put forward in \cite{ChaKhuri}, \cite{ChaKhuri1}.
The second step entails showing that the data arising from the extreme Kerr-Newman spacetime
$(U_\KN,v_\KN,\chi_\KN,\psi_\KN)$ (see Appendix \ref{sec6}), minimize the functional among
all data with common angular momentum and charge
\begin{equation}\label{22}
\mathcal{M}(U,v,\chi,\psi)\geq\mathcal{M}(U_\KN,v_\KN,\chi_\KN,\psi_\KN).
\end{equation}
Since the right-hand side of \eqref{22} agrees with the square root of the right-hand side of \eqref{0},
together with \eqref{20} the desired conclusion is reached.

It should be pointed out that the hypotheses used in \cite{ChruscielCosta}, \cite{Costa}, and \cite{SchoenZhou} are
unnecessarily strong. In these works it is assumed that the matter density is nonnegative $\mu_\EM\geq 0$, the current
density vanishes $|J_\EM|_{g}=0$, and that the 4-currents for the electric and magnetic fields (sources for the Maxwell
equations) vanish. The later assumption concerning the 4-currents is imposed in order to secure the existence of
potentials for the Maxwell field, and $|J_\EM|_{g}=0$ is used to obtain a charged twist potential. Note that the use of
4-currents in general requires reference to an axisymmetric spacetime, as opposed to the initial data alone. This is
justified, since in electrovacuum the existence of an axisymmetric evolution of the initial data follows from its
smoothness \cite{Choquet-Bruhat}, \cite{Chrusciel0}. For our purposes, however, reference to the spacetime can be
avoided since we will show that the potentials arise in a direct manner from the initial data, under the weakened
hypotheses $\operatorname{div}_{g}E=\operatorname{div}_{g}B=J_\EM(\eta)=0$.

\begin{theorem}\label{thm1}
Let $(M,g,k,E,B)$ be a smooth, simply connected, axially symmetric, maximal initial data set satisfying $\mu_\EM\geq 0$
and $J_\EM(\eta)=0$, and with two ends, one designated asymptotically flat and the other either asymptotically flat or
asymptotically cylindrical.
Then
\begin{equation}\label{23}
m^2\geq\frac{q^{2}+\sqrt{q^{4}+4\mathcal{J}^{2}}}{2},
\end{equation}
and equality holds if and only if $(M,g,k,E,B)$ is isometric to the canonical slice of an extreme Kerr-Newman spacetime.
\end{theorem}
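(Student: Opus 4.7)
The plan is to follow the two-step strategy outlined in the introduction, with the novelty residing in: (i) constructing the electromagnetic and twist potentials directly from the initial data under the weakened hypothesis $J_{\EM}(\eta)=0$, (ii) accommodating a possibly asymptotically cylindrical second end, and (iii) establishing rigidity.

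First I would construct the potentials $\chi,\psi,v$. Axisymmetry together with $\operatorname{div}_{g}E=\operatorname{div}_{g}B=0$ makes the 1-forms obtained by contracting $E$ and $B$ with the Hodge dual of $\eta^{\flat}$ closed; since $M$ is simply connected, integration yields global potentials $\chi$ and $\psi$. Contracting the momentum constraint with $\eta$, substituting $\tr_{g}k=0$ and $J_{\EM}(\eta)=0$, and rewriting the Maxwell contribution $2(E\times B)(\eta)$ in terms of $\chi$ and $\psi$, I would verify that the resulting charged twist 1-form is closed and hence exact, producing $v$. The boundary formulas \eqref{13} and \eqref{13.1} then follow by evaluating line integrals along the two components $I_{0},I_{1}$ of the axis.

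Using Brill's identity in the coordinates \eqref{16}, the ADM energy \eqref{6} may be rewritten as a positive quadratic expression in $\partial U$ and $\partial A$ plus an integral of $e^{-2U+2\alpha}R$, together with boundary contributions at the inner end. Substituting the Hamiltonian constraint, using the bound
\begin{equation*}
|k|_{g}^{2}\geq \frac{e^{4U}}{\rho^{4}}\bigl|\nabla v+\chi\nabla\psi-\psi\nabla\chi\bigr|^{2}
\end{equation*}
arising from the momentum constraint and the definition of $v$, and the Maxwell identity $|E|_{g}^{2}+|B|_{g}^{2}\geq \frac{e^{2U}}{2\rho^{2}}(|\nabla\chi|^{2}+|\nabla\psi|^{2})$, then discarding $\mu_{\EM}\geq 0$ and nonpositive boundary terms, yields the lower bound \eqref{20}. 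The asymptotics \eqref{17}, \eqref{18}, \eqref{19} are needed to control the mass integrand at infinity, near an asymptotically flat puncture, and along a cylindrical end, respectively; the cylindrical case requires delicate bookkeeping since $\alpha$ need only be bounded there.

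Interpreting $\mathcal{M}$ as the renormalized Dirichlet energy for maps $\Phi=(U,v,\chi,\psi):\R^{3}\setminus\{z\text{-axis}\}\to\H^{2}_{\C}$ into the complex hyperbolic plane, the extreme Kerr-Newman map $\Phi_{\KN}$ is harmonic with matching pole data on the axis and at infinity. Convexity of the energy along geodesic homotopies, exploiting the nonpositive curvature of $\H^{2}_{\C}$ (as in Schoen-Zhou), gives $\mathcal{M}(\Phi)\geq \mathcal{M}(\Phi_{\KN})$, and a direct computation (Appendix \ref{sec6}) identifies $\mathcal{M}(\Phi_{\KN})^{2}$ with the right-hand side of \eqref{23}. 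For rigidity, equality throughout forces $\mu_{\EM}=0$, the vanishing of all discarded quadratic quantities (so that $\alpha\equiv 0$ and $A\equiv 0$), and $\Phi\equiv\Phi_{\KN}$, from which the canonical-slice isometry with extreme Kerr-Newman can be reconstructed. The main obstacle will be the first step under only $J_{\EM}(\eta)=0$: one must verify that the twist potential $v$ can still be extracted and that the $|k|_{g}^{2}$ decomposition above remains valid without the stronger $|J_{\EM}|_{g}=0$. A secondary difficulty is the cylindrical end, where the boundary terms in Brill's identity must be controlled via the weaker asymptotics \eqref{19}.
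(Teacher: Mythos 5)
Your outline of the inequality part follows the paper's approach closely: construct the potentials $\chi,\psi,v$ directly from the initial data (the key observation is that $\operatorname{div}_{g}E=\operatorname{div}_{g}B=J_{\EM}(\eta)=0$ together with axisymmetry make the relevant 1-forms closed, so simple connectivity of $M$ yields global potentials), rewrite the mass via Brill's formula, substitute the Hamiltonian constraint, discard the nonnegative matter, non-axial Maxwell, $A$-curl, and extra $k$-terms, and invoke the Schoen--Zhou convexity/gap result for the harmonic map lower bound. That matches the paper, modulo harmless inaccuracies in the coefficients (the paper's pointwise identities carry explicit $e^{-2\alpha}$ factors that cancel only after multiplication by $e^{-2U+2\alpha}$ in the mass integrand).

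There is, however, a genuine gap in your treatment of rigidity. You assert that "the vanishing of all discarded quadratic quantities" gives $\alpha\equiv 0$ and $A\equiv 0$. Neither follows. The quantities set to zero by equality are $\mu_{\EM}$, $E(e_{3})^{2}+B(e_{3})^{2}$, $(A_{\rho,z}-A_{z,\rho})^{2}$, and the extra components of $k$; $\alpha$ appears in the mass integrand only as a conformal multiplier, never quadratically, so equality alone says nothing about $\alpha$. The paper instead argues that once $\Phi=\Phi_{\KN}$ is known (via the Schoen--Zhou gap bound) and all discarded terms vanish, the scalar curvature satisfies $R=e^{-2\alpha}R_{\KN}$; comparing the two Brill expressions for $R$ forces $\Delta_{\rho,z}\alpha=0$. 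One then needs a boundary condition for this 2D harmonic function: the absence of a conical singularity on the axis gives $\alpha(0,z)=0$, and together with the decay $\alpha=o(r^{-1/2})$ at infinity this yields $\alpha\equiv 0$. This chain of reasoning is the actual content of the rigidity part and is exactly what the paper points out has "not been properly established in the literature." Similarly, the vanishing of $(A_{\rho,z}-A_{z,\rho})^{2}$ gives only that $A_{\rho}d\rho+A_{z}dz$ is closed; the paper does not claim $A\equiv 0$ but removes it via a gauge transformation $\widetilde\phi=\phi+f$ with $df=A_{\rho}d\rho+A_{z}dz$. Your proposal would need to supply both of these arguments to close the rigidity direction.

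A smaller point: you flag the $|k|_{g}^{2}$ lower bound as a possible obstacle under the weakened hypothesis $J_{\EM}(\eta)=0$. In fact the pointwise inequality $|k|_{g}^{2}\geq 2\left(k(e_{1},e_{3})^{2}+k(e_{2},e_{3})^{2}\right)$ holds unconditionally; the role of $J_{\EM}(\eta)=0$ is solely to make the charged twist 1-form $k(\eta)\times\eta-\chi d\psi+\psi d\chi$ closed so that $v$ exists, after which the mixed $k$-components can be expressed in terms of $\nabla v+\chi\nabla\psi-\psi\nabla\chi$. So the difficulty you anticipate there does not materialize, while the one you gloss over (the $\alpha$ argument) is where the real work lies.
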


We point out that the rigidity statement of this result does not seem to have been properly established in the
literature, even in the uncharged case. What has been previously established \cite{SchoenZhou}, is that in the case of equality the map
into complex hyperbolic space arising from the given data agrees with the extreme Kerr-Newman harmonic map.

In the case of multiple black holes, the first step leading to \eqref{20} may be established using the same
arguments as those in the single black hole case. Thus, it is in the second step \eqref{22} where the significant
difference occurs. Here the minimizing harmonic map no longer arises from the extreme Kerr-Newman solution, or any
other well known black hole solution in general. An exception happens in the special situation when all charges have the
same sign and the angular momenta vanish, in which case the minimizing harmonic map arises from the Majumdar-Papapetrou
solution. In the generic case, a solution $(U_{0},v_{0},\chi_{0},\psi_{0})$ to the harmonic map equations is constructed
which has similar asymptotic behavior to that of the extreme Kerr-Newman map near each puncture $p_{n}$ and at the
designated asymptotically flat end. This asymptotic behavior allows an application of the convexity arguments
in \cite{SchoenZhou}, showing that the constructed solution minimizes the functional $\mathcal{M}$ and
yields a gap bound (see Theorem \ref{minimum}). Let
\begin{equation}\label{24}
\mathcal{F}(\mathcal{J}_{1},\ldots,\mathcal{J}_{N},q^{e}_{1},\ldots,q^{e}_{N},
q^{b}_{1},\ldots,q^{b}_{N})=\mathcal{M}(U_{0},v_{0},\chi_{0},\psi_{0})
\end{equation}
denote the minimum value of the functional.
Our main result is as follows.

\begin{theorem}\label{thm2}
Let $(M,g,k,E,B)$ be a smooth, simply connected, axially symmetric, maximal initial data set satisfying $\mu_\EM\geq 0$
and $J_\EM(\eta)=0$, and with $N+1$ ends, one designated asymptotically flat and the others either asymptotically flat
or asymptotically cylindrical.
Then
\begin{equation}\label{25}
m\geq\mathcal{F}(\mathcal{J}_{1},\ldots,\mathcal{J}_{N},q^{e}_{1},\ldots,q^{e}_{N},
q^{b}_{1},\ldots,q^{b}_{N}).
\end{equation}
\end{theorem}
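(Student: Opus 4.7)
The plan is to follow the two-step scheme outlined in the introduction, in the spirit of Chrusciel--Costa and Schoen--Zhou but adapted to the multi-puncture setting. The first step produces the lower bound $m\geq \mathcal{M}(U,v,\chi,\psi)$ of \eqref{20} for the reduced data coming from $(M,g,k,E,B)$; the second step produces a specific harmonic map $(U_{0},v_{0},\chi_{0},\psi_{0})$ whose energy realizes $\mathcal{F}$ and which minimizes $\mathcal{M}$ among all competitors with the prescribed angular momenta and charges.

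For the first step, the argument is essentially that of the single black hole case, with the weakened hypotheses $\operatorname{div}_{g}E=\operatorname{div}_{g}B=J_{\EM}(\eta)=0$. Under these assumptions, the section preceding the theorem constructs the charged twist potential $v$ (from the momentum constraint contracted with $\eta$) and the electromagnetic potentials $\chi,\psi$ (from $\operatorname{div}_{g}E=\operatorname{div}_{g}B=0$) directly from the initial data without reference to a spacetime. One then inserts the Brill representation \eqref{16} into \eqref{6}, uses the Hamiltonian constraint together with maximality $\tr_{g}k=0$ to rewrite $R$ in terms of $\mu_{\EM}$, $|k|_{g}^{2}$, $|E|_{g}^{2}$ and $|B|_{g}^{2}$, and then reorganizes the matter and field terms so that, after using $\mu_{\EM}\geq 0$ and a pointwise algebraic inequality relating $|k|_{g}^{2}$ to the twist-potential density, the Euclidean integral of $|\nabla U|^{2}+\rho^{-4}e^{4U}|\nabla v+\chi\nabla\psi-\psi\nabla\chi|^{2}+\rho^{-2}e^{2U}(|\nabla\chi|^{2}+|\nabla\psi|^{2})$ appears as a lower bound. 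The asymptotics \eqref{17}--\eqref{19} together with the $L^{2}$ control on $\partial g$ ensure that all boundary terms at infinity, along the axis, and near each puncture vanish.

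The second step is where the multi-puncture case diverges from the single black hole setting: the minimizing map $(U_{0},v_{0},\chi_{0},\psi_{0})$ is not known in closed form. The plan is to construct it as a solution of the Euler--Lagrange system for $\mathcal{M}$, which is the equation of a singular harmonic map from $\R^{3}\setminus\{z\text{-axis}\}$ into the complex hyperbolic plane $\H^{2}_{\C}$, with prescribed singularities dictated by the extreme Kerr--Newman model: near each $p_{n}$ the map should match the extreme Kerr--Newman asymptotics with parameters $(\mathcal{J}_{n},q^{e}_{n},q^{b}_{n})$, on each axis segment $I_{n}$ the potentials $v,\chi,\psi$ must be locally constant with jumps prescribed by \eqref{13} and \eqref{13.1}, and at the designated end the map should decay as in an extreme Kerr--Newman end with total $(\mathcal{J},q^{e},q^{b})$. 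Once such a map is produced, one applies the Schoen--Zhou convexity argument: the $\H^{2}_{\C}$-geodesic connecting $(U,v,\chi,\psi)$ and $(U_{0},v_{0},\chi_{0},\psi_{0})$ at each spatial point defines a variation $\sigma\mapsto \mathcal{M}_{\sigma}$ whose second derivative is nonnegative by nonpositivity of the target curvature, whose first derivative at $\sigma=0$ vanishes because $(U_{0},v_{0},\chi_{0},\psi_{0})$ solves the harmonic map equations and all boundary terms can be shown to vanish using the matched asymptotics, and whose endpoint inequality therefore yields $\mathcal{M}(U,v,\chi,\psi)\geq \mathcal{M}(U_{0},v_{0},\chi_{0},\psi_{0})=\mathcal{F}$.

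The main obstacle is the construction of the singular harmonic map $(U_{0},v_{0},\chi_{0},\psi_{0})$ with the correct behavior at every puncture and at the designated end, and the verification that the $\H^{2}_{\C}$-distance between the two maps and its gradient decay fast enough to kill all boundary contributions in the convexity integration by parts. This requires solving a singular Dirichlet-type problem on $\R^{3}\setminus\{p_{1},\ldots,p_{N}\}$ with the prescribed logarithmic blow-up of $U_{0}$ at each puncture (with coefficient $2$ at asymptotically flat punctures and $1$ at cylindrical ones, matching \eqref{18} and \eqref{19}) and the prescribed axis jumps of the potentials, and then establishing precise asymptotic expansions at each $p_{n}$ and at infinity. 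Once this existence and asymptotic analysis is carried out, the combination with Step 1 and the definition \eqref{24} of $\mathcal{F}$ yields \eqref{25}.
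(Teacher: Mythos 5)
Your two-step plan is the paper's strategy exactly: Section 2 proves $m\geq\mathcal{M}(U,v,\chi,\psi)$ from the Brill mass formula and the rewritten Hamiltonian constraint under the weakened hypotheses $\operatorname{div}_{g}E=\operatorname{div}_{g}B=J_{\EM}(\eta)=0$, while Sections 3--4 construct the singular harmonic map $\Psi_0$ (by gluing extreme Kerr--Newman model maps along the axis tubes and applying the existence theorem of \cite{weinstein96}) and then prove the Schoen--Zhou convexity comparison, Theorem \ref{minimum}. One conceptual point in your sketch needs correcting, however: the minimizer $\Psi_0$ does \emph{not} inherit the puncture asymptotics of the given data $\Psi$. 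In particular $U_0$ should not blow up like $2\log r_n$ at asymptotically flat punctures; the paper's $\Psi_0$ is modeled on \emph{extreme} Kerr--Newman at every puncture and always satisfies $U_0=\log r_n + O(1)$ (equation \eqref{3.14}), which is the cylindrical-type behavior. This is forced, not a matter of choice: the convexity comparison reduces to compactly supported perturbations via cut-and-paste, and at each puncture one needs $e^{U}\leq c\,e^{U_0}$ (equation \eqref{82.1}) to dominate the $\Psi$-energy density by the $\Psi_0$-energy density; this holds because $U\sim 2\log r_n$ or $U\sim\log r_n$ from \eqref{18}--\eqref{19} while $U_0\sim\log r_n$, but if you imposed $U_0\sim 2\log r_n$ at a puncture where the data is cylindrical ($U\sim\log r_n$), the bound $e^{U}\leq c\,e^{U_0}$ would fail as $r_n\to 0$ and the cut-and-paste estimates (Lemma \ref{converge2}) would break. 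Beyond this, you correctly flag the remaining technical work as the bottleneck: existence of the singular map via the model-map/Weinstein route, the a priori bounds \eqref{3.11}, the Bochner-type gradient estimate of Proposition \ref{CLWestimates}, the weighted Poincar\'{e} inequalities of Lemma \ref{poincare}, and the verification that the first variation at $t=0$ vanishes once boundary terms near the axis and at the punctures are shown to decay.
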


The functions $\mathcal{F}$ appearing in \eqref{0.1} and \eqref{25} agree when the charges vanish.
Hence, Theorem \ref{thm2} generalizes the result of \cite{ChruscielLiWeinstein} by including charge, and slightly
improves this previous result in that the asymptotic assumptions on $k$ have been weakened.
Whether or not the right-hand side of \eqref{25} agrees with the square root of the right-hand side of \eqref{0}
is an important open question. Note that the case of equality is not addressed in Theorem \ref{thm2}, and is closely
related to the existence question for multiple rotating black hole solutions to the axisymmetric stationary
electrovacuum Einstein equations. In fact we will present arguments, based on the mass gap bound of Theorem \ref{minimum}, which suggest that generically equality cannot be
achieved in \eqref{25} when $N>1$.

\begin{conjecture}\label{con1}
Under the hypotheses of Theorem \ref{thm2}, equality in \eqref{25} cannot be achieved if $N>1$ unless
all charges are of the same sign and the angular momenta vanish. In this special case, the initial data set is
isometric to the canonical slice of a Majumdar-Papapetrou spacetime.
\end{conjecture}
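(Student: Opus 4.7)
The plan is to exploit the two-step structure of the lower bound $m \geq \mathcal{M}(U,v,\chi,\psi) \geq \mathcal{F}$, each of whose equality cases is highly restrictive, and to argue that for $N>1$ the combined constraints are incompatible with smoothness of the initial data except in the Majumdar-Papapetrou regime.

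First I would treat the inner inequality. By the strict convexity of the harmonic map energy $\mathcal{M}$ along geodesics in complex hyperbolic space, which is precisely the mechanism behind Theorem \ref{minimum}, the equality $\mathcal{M}(U,v,\chi,\psi)=\mathcal{F}$ forces $(U,v,\chi,\psi)$ to coincide with the constructed minimizer $(U_0,v_0,\chi_0,\psi_0)$, so all four potentials of the data are determined rigidly by the collection $(\mathcal{J}_n,q^e_n,q^b_n)$.

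Next I would analyze equality in the outer inequality $m\geq \mathcal{M}$. Retracing the derivation of that bound via the Hamiltonian constraint, $\mu_\EM\geq 0$, $\operatorname{tr}_g k=0$, and the Brill-type mass formula, one finds (as in the equality analyses of \cite{ChruscielCosta,SchoenZhou}) that saturation forces $\mu_\EM=0$, the extrinsic curvature and the electromagnetic field to be reconstructable from the potentials $(U_0,v_0,\chi_0,\psi_0)$, and the coefficient $\alpha$ in the Brill form \eqref{16} to be determined off the axis by a Poisson-type equation whose source is built from the harmonic map. The designated end asymptotics \eqref{17} then pin down the additive constant, and the resulting $\alpha$ can be computed on each axis interval $I_n$ as an explicit line integral in the minimizing harmonic map data.

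The main obstacle, and the step on which the conjecture truly hinges, is the cone-deficit analysis on the axis. Smoothness of \eqref{16} along any regular component of the axis forces $\alpha$ to vanish there, and the value of $\alpha|_{I_n}$ coming from the procedure above has the meaning of (the logarithm of) the opening-angle defect of the would-be stationary axisymmetric electrovacuum spacetime associated with the minimizer. Ruling out equality thus reduces to proving that for some $1\leq n\leq N-1$ the integral giving $\alpha|_{I_n}$ is nonzero, i.e.\ to establishing a conical singularity for every minimizing multi-puncture harmonic map of the present type; this is essentially the outstanding nonexistence problem for balanced stationary multi-black hole electrovacuum spacetimes, and is what I expect to be the hard part. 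I would attempt it via a Weinstein-type cone-angle estimate driven by the mass gap of Theorem \ref{minimum}, whereas in the special case where all $q^e_n,q^b_n$ share a common sign and all $\mathcal{J}_n=0$ direct computation with the explicit Majumdar-Papapetrou harmonic map shows the deficits vanish; the rigid reconstruction of $(g,k,E,B)$ from $(U_0,0,\chi_0,\psi_0)$ then identifies the data with the canonical slice of a Majumdar-Papapetrou spacetime, completing the rigidity half of the statement.
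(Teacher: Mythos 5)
This is stated in the paper as a \emph{conjecture}, not a theorem, and the paper offers no proof---only a brief heuristic outline at the end of Section~4. Your proposal is likewise a sketch rather than a proof, and you are forthright about that: you explicitly flag the nonvanishing of the axis cone deficit for $N>1$ as ``essentially the outstanding nonexistence problem for balanced stationary multi-black hole electrovacuum spacetimes,'' which is exactly the open problem. So the honest verdict is that neither you nor the paper proves the statement; what can be compared is the \emph{shape} of the two heuristic arguments.

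Your first step matches the paper's: saturation of \eqref{25} forces $\Psi\equiv\Psi_0$ via the gap bound in Theorem~\ref{minimum}. After that the two routes diverge. The paper argues that the equality analysis of Section~\ref{sec2} should promote $(M,g,k,E,B)$ to the canonical slice of a stationary, axisymmetric, extremal, electrovacuum multi-black-hole spacetime with conformally flat orbit-space metric, identifies such a spacetime as belonging to the Israel--Wilson--Perj\'es class, and then invokes the result of Chru\'sciel--Reall--Tod \cite{CRT} (using maximality of the slice) to conclude Majumdar--Papapetrou. Your route instead pushes the Brill-form equality analysis down to a regularity statement on the axis: smoothness of $g$ forces $\alpha(0,z)=0$, and one would need to show that for a non-Majumdar--Papapetrou minimizer the $\alpha$ reconstructed from $\Psi_0$ (by the Poisson-type equation coming from the Brill scalar-curvature formula and the Hamiltonian constraint) necessarily carries a nonzero conical defect on some bounded axis segment $I_n$. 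That is the classical ``force-balance'' obstruction, and it is morally what underlies the paper's step as well---but your version would amount to a direct nonexistence proof, whereas the paper reduces to an existing classification theorem once the stationary IWP structure is granted.

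Two concrete cautions about the portion you do carry out. First, in the single-black-hole case the conclusion $\Delta_{\rho,z}\alpha=0$ in Section~\ref{sec2} uses not just the Brill formula \eqref{39} and the constraint \eqref{41}, but a specific identity \eqref{47} verified by direct computation on the explicit extreme Kerr--Newman slice (which has $\alpha_\KN\equiv 0$). For the multi-puncture minimizer $\Psi_0$ no ambient spacetime is known a priori, so the analogue of \eqref{47} is not available, and what you actually get is a genuine Poisson equation with source built from $\Psi_0$; its consistency with $\alpha|_\Gamma=0$ on all bounded segments is exactly the regularity you must rule out, not a step that can be ``retraced.'' Second, in the Majumdar--Papapetrou case you assert that the explicit $\Psi_\MP$ of Appendix~\ref{sec6} is the minimizer $\Psi_0$; this requires checking that $\Psi_\MP$ is asymptotic to the constructed model map $\tilde\Phi_0$ (so that the uniqueness statement applies) and that it realizes the prescribed boundary constants on each $I_n$, neither of which is immediate. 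The paper's CRT route sidesteps this by concluding Majumdar--Papapetrou from the spacetime classification rather than from matching harmonic-map data, which is cleaner once the (conjectural) stationary IWP structure is in hand.
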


This paper is organized as follows. In the next section we describe a deformation of the Maxwell field suited
for the existence of potentials, and prove Theorem \ref{thm1}. Section \ref{sec3} will be devoted to the construction
of a minimizer for the harmonic map functional in the case of multiple black holes, and appropriate estimates will be
established. In Section \ref{sec4}, Theorem \ref{thm2} will be proven and arguments supporting Conjecture \ref{con1}
will be given. The heuristic arguments leading to \eqref{0} will be discussed and extended in Appendix \ref{sec5}, to
the case when several black holes are moving apart at high velocities. Lastly Appendix \ref{sec6} is included to record
several formulae associated with the Kerr-Newman and Majumdar-Papapetrou spacetimes.

\section{Proof of Theorem \ref{thm1}}\label{sec2}

We first describe the construction of potentials, as alluded to in the introduction. Let
\begin{equation}\label{26}
e_{1}=e^{U-\alpha}(\partial_{\rho}-A_{\rho}\partial_{\phi}),
\text{ }\text{ }\text{ }\text{ } e_{2}=e^{U-\alpha}(\partial_{z}-A_{z}\partial_{\phi}),
\text{ }\text{ }\text{ }\text{ } e_{3}=\frac{1}{\sqrt{g_{\phi\phi}}}\partial_{\phi},
\end{equation}
be an orthonormal frame for $(M,g)$, with dual coframe
\begin{equation}\label{27}
\theta^{1}=e^{-U+\alpha}d\rho,
\text{ }\text{ }\text{ }\text{ }
\theta^{2}=e^{-U+\alpha}dz,
\text{ }\text{ }\text{ }\text{ }
\theta^{3}=\rho e^{-U}(d\phi+A_{\rho}d\rho+A_{z}dz).
\end{equation}
As in \cite{ChaKhuri1}, consider the projections $(\overline{E},\overline{B})$ of the electric and magnetic fields to the orbit space of
$\eta$, and let $F$ be the associated field strength defined on the auxiliary spacetime $(\mathbb{R}\times
M,-dt^{2}+g)$. That is
\begin{equation}\label{28}
\overline{E}(e_{i})=E(e_{i}), \text{ }\text{ }\text{ }\text{ } \overline{B}(e_{i})=B(e_{i}),\text{ }\text{ }\text{
}\text{ }i=1,2,\text{ }\text{ }\text{ }\text{ }\overline{E}(e_{3})=\overline{B}(e_{3})=0,
\end{equation}
and
\begin{equation} \label{29}
F(e_{i},\partial_{t})= \overline{E}_{i}, \text{ }\text{ }\text{ }\text{ }\text{ }
F(e_{i},e_{j})= \epsilon_{ijl}\overline{B}^{l}, \text{ }\text{ }\text{ }\text{ }\text{ } i,j,l = 1,2,3.
\end{equation}
Then
\begin{equation} \label{30}
F(\eta, \cdot)
= |\eta | \left(B(e_{2})\theta^{1}
-  B(e_{1})\theta^{2}\right),
\text{ }\text{ }\text{ }\text{ }\text{ }
\ast F(\eta, \cdot)
=|\eta | \left(E(e_{2})\theta^{1}
- E(e_{1})\theta^{2}\right),
\end{equation}
where $\ast$ denotes the Hodge star operation. It follows that
\begin{equation}\label{31}
d (F(\eta, \cdot) )=|\eta|(\operatorname{div}_{g}B)
\theta^{2}\wedge\theta^{1}=0,\text{ }\text{ }\text{ }\text{ }\text{ }
d(\ast F(\eta, \cdot))=|\eta|(\operatorname{div}_{g}E)
\theta^{2}\wedge\theta^{1}=0,
\end{equation}
and hence there exist potentials for the electromagnetic field such that
\begin{equation}\label{32}
d\psi=F(\eta, \cdot),\text{ }\text{ }\text{ }\text{ }\text{ }d\chi
=\ast F (\eta, \cdot).
\end{equation}
Moreover, a calculation (Lemma 4.1 of \cite{ChaKhuri1}) shows that
\begin{equation}\label{33}
d\left(k(\eta)\times\eta-\chi d\psi
+\psi d\chi\right)=|\eta|\left(J_\EM(\eta)-\chi \operatorname{div}_{g}B
+\psi \operatorname{div}_{g}E\right)\theta^{2}\wedge\theta^{1}=0,
\end{equation}
yielding a charged twist potential satisfying
\begin{equation}\label{34}
dv=k(\eta)\times\eta-\chi d\psi
+\psi d\chi.
\end{equation}
As mentioned in the introduction, the advantage of these computations is that they are made directly from
the initial data, and do not require reference to the evolved spacetime. They also show clearly that
the conditions $\operatorname{div}_{g}E=\operatorname{div}_{g}B=J_\EM(\eta)=0$ are necessary and sufficient for the
existence of the
desired potentials when $M$ is simply connected.

It should be noted that \eqref{30} and \eqref{32} imply that $\chi$ and $\psi$ are constant on each interval $I_{n}$ of
the $z$-axis, and
\begin{align}\label{34.1}
\begin{split}
q^{e}=\frac{1}{4\pi} \int_{S_\infty} E_i \nu^i
&=\sum_{n=1}^{N}\lim_{r_{n}\rightarrow 0}\frac{1}{4\pi} \int_{\partial B_{r_{n}}(p_{n})} E_i \nu^i dA\\
&=\sum_{n=1}^{N}\lim_{r_{n}\rightarrow 0}\frac{1}{4\pi} \int_{\partial B_{1}(p_{n})} E_i \nu^i
e^{2U-\alpha}r_{n}^{2}\sin\theta d\theta d\phi\\
&=-\sum_{n=1}^{N}\lim_{r_{n}\rightarrow 0}\frac{1}{4\pi} \int_{\partial B_{1}(p_{n})} |\eta|^{-1}
(\partial_{\theta}\chi) e^{-U}
r_{n}\sin\theta d\theta d\phi
=\sum_{n=1}^{N}\frac{1}{2}(\chi|_{I_{n}}-\chi|_{I_{n-1}})
\end{split}
\end{align}
where $B_{r}(p_{n})$ denotes the ball of radius $r$ centered at $p_{n}$. Similar computations yield the expressions for
$\mathcal{J}$ and $q^{b}$ in \eqref{13} and \eqref{13.1}.

From \eqref{30} and \eqref{32} we also find
\begin{equation}\label{35}
|\overline{E}|_{g}^{2}+|\overline{B}|_{g}^{2}
=\frac{e^{4U-2\alpha}}{\rho^{2}}\left(|\nabla\chi|^{2}
+|\nabla\psi|^{2}\right).
\end{equation}
Furthermore from \eqref{34} we have
\begin{equation}\label{36}
k(e_{1},e_{3})
=-|\eta|^{-2}\left(e_{2}(v)
+\chi e_{2}(\psi)
-\psi e_{2}(\chi)\right)
=-|\eta|^{-2}e^{U-\alpha}
\left(\partial_{z}v
+\chi\partial_{z}\psi
-\psi\partial_{z}\chi\right),
\end{equation}
and
\begin{equation}\label{37}
k(e_{2},e_{3})
=|\eta|^{-2}\left(e_{1}(v)
+\chi e_{1}(\psi)
-\psi e_{1}(\chi)\right)
=|\eta|^{-2}e^{U-\alpha}
\left(\partial_{\rho}v
+\chi\partial_{\rho}\psi
-\psi\partial_{\rho}\chi\right).
\end{equation}
It follows that
\begin{equation}\label{38}
|k|_{g}^{2}\geq 2\left(k(e_{1},e_{3})^{2}
+k(e_{2},e_{3})^{2}\right)
=2\frac{e^{6U-2\alpha}}{\rho^{4}}|\nabla v
+\chi\nabla\psi-\psi\nabla\chi|^{2}.
\end{equation}

Recall that in Brill coordinates, the scalar curvature may be expressed simply by (\cite{Brill}, \cite{Dain0})
\begin{equation}\label{39}
2e^{-2U+2\alpha}R=8\Delta U-4\Delta_{\rho,z}\alpha-4|\nabla U|^{2}-\rho^{2}e^{-2\alpha}
\left(A_{\rho,z}-A_{z,\rho}\right)^{2},
\end{equation}
where $\Delta$ is the Euclidean Laplacian on $\mathbb{R}^{3}$ and $\Delta_{\rho,z}=\partial_{\rho}^{2}
+\partial_{z}^{2}$. This leads to the following mass formula via an integration by parts
\begin{equation}\label{40}
m
=\frac{1}{32\pi}\int_{\mathbb{R}^{3}}\left(2e^{-2U+2\alpha}\!\text{ }R
+\rho^{2}e^{-2\alpha}
(A_{\rho,z}-A_{z,\rho})^{2}
+4|\nabla U|^{2}\right)dx.
\end{equation}
Observe that with the help of \eqref{35}, \eqref{38}, and the maximal data assumption, the scalar curvature may be
rewritten as
\begin{align}\label{41}
\begin{split}
R=&16\pi\mu_\EM+|k|_{g}^{2}+2\left(|E|_{g}^{2}+|B|_{g}^{2}\right)\\
=&16\pi\mu_\EM+\left(|k|_{g}^{2}-2k(e_{1},e_{3})^{2}-2k(e_{2},e_{3})^{2}\right)
+2\left(E(e_{3})^{2}+B(e_{3})^{2}\right)\\
&+2\frac{e^{6U-2\alpha}}{\rho^{4}}|\nabla v
+\chi\nabla\psi-\psi\nabla\chi|^{2}
+2\frac{e^{4U-2\alpha}}{\rho^{2}}\left(|\nabla\chi|^{2}
+|\nabla\psi|^{2}\right).
\end{split}
\end{align}
Therefore
\begin{align}\label{42}
\begin{split}
m-\mathcal{M}(U,v,\chi,\psi)
=&\frac{1}{32\pi}\int_{\mathbb{R}^{3}}\left(4e^{-2U+2\alpha}\left(E(e_{3})^2+B(e_{3})^2\right)
+\rho^{2}e^{-2\alpha}
(A_{\rho,z}-A_{z,\rho})^{2}\right)dx\\
&+\frac{1}{16\pi}\int_{\mathbb{R}^{3}}e^{-2U+2\alpha}\left(16\pi\mu_\EM
+\left(|k|_{g}^{2}-2k(e_{1},e_{3})^{2}-2k(e_{2},e_{3})^{2}\right)\right)dx.
\end{split}
\end{align}
It should be noted that this expression holds regardless of the number of ends. In the case of two ends
(\cite{ChruscielCosta}, \cite{Costa}, \cite{SchoenZhou}), \eqref{22} and \eqref{42} yield the inequality \eqref{23} in
Theorem \ref{thm1}.

Consider now the case of equality in \eqref{23}. From \eqref{22} and \eqref{42}, this implies that
\begin{equation}\label{43}
\mu_\EM=0,\text{ }\text{ }\text{ }\text{ }\text{ }\text{ }E(e_{3})=B(e_{3})=0,\text{ }\text{ }\text{ }\text{ }\text{
}\text{ }
A_{\rho,z}=A_{z,\rho},
\end{equation}
\begin{equation}\label{44}
\mathcal{M}(U,v,\chi,\psi)=\mathcal{M}(U_\KN,v_\KN,\chi_\KN,\psi_\KN),\text{ }\text{ }\text{ }\text{ }\text{
}k(e_{i},e_{j})=k(e_{3},e_{3})=0,\text{ }\text{ }\text{ }\text{ }\text{ }i,j\neq 3.
\end{equation}
According to the gap bound in \cite{SchoenZhou}, a map which minimizes the functional $\mathcal{M}$ must coincide with
the harmonic map associated with the extreme Kerr-Newman spacetime, that is
\begin{equation}\label{45}
(U,v,\chi,\psi)=(U_\KN,v_\KN,\chi_\KN,\psi_\KN).
\end{equation}
It follows immediately from \eqref{30}, \eqref{32}, and \eqref{43} that all components of the Maxwell field are
known and agree with those induced on the $t=0$ slice of the extreme Kerr-Newman spacetime, $(E,B)=(E_\KN,B_\KN)$.

Now observe that from \eqref{41}, \eqref{43}, \eqref{44}, and \eqref{45} we have $R=e^{-2\alpha}R_\KN$, where $R_\KN$ is
the scalar curvature of the $t=0$ slice of the extreme Kerr-Newman spacetime. Using the formula \eqref{39} produces
\begin{equation}\label{46}
e^{-2U_\KN}R_\KN=e^{-2U+2\alpha}R=4\Delta U_\KN-2|\nabla U_\KN|^{2}-2\Delta_{\rho,z}\alpha.
\end{equation}
However a direct computation from extreme Kerr-Newman data yields
\begin{equation}\label{47}
e^{-2U_\KN}R_\KN=e^{-2U+2\alpha}R=4\Delta U_\KN-2|\nabla U_\KN|^{2},
\end{equation}
so that $\Delta_{\rho,z}\alpha=0$. We claim that this, along with the asymptotics \eqref{17}, imply that $\alpha\equiv
0$. Note that it is sufficient to show that $\alpha=0$ along the z-axis. To see this, let $\vartheta\in(-\infty,2\pi)$ be the
cone angle deficiency \cite{Weinstein94} arising from the metric $g$ at the axis of rotation, that is
\begin{equation}\label{48}
\frac{2\pi}{2\pi-\vartheta}=\lim_{\rho\rightarrow 0}\frac{2\pi\cdot\mathrm{Radius}}{\mathrm{Circumference}}
=\lim_{\rho\rightarrow 0}\frac{\int_{0}^{\rho}e^{-U+\alpha}d\rho}{\rho e^{-U}}=e^{\alpha(0,z)}.
\end{equation}
Since $(M,g)$ is smooth across the axis of rotation, the angle deficiency must vanish $\vartheta=0$, and thus $\alpha(0,z)=0$. Note that the integral in \eqref{48} is not exactly $2\pi\cdot\mathrm{Radius}$, but is rather the top order approximation to this quantity; this is all that is needed to compute the desired limit.

We are now in a position to show that $(M,g)$ is isometric to the canonical slice of the extreme Kerr-Newman solution.
By \eqref{43} the 1-form $A_{\rho}d\rho+A_{z}dz$ is closed, and hence there exists a potential such that
$\partial_{\rho}f=A_{\rho}$ and $\partial_{z}f=A_{z}$. Consider the change of coordinates
$\widetilde{\phi}=\phi+f(\rho,z)$, then the metric \eqref{16} takes the form
\begin{equation}\label{49}
g=e^{-2U_\KN}(d\rho^{2}+dz^{2})+\rho^{2}e^{-2U_\KN}d\widetilde{\phi}^{2},
\end{equation}
which yields the desired result $g\cong g_\KN$. Lastly, observe that \eqref{36}, \eqref{37}, \eqref{44}, and
$\alpha(0,z)=0$ show that the tensor $k$ coincides with the extrinsic curvature of the canonical extreme Kerr-Newman
slice. This completes the proof of Theorem \ref{thm1}.

\section{Existence of the Minimizer and Estimates}\label{sec3}

In this section we prove the existence of a minimizer for the reduced energy \eqref{21},
having the asymptotics of extreme Kerr-Newman near each of the punctures $p_n$, and with prescribed angular momenta and
charges. The main tool will be Theorem~2
in~\cite{weinstein96}.
We denote by $\Gamma$ the $z$-axis in $\R^3$ and by $\Gamma'$ the axis $\Gamma$ minus the $N$ punctures $p_n$.
The model map $\tilde{\Phi}_{0}\colon\R^3\setminus\Gamma\to\H^2_\C$ which we construct below, is not
\emph{singular Dirichlet data} as defined in Definition~2 in~\cite{weinstein96}, because it does not satisfy
condition~(i). Nevertheless, this condition  is not used in the proof of the theorem, and the only key ingredient is
that the reduced energy of $\tilde{\Phi}_{0}$ must be finite and the pointwise tension of $\tilde{\Phi}_{0}$ must be
bounded with appropriate
decay at infinity, which will hold
true for the map constructed below. Consequently, Theorem~2 in~\cite{weinstein96} can be used to conclude that
there
is a harmonic map $\tilde{\Psi}_{0}\colon(\R^3\setminus\Gamma,g_{Euc})\to\H^2_\C$ which is \emph{asymptotic} to
$\tilde{\Phi}_{0}$, i.e.\ such that
$\dist_{\mathbb{H}^2_\C}(\tilde{\Phi}_{0},\tilde{\Psi}_{0})$ is bounded on $\R^3\setminus\Gamma$, and which satisfies the desired
boundary conditions on the axis.

The \emph{complex hyperbolic space} $\H^2_\C$ is the homogeneous Riemannian manifold $(\R^4,ds^2)$ where the metric is
given by
\begin{equation} \label{complex-hyperbolic}
  ds^2 = du^2 + e^{4u} (dv + \chi d\psi - \psi d\chi)^2 + e^{2u} (d\chi^2 + d\psi^2).
\end{equation}
Thus the energy density of a map $\tilde{\Phi}=(u,v,\chi,\psi)$ into $\H^2_\C$ is
\begin{equation}
  \mathcal{E}(\tilde{\Phi}) = |\nabla u|^2 + e^{4u} |\nabla v + \chi \nabla\psi - \psi \nabla\chi|^2 + e^{2u} (|\nabla
\chi|^2
+ |\nabla \psi|^2).
\end{equation}
Note that the metric~\eqref{complex-hyperbolic} is invariant under the \emph{translations}
\begin{equation}
  v \mapsto v - c\psi+ b\chi + a, \quad \psi\mapsto\psi+b,\quad \chi\mapsto\chi+c,
\end{equation}
for any constants $a,b,c$. Furthermore, the action of these translations is \emph{transitive} on any slice
${\mathbb S}_u=\{u=\text{constant}\}$, that is, ${\mathbb S}_u$ consists of a single orbit of this group action. The
\emph{tension} of the map $\tilde{\Phi}$ is the vector field $(\tau^u, \tau^v, \tau^\chi,\tau^\psi)$ on the pull-back
bundle
given by
\begin{align}
\begin{split}
  \tau^u &= \Delta u - 2 e^{4u} |\nabla v + \chi \nabla \psi - \psi \nabla\chi|^2 - e^{2u} \left(
  |\nabla \chi|^2  +  |\nabla \psi|^2 \right), \\
 \tau^v &= \Delta v + 2 \nabla u \cdot\nabla v + 2 (\nabla v + \chi \nabla\psi - \psi \nabla\chi) \cdot\nabla u - 2
e^{2u}
 (\nabla v + \chi \nabla\psi - \psi \nabla\chi)\cdot (\chi \nabla\chi + \psi \nabla\psi),\\
 \tau^\chi &= \Delta \chi + 2 \nabla u \cdot\nabla \chi - 2 e^{2u} (\nabla v + \chi \nabla\psi - \psi \nabla\chi)
 \cdot\nabla\psi,\\
 \tau^\psi &= \Delta\psi + 2\nabla u \cdot\nabla \psi + 2 e^{2u} (\nabla v + \chi \nabla\psi - \psi \nabla\chi)
\cdot\nabla
 \chi.
\end{split}
\end{align}
Thus the tension of a map vanishes if and only if the map is harmonic.

Note that for each set of values $({\mathcal J}_n,q^e_n,q^b_n)$, $n=1,\ldots,N$, there is an extreme Kerr-Newman
solution with this angular momentum, electric and magnetic charge, and there is a corresponding harmonic map
$\tilde{\Phi}_n\colon\R^3\setminus\Gamma\to\H^2_\C$. In view of the transitive isometric action of the translations
above, this
map $\tilde{\Phi}_n=(u_n,v_n,\chi_n,\psi_n)$ is only determined up to constants $(a_n,b_n,c_n)$, where $v_n\mapsto
v_n-c_n\psi_n+b_n\chi_n+a_n$, $\chi_n\mapsto\chi_n+b_n$, and $\psi_n\mapsto\psi_n+c_n$, as well as a domain
translation $z\mapsto z+d$. Thus, we can set the constant values of $(v_n,\chi_n,\psi_n)$ on the component of $\Gamma'$
on
one side of the puncture $p_n$, and the values on the other side of $p_n$ will be determined by the angular momentum,
electric and magnetic
charge of $\tilde{\Phi}_n$. Using this freedom we obtain $N+1$ harmonic maps $\tilde{\Phi}_n$,
where for each $n=1,\dots N$
the values of $(v_n,\chi_n,\psi_n)$ agree with the values of $(v,\chi,\psi)$ in the map $\tilde{\Psi}$ corresponding to
our given initial data set
on the components of $\Gamma'$ lying to both sides of the puncture $p_n$, and where the
values of $(v_{N+1},\chi_{N+1},\psi_{N+1})$ agree with those of $(v,\chi,\psi)$ on the unbounded components
$\Gamma_{\pm}'$ of the
axis $\Gamma'$. Furthermore, we still have an overall
translation available, i.e.\ a translation in $\H_\C^2$ which can be applied to $\tilde{\Psi}$. This will be used in the
proof of Lemma~\ref{model} below.

Let us now construct the model map $\tilde{\Phi}_{0}$. For the sake of simplicity, we illustrate the construction when
$N=2$ with the help of Figure~\ref{domain}; clearly the
construction can be carried out with any value of $N$. First, define the map $\tilde{\Phi}_{0}$ to be equal to
$\tilde{\Phi}_3$ on
${\mathcal B}$, the region outside a large ball which does not intersect a neighborhood of the punctures, and to be
equal to $\tilde{\Phi}_n$ on small balls ${\mathcal B}_n$ surrounding the punctures, $n=1,2$. These are
the dark shaded regions in
Figure~\ref{domain}. Next, define the map $\tilde{\Phi}_{0}$ in narrow cylindrical tubes surrounding the components of
$\Gamma'$
joining the
different ${\mathcal B}_n$'s, the lightly shaded regions in Figure~\ref{domain}. Consider for example ${\mathcal C}_2$.
We pick a smooth function $0\leq\lambda(z)\leq1$ which is $0$ near ${\mathcal B}_1$ and $1$ near ${\mathcal B}_2$,
and define
\begin{equation}
 \tilde{\Phi}_{0}=(1-\lambda)\tilde{\Phi}_1 + \lambda\tilde{\Phi}_2 = \bigl(
 (1-\lambda)u_1 + \lambda u_2, (1-\lambda)v_1 + \lambda v_2,(1-\lambda)\chi_1 + \lambda\chi_2, (1-\lambda)\psi_1 +
 \lambda\psi_2\bigr).
\end{equation}
Finally, we extend $\tilde{\Phi}_{0}$ to the remaining region $\Omega$, the white region in Figure~\ref{domain}, so that
it is smooth.

\begin{figure}
\includegraphics[width=10cm]{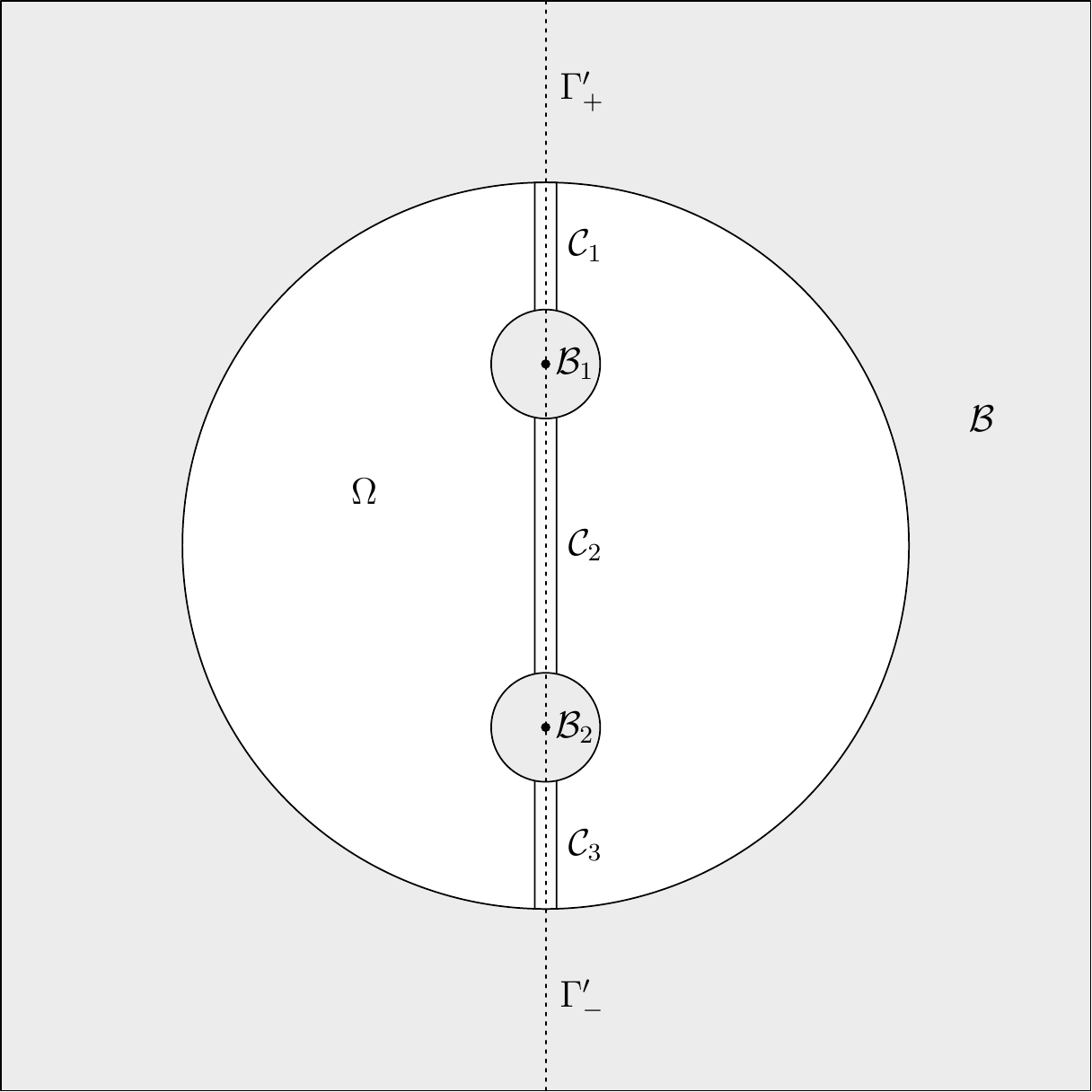}
 \caption{Construction of $\tilde\Phi_0$}  \label{domain}
\end{figure}

\begin{lemma} \label{model}
The reduced energy of $\tilde{\Phi}_{0}=(\tilde{u}_{0},\tilde{v}_{0},\tilde{\chi}_{0},\tilde{\psi}_{0})$ is finite, the
tension $\tau(\tilde{\Phi}_{0})$ has support inside a bounded set, and
$\tau(\tilde{\Phi}_{0})$ is pointwise bounded. Moreover, the values of
$(\tilde{v}_{0},\tilde{\chi}_{0},\tilde{\psi}_{0})$ agree with those of the given data $(v,\chi,\psi)$ on each component
of $\Gamma'$.
\end{lemma}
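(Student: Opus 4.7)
The plan is to verify the four assertions by decomposing $\mathbb{R}^3\setminus\Gamma$ into the regions on which $\tilde{\Phi}_{0}$ is defined differently: the exterior region $\mathcal{B}$, the puncture balls $\mathcal{B}_n$, the narrow tubes $\mathcal{C}_n$ around the components of $\Gamma'$, and the complementary region $\Omega$ where $\tilde{\Phi}_{0}$ is extended smoothly. On $\mathcal{B}$ and on each $\mathcal{B}_n$, $\tilde{\Phi}_{0}$ coincides with the harmonic map $\tilde{\Phi}_{N+1}$ or $\tilde{\Phi}_n$ respectively, so the tension vanishes there; consequently $\tau(\tilde{\Phi}_{0})$ is automatically supported in the bounded set $\bigcup_n\mathcal{C}_n\cup\Omega$.

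The agreement of $(\tilde{v}_{0},\tilde{\chi}_{0},\tilde{\psi}_{0})$ with $(v,\chi,\psi)$ on every component of $\Gamma'$ is immediate from the normalization of the $\tilde{\Phi}_n$. On the axis pieces contained in $\mathcal{B}$ or in some $\mathcal{B}_n$, the model map equals a single $\tilde{\Phi}_j$ whose axis values are those of the data by construction. On a tube $\mathcal{C}_n$ containing an axis component $I$, the two maps $\tilde{\Phi}_m$ and $\tilde{\Phi}_{m+1}$ appearing in the convex combination $\tilde{\Phi}_{0}=(1-\lambda)\tilde{\Phi}_m+\lambda\tilde{\Phi}_{m+1}$ were both normalized to agree with $(v,\chi,\psi)$ on $I$, so any convex combination preserves this common value.

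For finiteness of the reduced energy, the contributions from $\mathcal{B}$ and from each $\mathcal{B}_n$ are pieces of the reduced energy of a single extreme Kerr--Newman harmonic map, which is finite: this finiteness is essentially equivalent to the finiteness of the ADM mass of the extreme Kerr--Newman slice and is what underlies the bound \eqref{22}. On the bounded region $\Omega$ the extension is smooth, hence contributes a finite amount. The delicate contribution comes from the tubes, and I would expand the interpolated energy density as a $\lambda$-dependent combination of the individual Kerr--Newman densities plus cross terms in $\lambda'$. Here the key input is that $\tilde{\Phi}_m$ and $\tilde{\Phi}_{m+1}$ share the same axis limits of $(v,\chi,\psi)$ on $I$, which forces their $(v,\chi,\psi)$-components to differ by $O(\rho^2)$ inside the tube; combined with the bound $u_m-u_{m+1}=O(1)$ coming from the extreme Kerr--Newman axis asymptotics, this balances the exponential coefficients $e^{2\tilde u_0}$, $e^{4\tilde u_0}$ against the decay of the relevant first derivatives and yields a finite tube integral.

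For the pointwise bound on the tension, only the tubes and $\Omega$ need attention; $\Omega$ is bounded and $\tilde{\Phi}_0$ is smooth there, so the bound is trivial. Inside a tube, I would substitute $\tilde{\Phi}_{0}=(1-\lambda)\tilde{\Phi}_m+\lambda\tilde{\Phi}_{m+1}$ into the tension formulas and use $\tau(\tilde{\Phi}_m)=\tau(\tilde{\Phi}_{m+1})=0$; all surviving terms are either $\lambda'$ or $\lambda''$ times components of $\tilde{\Phi}_m-\tilde{\Phi}_{m+1}$ and their first derivatives, or bilinear cross terms in the gradients weighted by $e^{2\tilde u_0}$ or $e^{4\tilde u_0}$. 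Both types are handled by the same axis-matching cancellation used for the energy. The main obstacle will be precisely this tube analysis near the axis, where the balance between the exponential blow-up of $e^{2\tilde u_0}$, $e^{4\tilde u_0}$ and the vanishing on $I$ of $v$, $\chi$, $\psi$ and their pairwise differences must be read off from the explicit form of the extreme Kerr--Newman harmonic map at a smooth axis point.
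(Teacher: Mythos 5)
Your proposal is correct and follows essentially the same route as the paper. The decomposition into $\mathcal{B}$, $\mathcal{B}_n$, $\mathcal{C}_n$, $\Omega$ with tension vanishing on $\mathcal{B}$ and $\mathcal{B}_n$, the immediate axis-agreement observation, and the identification of the tube estimate as the only non-trivial step are exactly what the paper does. The one phrasing difference is that you work with differences such as $\chi_m-\chi_{m+1}=O(\rho^2)$ and $u_m-u_{m+1}=O(1)$, whereas the paper invokes the remaining translation freedom in $\mathbb H^2_{\mathbb C}$ to set the shared constant values of $(v_n,\chi_n,\psi_n)$ on $I$ to zero and then uses the normalized bounds $\rho^{-2}|v_n|$, $\rho^{-1}|\chi_n|$, $\rho^{-1}|\psi_n|\le C$; these are equivalent because the translation commutes with the affine interpolation $(1-\lambda)\tilde\Phi_m+\lambda\tilde\Phi_{m+1}$ and leaves $\omega$ and the reduced energy density invariant. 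For the tension bound the paper packages the substitution $\tau(\tilde\Phi_n)=0$ as the inequality $|\Delta U_n|\le 2\mathcal E'(\tilde\Phi_n)$ and then estimates $|\tau^{\tilde u_0}|\le|\Delta\tilde u_0|+2\mathcal E'(\tilde\Phi_0)$, which is the same computation you describe. Your sketch is a bit terse on the point that the argument actually yields a pointwise bound on $\mathcal E'(\tilde\Phi_0)$ over the tube (not merely integrability), from which finiteness follows since $\mathcal C_n$ has finite volume, but the plan would produce exactly that bound if written out.
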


\begin{proof}
The reduced energy of the extreme Kerr-Newman harmonic map is finite, see for example~\cite{ChruscielCosta, Costa,
SchoenZhou}. Thus
the integral of the reduced energy density over the regions ${\mathcal B}_n$ is clearly finite. Also
the integral over $\Omega$ is finite. Thus, it is only necessary to check the integral over
${\mathcal C}_n$. For clarity we set $n=1$. Since all the quantities we seek to estimate are geometric invariants, we
can now use the last translation available to set all the constants $v_1|_{I_1}=v_2|_{I_{1}}$,
$\chi_1|_{I_{1}}=\chi_2|_{I_{1}}$,
$\psi_1|_{I_{1}}=\psi_2|_{I_{1}}$ to zero, where $I_{1}$ is the portion of the axis
between $p_1$ and $p_2$. This implies that $\rho^{-2}|v_n|$, $\rho^{-1}|\psi_n|$ and $\rho^{-1}|\chi_n|$, $n=1,2$, are
bounded in ${\mathcal C}_1$. We write $\tilde{U}_{0}=\tilde{u}_{0}+\log\rho$ and $U_n=u_n+\log\rho$, $n=1,2$. It follows
that
$\tilde{U}_{0}=(1-\lambda)U_1+\lambda U_2$. Thus, in
${\mathcal C}_1$, we have for the
reduced energy density
\begin{align}
\begin{split}
  \mathcal{E}'(\tilde{\Phi}_{0}) =& |\nabla \tilde{U}_{0}|^2
  +\frac{e^{4\tilde{U}_{0}}}{\rho^{4}}|\nabla \tilde{v}_{0}
 +\tilde{\chi}_{0}\nabla\tilde{\psi}_{0}-\tilde{\psi}_{0}\nabla\tilde{\chi}_{0}|^{2}
 +\frac{e^{2\tilde{U}_{0}}}{\rho^{2}}\left(|\nabla\tilde{\chi}_{0}|^{2}
 +|\nabla\tilde{\psi}_{0}|^{2}\right) \\
 \leq &|\nabla U_1|^2 + |\nabla U_{2}|^2 + C\left(\rho^{-4}|\nabla v_1|^2
 + \rho^{-2}|\nabla\chi_1|^2 + \rho^{-2}|\nabla\psi_1|^{2} + \dots
  + |\nabla \lambda|^2 \right),
 \end{split}
 \end{align}
where $C$ is a constant which depends on the supremum of $|U_n|$, $\rho^{-2}|v_n|$, $\rho^{-1}|\chi_n|$,
$\rho^{-1}|\psi_{n}|$.
Thus $\mathcal{E}'(\tilde{\Phi}_{0})$ is bounded over ${\mathcal C}_1$, and integrating over ${\mathcal C}_1$ clearly
gives a
finite quantity. It follows that the integral of the reduced energy over of ${\mathcal C}_n$, $n=2,3$
is also finite, and hence the reduced energy of $\tilde{\Phi}_{0}$ is finite.

Next, consider the second claim of the lemma.
Since the $\tilde{\Phi}_n$ are harmonic, $\tau(\tilde{\Phi}_n)=0$ on ${\mathcal B}_{n}$, $n=1,2$, and
$\tau(\tilde{\Phi}_3)=0$ on $\mathcal{B}$. Therefore the support of $\tau(\tilde{\Phi}_{0})$ is contained in
$\Omega\cup_{n=1}^{3}\mathcal{C}_{n}$.

For the proof of the third claim of the lemma, note that since the tension vanishes on ${\mathcal B}_n$ and is clearly
bounded on $\Omega$, it remains to check the boundedness on ${\mathcal
C}_n$. Once again, we focus on ${\mathcal C}_1$. Since $\tilde{\Phi}_n$ is harmonic, we have $ |\Delta U_n| \leq
2\mathcal{E}'(\tilde{\Phi}_n)$, $n=1,2$. Moreover, on
${\mathcal C}_1$
\begin{equation}
\Delta \tilde{u}_{0}=(1-\lambda)\Delta U_1 + \lambda\Delta U_2 + 2\nabla \lambda \cdot\nabla(U_2-U_1) +
(U_2-U_1)\Delta\lambda,
\end{equation}
so that
\begin{align}
\begin{split}
  |\tau^{\tilde{u}_{0}}| \leq& |\Delta \tilde{u}_{0}| + 2\mathcal{E}'(\tilde{\Phi}_{0})\\
   \leq&
  |\Delta U_1| + |\Delta U_2| + 2|\nabla\lambda|
  \left( \mathcal{E}'(\tilde{\Phi}_1) + \mathcal{E}'(\tilde{\Phi}_2)\right) + C |\Delta\lambda| +
2\mathcal{E}'(\tilde{\Phi}_{0}) \\
  \leq& 2(1 + |\nabla\lambda|) \left( \mathcal{E}'(\tilde{\Phi}_1) + \mathcal{E}'(\tilde{\Phi}_2)\right) +
  C|\Delta\lambda| + 2\mathcal{E}'(\tilde{\Phi}_{0}),
\end{split}
\end{align}
where $C=\sup(|U_1|+|U_2|)$. It follows that $|\tau^{\tilde{u}_{0}}|$ is bounded.
Next observe that $e^{2\tilde{u}_{0}}|\nabla \tilde{v}_{0}|$, $e^{\tilde{u}_{0}}|\tilde{\chi}_{0}|$,
$e^{\tilde{u}_{0}}|\nabla\tilde{\chi}_{0}|$, $e^{\tilde{u}_{0}}|\tilde{\psi}_{0}|$ and
$e^{\tilde{u}_{0}}|\nabla\tilde{\psi}_{0}|$ are all
bounded on ${\mathcal C}_1$, since the same is true of $e^{2u_n}|\nabla v_n|$, $e^{u_n}|\chi_n|$,
$e^{u_n}|\nabla\chi_n|$, $e^{u_n}|\psi_n|$ and $e^{u_n}|\nabla\psi_n|$, $n=1,2$. Using this, and $e^{2u}|\Delta v_n|\leq
C \mathcal{E}'(\Phi_n)$, the proof for the $\tilde{v}_{0}$
component now proceeds in much the same way as for the $\tilde{u}_{0}$ component:
\begin{align}
\begin{split}
  e^{2\tilde{u}_{0}} |\tau^{\tilde{v}_{0}}| \leq& e^{2\tilde{u}_{0}} |\Delta \tilde{v}_{0}| +C
\mathcal{E}'(\tilde{\Phi}_{0})\\
   \leq&
  e^{2\tilde{u}_{0}} |\Delta v_1| + e^{2\tilde{u}_{0}} |\Delta v_2| + 2 |\nabla\lambda|
  \left( \mathcal{E}'(\tilde{\Phi}_1) + \mathcal{E}'(\tilde{\Phi}_2)\right) + C|\Delta\lambda| +
C\mathcal{E}'(\tilde{\Phi}_{0})\\
  \leq& (C + 2|\nabla\lambda|) \left( \mathcal{E}'(\tilde{\Phi}_1) + \mathcal{E}'(\tilde{\Phi}_2)\right) +
C|\Delta\lambda| +
  C\mathcal{E}'(\tilde{\Phi}_{0}).
\end{split}
\end{align}
It follows that $e^{2\tilde{u}_{0}}|\tau^{\tilde{v}_{0}}|$ is bounded on ${\mathcal C}_1$.
In a similar way, we obtain that $e^{\tilde{u}_{0}}|\tau^{\tilde{\psi}_{0}}|$ and
$e^{\tilde{u}_{0}}|\tau^{\tilde{\chi}_{0}}|$ are bounded on ${\mathcal C}_1$, and it follows
that
\begin{equation}
  \tau(\tilde{\Phi}_{0}) = (\tau^{\tilde{u}_{0}})^2 + e^{4\tilde{u}_{0}} (\tau^{\tilde{v}_{0}}
  + \tilde{\chi}_{0} \tau^{\tilde{\psi}_{0}} - \tilde{\psi}_{0} \tau^{\tilde{\chi}_{0}})^2
  + e^{2\tilde{u}_{0}} \left( (\tau^{\tilde{\chi}_{0}})^2 +
  (\tau^{\tilde{\psi}_{0}})^2 \right)
\end{equation}
is bounded on ${\mathcal C}_1$.

Lastly, it is immediately apparent from the construction that the values of the potentials for the model map agree with
those of the given data on the axis.
\end{proof}

\begin{cor}
For any set of punctures $p_{n}$ on the axis $\Gamma$ and prescribed constants $v_{0}|_{I_{n}}$,
$\chi_{0}|_{I_{n}}$, $\psi_{0}|_{I_{n}}$, $n=1,\ldots,N$, there exists a corresponding unique
harmonic map $\tilde{\Psi}_{0}=(u_{0},v_{0},\chi_{0},\psi_{0})$ which is asymptotic to $\tilde{\Phi}_{0}$, and
satisfies
\begin{equation}\label{3.14}
U_{0}=u_{0}+\log\rho=\log r_{n}+O(1),\text{ }\text{ }\text{ }\text{ }v_{0},\chi_{0},\psi_{0}=O(1),\text{ }\text{ }\text{
as }\text{ }\text{ }r_{n}\rightarrow 0.
\end{equation}
\end{cor}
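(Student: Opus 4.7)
The plan is to invoke Theorem~2 of~\cite{weinstein96} for the existence part, using the model map $\tilde{\Phi}_{0}$ produced above, and then extract the precise asymptotic behavior near each puncture and the uniqueness from standard features of harmonic maps into the negatively curved target $\H^2_\C$.

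\textbf{Existence.} Lemma~\ref{model} verifies exactly the two hypotheses needed to apply Theorem~2 of~\cite{weinstein96}: finiteness of the reduced energy of $\tilde{\Phi}_{0}$ and a pointwise bound on the tension with the required (in fact compactly supported) decay. As noted in the paragraph preceding the construction of $\tilde{\Phi}_{0}$, the axis condition in the definition of singular Dirichlet data is never used in the proof of that theorem, and the boundary values of $\tilde{\Phi}_{0}$ on each component of $\Gamma'$ coincide with the prescribed constants $v_{0}|_{I_n},\chi_{0}|_{I_n},\psi_{0}|_{I_n}$ by construction. Thus the theorem produces a harmonic map $\tilde{\Psi}_{0}=(u_{0},v_{0},\chi_{0},\psi_{0})\colon(\R^3\setminus\Gamma,g_{Euc})\to\H^2_\C$ which is asymptotic to $\tilde{\Phi}_{0}$ in the sense that $\mathrm{dist}_{\H^2_\C}(\tilde{\Psi}_{0},\tilde{\Phi}_{0})$ is uniformly bounded on $\R^3\setminus\Gamma$, and which inherits the prescribed potential values on $\Gamma'$.

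\textbf{Asymptotics near the punctures.} By construction, on the small ball $\mathcal{B}_n$ the model map agrees with the extreme Kerr--Newman map $\tilde{\Phi}_n=(u_n,v_n,\chi_n,\psi_n)$, for which $U_n=u_n+\log\rho=\log r_n+O(1)$ and $v_n,\chi_n,\psi_n=O(1)$ as $r_n\to 0$. The asymptotic condition $\mathrm{dist}_{\H^2_\C}(\tilde{\Psi}_{0},\tilde{\Phi}_{0})\leq C$ together with the explicit form of the metric~\eqref{complex-hyperbolic} yields $|u_{0}-u_n|\leq C$, and consequently $U_{0}=U_n+O(1)=\log r_n+O(1)$. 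For the remaining coordinates, the negatively curved metric dominates $e^{2u_n}(d\chi^2+d\psi^2)$ and $e^{4u_n}(dv+\chi d\psi-\psi d\chi)^2$; combined with the comparability $u_{0}\sim u_n$ and the boundedness of $v_n,\chi_n,\psi_n$ near $p_n$, one extracts $v_{0},\chi_{0},\psi_{0}=O(1)$. The main technical point will be to turn the intrinsic hyperbolic distance bound into these coordinate estimates; this is done by integrating the metric along a minimizing geodesic between the two points in $\H^2_\C$ and using the transitive translation action on each slice ${\mathbb S}_{u_0}$ described after~\eqref{complex-hyperbolic} to normalize.

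\textbf{Uniqueness.} Suppose $\tilde{\Psi}_{0}$ and $\tilde{\Psi}_{0}'$ are two such harmonic maps, each asymptotic to $\tilde{\Phi}_{0}$ with the same boundary values on $\Gamma'$. Because $\H^2_\C$ has nonpositive sectional curvature, the composition $\Phi:=\mathrm{dist}_{\H^2_\C}^2(\tilde{\Psi}_{0},\tilde{\Psi}_{0}')$ is a subharmonic function on $\R^3\setminus\Gamma$ (the classical calculation of Schoen--Yau/Hartman for pairs of harmonic maps into NPC targets). By the triangle inequality $\Phi$ is bounded on $\R^3\setminus\Gamma$, it extends continuously to zero on the axis $\Gamma'$ since both maps take the prescribed constant values there, and it tends to zero at infinity and at the punctures because both maps are asymptotic to the same $\tilde{\Phi}_{0}$. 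The maximum principle for bounded subharmonic functions (across the axis and across the punctures, which have zero capacity in $\R^3$) then forces $\Phi\equiv 0$.

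\textbf{Main obstacle.} The routine step is existence; the delicate point is promoting the \emph{qualitative} asymptotic condition $\mathrm{dist}_{\H^2_\C}(\tilde{\Psi}_{0},\tilde{\Phi}_{0})=O(1)$ into the \emph{coordinate-wise} asymptotics stated in~\eqref{3.14}, and in particular arguing that the $O(1)$ bound on the distance converts into the stated $O(1)$ bound on each of $v_{0},\chi_{0},\psi_{0}$ despite the degeneracy of the target metric as $u\to -\infty$. A secondary subtlety is that the uniqueness argument requires the axis $\Gamma'$ and the punctures $\{p_n\}$ to be removable for bounded subharmonic functions in $\R^3$, which is the case since both form sets of vanishing $2$-capacity.
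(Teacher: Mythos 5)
Your overall strategy coincides with the paper's: existence of $\tilde{\Psi}_{0}$ via Theorem~2 of~\cite{weinstein96} combined with Lemma~\ref{model}, and uniqueness via subharmonicity of the distance between two candidate solutions followed by a maximum principle (the paper outsources the maximum principle step to Proposition~C.4 of \cite{ChruscielLiWeinstein}, whereas you give a self-contained version using removability of $\Gamma'$ and of the punctures). Those pieces are fine, and the small over-claim that the distance between competing solutions \emph{vanishes} on $\Gamma'$ and at the punctures is harmless, since removability of $\Gamma'\cup\{p_n\}$ for bounded subharmonic functions is what actually does the work.

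The one place where your argument has a genuine gap, which you correctly flag but do not close, is the conversion of the intrinsic bound $\dist_{\H^2_\C}(\tilde{\Psi}_{0},\tilde{\Phi}_{0})\leq C$ into the coordinate bounds $v_{0},\chi_{0},\psi_{0}=O(1)$ of~\eqref{3.14} near the punctures. Near $p_n$, in particular near the axis, one has $u_{0}\to-\infty$, so the metric coefficients $e^{2u_{0}}$ and $e^{4u_{0}}$ in~\eqref{complex-hyperbolic} degenerate to zero; a bounded hyperbolic distance therefore does \emph{not} directly control the Euclidean oscillation of $(v_{0},\chi_{0},\psi_{0})$ -- arbitrarily large swings in those coordinates cost arbitrarily little length in the degenerating directions. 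Invoking the transitive translation action to normalize does not repair this, since the issue is not the choice of base point but the vanishing of the metric coefficients. The actual control comes from elsewhere: the gradient bound of Proposition~\ref{CLWestimates} (which gives $|\nabla\chi_0|+|\nabla\psi_0|\leq C/r_n$ and $|\nabla v_0+\chi_0\nabla\psi_0-\psi_0\nabla\chi_0|\leq C\rho/r_n^2$), integrated from the axis where the potentials are prescribed constants, produces the $O(1)$ bounds. That estimate is itself derived from the Bochner formula and the axis behavior, not from the coarse distance bound. The paper is equally terse on this point -- it simply records~\eqref{3.11} as a "consequence" of asymptoticity -- so you should be aware that the full justification requires the Bochner/gradient argument of Proposition~\ref{CLWestimates} rather than a direct unwinding of the hyperbolic distance.
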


\begin{proof}
The existence of $\tilde{\Psi}_{0}$ and the fact that it is asymptotic to $\tilde{\Phi}_{0}$,
follow from the main result in \cite{weinstein96} combined with Lemma \ref{model}. In order to
establish uniqueness, assume that there are two solutions $\tilde{\Psi}_{0}$ and $\tilde{\Psi}_{1}$. Since the target
space has negative curvature, the distance function
$f(x)=\operatorname{dist}_{\mathbb{H}_{\mathbb{C}}^{2}}(\tilde{\Psi}_{0}(x),\tilde{\Psi}_{1}(x))\geq 0$ is subharmonic
on $\mathbb{R}^{3}\setminus\Gamma$. From this, it follows by a maximum principle type argument (Proposition C.4 in
\cite{ChruscielLiWeinstein}) that $f\equiv 0$.
\end{proof}

As a consequence of the fact that $\tilde{\Psi}_{0}$ is asymptotic to the model map $\tilde{\Phi}_{0}$  at spatial
infinity, that is $\operatorname{dist}_{\mathbb{H}_{\mathbb{C}}^{2}}
(\tilde{\Phi}_{0}, \tilde{\Psi}_{0})\rightarrow 0$ as $r\rightarrow\infty$, we obtain
\begin{equation}\label{3.11}
|U_{0}|\leq C\text{ }\text{ }\text{ on }\text{ }\text{ }\mathbb{R}^{3}\setminus\cup_{n=1}^{N}B_{\delta}(p_{n}),
\text{ }\text{ }\text{ }\text{ }
|v_{0}|+|\chi_{0}|+|\psi_{0}|\leq C\text{ }\text{ }\text{ on }\text{ }\text{ }\mathbb{R}^{3}.
\end{equation}
It is expected \cite{ChruscielLiWeinstein} that more refined asymptotic fall-off estimates should hold for
$\tilde{\Psi}_{0}$, which are in line with \eqref{57}-\eqref{64}.

In order to show that the solution $\Psi_{0}$ minimizes the functional \eqref{21}, we will need
certain estimates concerning the asymptotics.

\begin{prop}\label{CLWestimates}
On $\mathbb{R}^{3}\setminus\Gamma$, the solution $\Psi_{0}=(U_{0},v_{0},\chi_{0},\psi_{0})$ satisfies the estimate
\begin{equation}\label{3.1}
|\nabla(U_{0}-\log\rho)|^{2}+\frac{e^{4U_{0}}}{\rho^{4}}
|\nabla v_{0}+\chi_{0}\nabla\psi_{0}-
\psi_{0}\nabla\chi_{0}|^{2}
+\frac{e^{2U_{0}}}{\rho^{2}}(|\nabla\chi_{0}|^{2}
+|\nabla\psi_{0}|^{2})\leq \frac{C}{\rho^{2}},
\end{equation}
and in particular near each puncture $p_{n}$ it holds that
\begin{equation}\label{3.2}
|\nabla U_{0}|\leq \frac{C}{\rho},\text{ }\text{ }\text{ }\text{ }
|\nabla v_{0}+\chi_{0}\nabla\psi_{0}-
\psi_{0}\nabla\chi_{0}|\leq \frac{C\rho}{r_{n}^{2}},\text{ }\text{ }\text{ }\text{ }
|\nabla\chi_{0}|
+|\nabla\psi_{0}|\leq\frac{C}{r_{n}}.
\end{equation}
\end{prop}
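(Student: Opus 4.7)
The plan is first to establish the scale-invariant bound~\eqref{3.1}; the near-puncture estimates~\eqref{3.2} then follow algebraically from~\eqref{3.14}. The starting observation is that, with $u_{0}=U_{0}-\log\rho$, the left-hand side of~\eqref{3.1} is exactly the intrinsic energy density $\mathcal{E}(\tilde{\Psi}_{0})$ of the harmonic map $\tilde{\Psi}_{0}=(u_{0},v_{0},\chi_{0},\psi_{0})\colon\R^{3}\setminus\Gamma\to\H^{2}_{\C}$ recorded in~\eqref{complex-hyperbolic}, since $e^{4U_{0}}/\rho^{4}=e^{4u_{0}}$ and $e^{2U_{0}}/\rho^{2}=e^{2u_{0}}$. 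So~\eqref{3.1} amounts to the pointwise gradient estimate $\mathcal{E}(\tilde{\Psi}_{0})(x)\leq C\rho(x)^{-2}$.

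I would derive this via a Choi--Schoen-type gradient estimate for harmonic maps into the Cartan--Hadamard target $\H^{2}_{\C}$. Fix $x_{0}\in\R^{3}\setminus\Gamma$ and set $R=\rho(x_{0})/2$; since $r_{n}^{2}=\rho^{2}+(z-z_{n})^{2}\geq\rho^{2}$ always, the ball $B_{R}(x_{0})$ automatically misses the axis and every puncture, so $\tilde{\Psi}_{0}$ is a smooth harmonic map on $B_{R}(x_{0})$. The Bochner formula, together with the strictly negative sectional curvature of $\H^{2}_{\C}$, then yields, by a standard cutoff and maximum-principle argument, a local gradient bound of the form
\[\mathcal{E}(\tilde{\Psi}_{0})(x_{0})\leq \frac{C(D)}{R^{2}},\qquad D:=\operatorname{diam}_{\H^{2}_{\C}}\tilde{\Psi}_{0}(B_{R}(x_{0})),\]
where $C(D)$ depends only on $D$ and a sectional curvature bound for $\H^{2}_{\C}$. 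It remains to show that $D$ is bounded independently of $x_{0}$. By the triangle inequality,
\[D\leq 2\operatorname{diam}_{\H^{2}_{\C}}\tilde{\Phi}_{0}(B_{R}(x_{0}))+2\sup_{\R^{3}\setminus\Gamma}\dist_{\H^{2}_{\C}}(\tilde{\Phi}_{0},\tilde{\Psi}_{0}).\]
The second term is a universal constant by the corollary preceding the proposition, which asserts that $\tilde{\Psi}_{0}$ is asymptotic to $\tilde{\Phi}_{0}$ in the sense of bounded $\H^{2}_{\C}$-distance. The first term is controlled region-by-region: in $\mathcal{B}$ and the $\mathcal{B}_{n}$ the pointwise bound $|d\tilde{\Phi}_{0}|\leq C'\rho^{-1}$ follows from the explicit extreme Kerr--Newman asymptotics of each $\tilde{\Phi}_{n}$, while in the glued regions $\mathcal{C}_{n}$ and $\Omega$ the uniform boundedness of the reduced energy density proved in Lemma~\ref{model} suffices; integrating across $B_{R}(x_{0})$ in any of these cases yields a uniform bound on $\operatorname{diam}_{\H^{2}_{\C}}\tilde{\Phi}_{0}(B_{R}(x_{0}))$.

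I expect the main technical burden to lie in the quantitative Choi--Schoen gradient estimate for harmonic maps into $\H^{2}_{\C}$ with bounded-diameter image, together with the verification that the model map obeys a uniform $\rho^{-1}$ gradient control across every region of the construction. Once~\eqref{3.1} is in hand, the near-puncture bounds~\eqref{3.2} are immediate: from $U_{0}=u_{0}+\log\rho$ one has $|\nabla U_{0}|\leq |\nabla u_{0}|+\rho^{-1}\leq C\rho^{-1}$, and substituting the asymptotic identities $e^{2U_{0}}=O(r_{n}^{2})$ and $e^{4U_{0}}=O(r_{n}^{4})$ from~\eqref{3.14} into the weighted summands of~\eqref{3.1} isolates the stated bounds on $|\nabla v_{0}+\chi_{0}\nabla\psi_{0}-\psi_{0}\nabla\chi_{0}|$ and $|\nabla\chi_{0}|+|\nabla\psi_{0}|$.
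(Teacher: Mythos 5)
Your proof is correct and reaches the same conclusion, but by a genuinely different route from the paper's. Both arguments begin with the Bochner formula and the nonpositive curvature of $\H^2_\C$ to obtain subharmonicity of the harmonic energy density $\mathcal{E}(\tilde{\Psi}_0)$, and both ultimately need a scaled Caccioppoli estimate on the ball $B_{\rho(x)/2}(x)$ to supply the $1/\rho^2$ factor; the difference is in how that estimate is produced. The paper works with the reduced energy density $\mathcal{E}'(\Psi_0)$ and exploits the explicit component structure of the Euler--Lagrange equations: a test-function argument modeled on equations (10)--(11) of \cite{weinstein96} bounds $\int_{B_{\rho/2}(x)}\mathcal{E}'(\Psi_0)$ by $\sup|U_0|^2\cdot\int|\nabla\varphi|^2$, and since $U_0$ is bounded only away from the punctures (compare \eqref{3.11} with \eqref{3.14}), this forces a two-case split where near a puncture one must replace $U_0$ by $U_0-\log r(x)$ to restore boundedness. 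Your proof instead appeals to the intrinsic Hadamard geometry of the target: once $\operatorname{diam}_{\H^2_\C}\tilde{\Psi}_0(B_{\rho/2}(x))$ is shown to be uniformly bounded --- which you obtain from the asymptoticity bound $\dist_{\H^2_\C}(\tilde{\Phi}_0,\tilde{\Psi}_0)\leq C$ together with the pointwise gradient control $|d\tilde{\Phi}_0|\leq C/\rho$ on each region of the construction --- the convexity of $\dist^2(\cdot,p)$ in the Hadamard target yields the Caccioppoli bound $\int_{B_{\rho/4}(x)}\mathcal{E}(\tilde{\Psi}_0)\leq C D^2\rho$ without inspecting individual components or distinguishing near-puncture from far-puncture balls. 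The trade-off is that you must verify the model-map gradient control on every region of the gluing, and this is slightly more than Lemma~\ref{model} literally states: that lemma proves pointwise boundedness of the \emph{reduced} energy density $\mathcal{E}'(\tilde{\Phi}_0)$ on the gluing regions, and you need the additional (easy) observation that $\rho$ is bounded above on those regions to convert this into $\mathcal{E}(\tilde{\Phi}_0)\leq 2\mathcal{E}'(\tilde{\Phi}_0)+2/\rho^2\leq C/\rho^2$ there. The deduction of \eqref{3.2} from \eqref{3.1} and \eqref{3.14} in your final paragraph matches the paper.
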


\begin{proof}
\newcommand{\E}{\mathcal{E}}
This result is analogous to (2.20) in \cite{ChruscielLiWeinstein}, and the proof follows in the same way. As details of the proof were left out in \cite{ChruscielLiWeinstein}, we provide them here in the current and more general setting.

Let $x$ be a point of Euclidean distance $\rho$ from the axis $\Gamma$, and let $B_{\rho/2}(x)$ be a Euclidean ball of radius $\rho/2$. By the Bochner identity (equation (12) in \cite{weinstein96}), and the fact that the target manifold $\mathbb{H}^{2}_{\mathbb{C}}$ has negative curvature, it follows that the energy density $\mathcal{E}(\tilde{\Psi}_0)$ is subharmonic $\Delta\mathcal{E}(\tilde{\Psi}_0)\geq0$. Thus, from the mean value theorem we obtain
\begin{equation}
\mathcal{E}(\tilde{\Psi}_0)|_{x} \leq \fint_{B_{\rho/2}(x)} \mathcal{E}(\tilde{\Psi}_0),
\end{equation}
where $\fint_{B_{\rho/2}} = \operatorname{vol}(B_{\rho/2})^{-1} \int_{B_{\rho/2}}$ denotes the average.
Now observe that $\mathcal{E}(\tilde{\Psi}_0) \leq \mathcal{E}'(\Psi_0)+C/\rho^2$, and taking the average of this inequality over $B_{\rho/2}(x)$ produces
\begin{equation}
	\mathcal{E}(\tilde{\Psi}_0)|_x \leq \fint_{B_{\rho/2}(x)} \E'(\Psi_0) + \frac C{\rho^2}.
\end{equation}
Similarly we have $\mathcal{E}'(\Psi_0) \leq  \mathcal{E}(\tilde{\Psi}_0) + C/\rho^2$, and hence
\begin{equation}
	\mathcal{E}'(\Psi_0)|_x \leq \fint_{B_{\rho/2}(x)} \E'(\Psi_0) + \frac C{\rho^2}.
\end{equation}
It remains only to show that $\fint_{B_{\rho/2}(x)} \E'(\Psi_0)$ is of order $1/\rho^2$.

We divide this argument into two cases: away from a neighborhood of the punctures, and in a neighborhood of the punctures. First assume that $x$ is away from any puncture. Then since $U_0$ is uniformly bounded, we can use the arguments of (10) and (11) in \cite[pages 842-843]{weinstein96} to obtain
\begin{equation}
	\int_{B_{\rho/2}(x)}\mathcal{E}'(\Psi_0) \leq \left(4\sup_{B_{3\rho/4}(x)}|U_0|^2 + 2\sup_{B_{3\rho/4}(x)}|U_0|\right) \int_{B_{3\rho/4}(x)} |\nabla\varphi|^2,
\end{equation}
where $\varphi$ is a cut-off function which is $1$ on $B_{\rho/2}(x)$ and vanishes outside $B_{3\rho/4}(x)$. Clearly $|\nabla\varphi|^2$ is of order $1/\rho^2$ in $B_{3\rho/4}(x)$, so after dividing by $\operatorname{vol}(B_{\rho/2}(x))$ the desired result follows.

Assume now that $x$ is in a neighborhood of one of the punctures, which without loss of generality may be taken to be the origin. From the estimate \eqref{3.14}, of $U_0$ near the puncture, we have $|U_0-\log r|\leq C$. Furthermore $C^{-1}r(x)\leq r\leq Cr(x)$ on $B_{3\rho/4}(x)$, so that $|\log r-\log r(x)|\leq \log C$ on this domain. It follows that $|U_0-\log r(x)|\leq C+\log C$. Since $\log r(x)$ is constant on $B_{3\rho/4}(x)$, we can replace $U_0$ by $U_0'=U_0-\log r(x)$ in the arguments above to obtain
\begin{equation}
	\int_{B_{\rho/2}(x)}\mathcal{E}'(\Psi_0) \leq \left(4\sup_{B_{3\rho/4}(x)}|U'_0|^2 + 2\sup_{B_{3\rho/4}(x)}|U'_0|\right) \int_{B_{3\rho/4}(x)} |\nabla\varphi|^2.
\end{equation}
This completes the proof of \eqref{3.1}. The estimates \eqref{3.2} follow immediately from \eqref{3.1} and \eqref{3.14}.
\end{proof}

The previous proposition is useful because the bounds it provides in \eqref{3.1} apply globally. However, more detailed estimates are
possible near any compact subset of the axis $\Gamma$ which does not include punctures. More precisely, $\Psi_{0}$ is
smooth away from the punctures and satisfies the following asymptotics \cite{Nguyen}:
\begin{equation}\label{3.5}
v_{0}=c_{1}+c_{3}\chi_{0}-c_{2}\psi_{0}+O(\rho^{4})
=c_{4}+O(\rho^{2}),
\text{ }\text{ }\text{ }\text{ }
\chi_{0}=c_{2}+O(\rho^{2}),\text{ }\text{ }\text{ }\text{ }\psi_{0}=c_{3}+O(\rho^{2}),
\end{equation}
\begin{equation}\label{3.6}
|\nabla(U_{0}-\log\rho)|=O(\rho^{-1+\epsilon}),\text{ }\text{ }\text{ }\text{ }\text{ }
|\Delta U_{0}|\leq C\rho^{-2+\epsilon},
\end{equation}
for some $\epsilon>0$, and
\begin{equation}\label{3.7}
|\nabla v_{0}|=O(\rho),\text{ }\text{ }\text{ }\text{ }
|\nabla\chi_{0}|+|\nabla\psi_{0}|=O(\rho),
\text{ }\text{ }\text{ }\text{ }
|\nabla v_{0}+\chi_{0}\nabla\psi_{0}-\psi_{0}\nabla\chi_{0}|=O(\rho^{3}).
\end{equation}

Lastly, we will have need of the following weighted Poincar\'{e} inequalities.

\begin{lemma}\label{poincare}
Let $\beta>\frac{1}{2}$ and $h\in C^{0}$. If $f\in C^{1}$ is axisymmetric and satisfies
$(\sin^{-\beta}\theta)f|_{\theta=0,\pi}=0$ then
\begin{equation}\label{3.10}
\int_{\delta_{1}<r<\delta_{2}}\frac{f^{2}}{\sin^{2\beta+1}\theta}\frac{h(r)}{r^{2}}
\leq
C\int_{\delta_{1}<r<\delta_{2}}\frac{|\nabla f|^{2}}{\sin^{2\beta-1}\theta}h(r)
\end{equation}
where we are using $r$ to denote the Euclidean distance to any puncture.

If $f\in C^{1}$ is an axisymmetric function with $f|_{r=\delta_{1},\delta_{2}}=0$, and $\beta\neq 0$ then
\begin{equation}\label{3.12}
\int_{\delta_{1}<r<\delta_{2}}\frac{f^{2}}{r^{2\beta+2}}
\leq C|\beta|^{-2}\int_{\delta_{1}<r<\delta_{2}}\frac{(\partial_{r}f)^{2}}{r^{2\beta}}.
\end{equation}

\end{lemma}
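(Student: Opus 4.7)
The plan is to pass to spherical coordinates $(r,\theta,\phi)$ centered at a puncture, use the hypothesis that $f$ is axisymmetric to trivially integrate out $\phi$, and thereby reduce each inequality to a one-dimensional weighted Hardy inequality, which is then handled by integration by parts and Cauchy--Schwarz.

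For \eqref{3.12}, the Euclidean volume element $r^{2}\sin\theta\,dr\,d\theta\,d\phi$ cancels the $r^{2}$ factors in the weights on both sides, and the claim reduces to the pointwise-in-$(\theta,\phi)$ estimate
\begin{equation*}
\int_{\delta_{1}}^{\delta_{2}}\frac{f^{2}}{r^{2\beta}}\,dr\leq\frac{C}{\beta^{2}}\int_{\delta_{1}}^{\delta_{2}}\frac{(\partial_{r}f)^{2}}{r^{2\beta-2}}\,dr.
\end{equation*}
This is the standard weighted Hardy inequality: I write $r^{-2\beta}=(1-2\beta)^{-1}\partial_{r}(r^{1-2\beta})$, integrate the left-hand side by parts (the boundary terms vanish by $f|_{r=\delta_{1},\delta_{2}}=0$), and apply Cauchy--Schwarz to the resulting cross-product $\int r^{1-2\beta}f\,\partial_{r}f\,dr$. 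Squaring and dividing yields the bound with constant proportional to $(1-2\beta)^{-2}$, which is of the form $C|\beta|^{-2}$ in the parameter range of interest.

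For \eqref{3.10} the same reduction applies. The contribution $r^{2}(\partial_{r}f)^{2}$ from the Euclidean gradient is nonnegative and is simply discarded, leaving the one-dimensional estimate
\begin{equation*}
\int_{0}^{\pi}\frac{f^{2}}{\sin^{2\beta}\theta}\,d\theta\leq C\int_{0}^{\pi}\frac{(\partial_{\theta}f)^{2}}{\sin^{2\beta-2}\theta}\,d\theta
\end{equation*}
pointwise in $r$. My plan here is to substitute $g=f\sin^{-(\beta-1/2)}\theta$, so the left-hand integrand becomes $g^{2}/\sin\theta$. The hypothesis $(\sin^{-\beta}\theta)f|_{\theta=0,\pi}=0$ together with $\beta>1/2$ forces $g(0)=g(\pi)=0$, which kills all boundary contributions. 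Expanding $(\partial_{\theta}f)^{2}/\sin^{2\beta-2}\theta$ in terms of $g$ and integrating the cross term $\int g(\partial_{\theta}g)\cos\theta\,d\theta$ by parts produces an identity of the shape
\begin{equation*}
\int_{0}^{\pi}\frac{(\partial_{\theta}f)^{2}}{\sin^{2\beta-2}\theta}\,d\theta=\int_{0}^{\pi}(\partial_{\theta}g)^{2}\sin\theta\,d\theta+A(\beta)\int_{0}^{\pi}g^{2}\sin\theta\,d\theta+\left(\beta-\tfrac{1}{2}\right)^{2}\int_{0}^{\pi}\frac{g^{2}}{\sin\theta}\,d\theta,
\end{equation*}
whose last term is precisely the desired left-hand side. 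For $\beta$ in the range where $A(\beta)\geq 0$ the conclusion is immediate; outside that range the indefinite middle term is absorbed into the last term using $\sin\theta\leq 1/\sin\theta$.

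The main obstacle is verifying that the boundary hypotheses are strong enough to justify every integration by parts. The condition $(\sin^{-\beta}\theta)f|_{\theta=0,\pi}=0$ is tailored precisely so that both $g$ and $g^{2}\cos\theta$ vanish at the endpoints, since $g=(\sin^{-\beta}\theta\cdot f)\sin^{1/2}\theta\to 0$ by the hypothesis combined with $\beta>1/2$. In the $r$-inequality the cleaner boundary condition makes this step automatic, and the only subtlety is the excluded value $\beta=1/2$ where the integration-by-parts identity degenerates; this is reflected in the statement's use of $|\beta|^{-2}$ rather than a sharper constant.
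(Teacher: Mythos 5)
Your treatment of the first inequality \eqref{3.10} is a correct, self-contained argument. Passing to spherical coordinates, discarding the nonnegative radial piece $r^{-2}$ of $|\nabla f|^{2}$, reducing to a one-variable estimate in $\theta$, and then substituting $g=f\sin^{-(\beta-1/2)}\theta$ so that the boundary terms vanish under the hypothesis $(\sin^{-\beta}\theta)f|_{\theta=0,\pi}=0$, $\beta>\tfrac12$, is a legitimate route; the sign of the middle coefficient $A(\beta)=(\beta-\tfrac12)(\tfrac32-\beta)$ and its absorption via $\sin\theta\le 1/\sin\theta$ work out as you claim. The paper instead simply cites Proposition 2.4 of Chru\'sciel--Li--Weinstein and says the proof there extends by tracking boundary terms, so your argument is more explicit but proves the same thing.

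For the second inequality \eqref{3.12} there is a genuine gap. You have interpreted the integral as a three-dimensional one, so that after cancelling $r^{2}$ against the Jacobian $r^{2}\sin\theta$ you arrive at the one-dimensional target
\begin{equation*}
\int_{\delta_{1}}^{\delta_{2}}\frac{f^{2}}{r^{2\beta}}\,dr\leq\frac{C}{\beta^{2}}\int_{\delta_{1}}^{\delta_{2}}\frac{(\partial_{r}f)^{2}}{r^{2\beta-2}}\,dr.
\end{equation*}
Your Hardy argument with the antiderivative $r^{1-2\beta}$ then gives the constant $4(1-2\beta)^{-2}$, and this is in fact sharp for that one-dimensional weighted inequality. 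But $(1-2\beta)^{-2}$ is \emph{not} of the form $C|\beta|^{-2}$ for any universal $C$: it diverges as $\beta\to\tfrac12$, where $|\beta|^{-2}$ stays bounded, and it stays bounded as $\beta\to 0$, where $|\beta|^{-2}$ diverges. So your argument establishes a different inequality, valid for $\beta\neq\tfrac12$ rather than $\beta\neq 0$, and does not give the claimed bound.

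The source of the discrepancy is that \eqref{3.12} is actually a two-dimensional estimate: the paper projects to the $(\rho,z)$ half-plane, so the domain is an annulus $A(\delta_{1},\delta_{2})$ with area element $d\rho\,dz=r\,dr\,d\theta$, and uses that $\log r$ is harmonic in two dimensions. The identity
\begin{equation*}
0=\int_{A(\delta_{1},\delta_{2})}\nabla(\log r)\cdot\nabla\!\left(r^{-2\beta}f^{2}\right)
=\int_{A(\delta_{1},\delta_{2})}\frac{1}{r}\left(-2\beta\,r^{-2\beta-1}f^{2}+2r^{-2\beta}f\,\partial_{r}f\right),
\end{equation*}
valid because $f$ vanishes on the two boundary circles, rearranges to $\beta\int_{A}r^{-2\beta-2}f^{2}=\int_{A}r^{-2\beta-1}f\,\partial_{r}f$, and Cauchy--Schwarz then gives \eqref{3.12} with constant exactly $|\beta|^{-2}$. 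Equivalently, after fixing $\theta$, the one-dimensional reduction carries the extra factor of $r$ from the two-dimensional Jacobian; the relevant sharp Hardy constant is $4/(1-a)^{2}$ with $a=2\beta+1$, which is $\beta^{-2}$. Your three-dimensional reading shifts $a$ by one and hence shifts the excluded exponent from $\beta=0$ to $\beta=\tfrac12$. To repair the proposal you need to either work directly on the $(\rho,z)$ annulus with $d\rho\,dz$, or equivalently use the measure $r\,dr$ in the one-dimensional reduction.
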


\begin{proof}
The first statement slightly generalizes Proposition 2.4 in \cite{ChruscielLiWeinstein}, but the proof there easily
extends to this situation by keeping track of boundary terms.

For the second statement, project everything to the $(\rho,z)$ plane so that the integration takes place over an annulus
$A(\delta_{1},\delta_{2})$. Now use the fact that $\log r$ is harmonic in two dimensions, and the fact that
$f|_{r=\delta_{1},\delta_{2}}=0$ to find
\begin{equation}\label{3.13}
0=\int_{A(\delta_{1},\delta_{2})}
\nabla(\log r)\cdot\nabla(r^{-2\beta}f^{2})
=\int_{A(\delta_{1},\delta_{2})}\frac{1}{r}
\left(-2\beta r^{-2\beta-1}f^{2}
+2r^{-2\beta}f\partial_{r}f\right).
\end{equation}
From this we easily get the desired inequality.
\end{proof}

\section{Convexity and Proof of Theorem \ref{thm2}}\label{sec4}

Let $\tilde{\Psi}=(u,v,\chi,\psi):\mathbb{R}^{3}\rightarrow
\mathbb{H}_{\mathbb{C}}^{2}$ and consider the harmonic energy on a domain $\Omega\subset\mathbb{R}^{3}$:
\begin{equation}\label{50}
E_{\Omega}(\tilde{\Psi})=\int_{\Omega}|\nabla u|^{2}+e^{4u}|\nabla v
+\chi\nabla\psi-\psi\nabla\chi|^{2}
+e^{2u}\left(|\nabla\chi|^{2}
+|\nabla\psi|^{2}\right)dx.
\end{equation}
If $\Omega$ does not intersect the rotation axis $\Gamma=\{\rho=0\}$, and we write $U=u+\log\rho$, then the reduced
energy $\mathcal{I}_{\Omega}$ of the map $\Psi=(U,v,\chi,\psi)$ is related to the harmonic energy of $\tilde{\Psi}$ by
\begin{equation}\label{51}
\mathcal{I}_{\Omega}(\Psi)=E_{\Omega}(\tilde{\Psi})
+\int_{\partial\Omega}(2U-\log\rho)\partial_{\nu}\log\rho,
\end{equation}
where $\nu$ denotes the unit outer normal to the boundary $\partial\Omega$ and
\begin{equation}\label{52}
\mathcal{I}_{\Omega}(\Psi)=
\int_{\Omega}|\nabla U|^{2}+\frac{e^{4U}}{\rho^{4}}|\nabla v
+\chi\nabla\psi-\psi\nabla\chi|^{2}
+\frac{e^{2U}}{\rho^{2}}\left(|\nabla\chi|^{2}
+|\nabla\psi|^{2}\right)dx.
\end{equation}
The formula \eqref{51} is obtained through an integration by parts, using the fact that $\log\rho$ is harmonic on
$\mathbb{R}^{3}\setminus\Gamma$. Note that $\mathcal{I}=\mathcal{I}_{\mathbb{R}^{3}}=8\pi\mathcal{M}$
where $\mathcal{M}$ was introduced in Section \ref{sec1}. Moreover $\mathcal{I}$, which is referred to as the reduced
energy, may be considered a regularization of $E$ since the infinite term $\int|\nabla\log\rho|^{2}$ has been removed,
and since the two functionals differ only by a boundary term they must have the same critical points.

Let $\tilde{\Psi}_{0}=(u_{0},v_{0},\chi_{0},\psi_{0})$ denote the harmonic map constructed in the previous section, and
let $\Psi_{0}=(U_{0},v_{0},\chi_{0},\psi_{0})$ be the associated renormalized map where $U_{0}=u_{0}+\log\rho$. Thus,
$\Psi_{0}$ is a critical point of $\mathcal{I}$. It is the purpose of this section to show that $\Psi_{0}$ realizes the
global minimum for $\mathcal{I}$.

\begin{theorem}\label{minimum}
Suppose that $\Psi=(U,v,\chi,\psi)$ is smooth and satisfies the asymptotics \eqref{17}-\eqref{19}, \eqref{57}-\eqref{59}
with $v|_{\Gamma}=v_{0}|_{\Gamma}$, $\chi|_{\Gamma}=\chi_{0}|_{\Gamma}$, $\psi|_{\Gamma}=\psi_{0}|_{\Gamma}$, then there
exists a constant $C>0$ such that
\begin{equation}\label{53}
\mathcal{I}(\Psi)-\mathcal{I}(\Psi_{0})
\geq C\left(\int_{\mathbb{R}^{3}}
\operatorname{dist}_{\mathbb{H}_{\mathbb{C}}^{2}}^{6}(\Psi,\Psi_{0})dx
\right)^{\frac{1}{3}}.
\end{equation}
\end{theorem}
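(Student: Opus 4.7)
The plan is to reduce the reduced-energy gap to a harmonic-energy gap via \eqref{51}, exploit the convexity of the harmonic energy along geodesic homotopies into the negatively curved target $\mathbb{H}_{\mathbb{C}}^{2}$, and conclude with the $\mathbb{R}^{3}$ Sobolev embedding. Writing $\tilde{\Psi}=(u,v,\chi,\psi)$ with $u=U-\log\rho$, both $\tilde{\Psi}$ and $\tilde{\Psi}_{0}$ send $\mathbb{R}^{3}\setminus\Gamma$ into $\mathbb{H}_{\mathbb{C}}^{2}$, and I set $f(x)=\operatorname{dist}_{\mathbb{H}_{\mathbb{C}}^{2}}(\tilde{\Psi}(x),\tilde{\Psi}_{0}(x))$.

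First I would work on exhausting domains $\Omega_{\varepsilon}=B_{1/\varepsilon}(0)\setminus(N_{\varepsilon}(\Gamma)\cup\bigcup_{n}B_{\varepsilon}(p_{n}))$, where $N_{\varepsilon}(\Gamma)$ is the Euclidean $\varepsilon$-tube around the axis, and form the geodesic homotopy $\tilde{\Psi}_{t}$ in $\mathbb{H}_{\mathbb{C}}^{2}$ joining $\tilde{\Psi}_{0}$ at $t=0$ to $\tilde{\Psi}$ at $t=1$. Its velocity $V_{t}=\partial_{t}\tilde{\Psi}_{t}$ satisfies $|V_{t}|\equiv f$ along each geodesic. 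The standard second-variation formula, together with the strictly negative sectional curvature of $\mathbb{H}_{\mathbb{C}}^{2}$, gives
\begin{equation*}
\frac{d^{2}}{dt^{2}}E_{\Omega_{\varepsilon}}(\tilde{\Psi}_{t})\geq\int_{\Omega_{\varepsilon}}|\nabla V_{t}|^{2}\geq\int_{\Omega_{\varepsilon}}|\nabla f|^{2},
\end{equation*}
the second step being Kato's inequality. Since $\tilde{\Psi}_{0}$ is harmonic on $\mathbb{R}^{3}\setminus\Gamma$, the first variation of $E_{\Omega_{\varepsilon}}$ at $t=0$ is a pure boundary integral, and integrating in $t$ yields
\begin{equation*}
E_{\Omega_{\varepsilon}}(\tilde{\Psi})-E_{\Omega_{\varepsilon}}(\tilde{\Psi}_{0})\geq\tfrac{1}{2}\int_{\Omega_{\varepsilon}}|\nabla f|^{2}+\mathcal{R}_{\varepsilon},
\end{equation*}
where $\mathcal{R}_{\varepsilon}$ collects boundary integrals over $\partial\Omega_{\varepsilon}$.

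Next I would convert this back to $\mathcal{I}$ using \eqref{51}. The difference
\begin{equation*}
\bigl(\mathcal{I}_{\Omega_{\varepsilon}}(\Psi)-E_{\Omega_{\varepsilon}}(\tilde{\Psi})\bigr)-\bigl(\mathcal{I}_{\Omega_{\varepsilon}}(\Psi_{0})-E_{\Omega_{\varepsilon}}(\tilde{\Psi}_{0})\bigr)
\end{equation*}
is itself a boundary integral that vanishes in the limit, because $U$ and $U_{0}$ share the same leading singularities by \eqref{17}--\eqref{19} and \eqref{3.14}. It then remains to show $\mathcal{R}_{\varepsilon}\to 0$. The contribution over $\partial B_{1/\varepsilon}(0)$ is killed by the asymptotic flatness assumed in \eqref{17} and \eqref{57}--\eqref{59}; the contributions over $\partial B_{\varepsilon}(p_{n})$ by \eqref{18}--\eqref{19}, the corollary after Lemma \ref{model}, and Proposition \ref{CLWestimates}; and the most delicate pieces over $\partial N_{\varepsilon}(\Gamma)$ by the refined axis asymptotics \eqref{3.5}--\eqref{3.7}, the hypothesized matching $(v,\chi,\psi)|_{\Gamma}=(v_{0},\chi_{0},\psi_{0})|_{\Gamma}$, and the weighted Poincar\'e inequalities of Lemma \ref{poincare}. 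This axis analysis is the principal obstacle, since the integrands carry singular weights in $\rho$ and the matching condition constrains the competitor only on a lower-dimensional set.

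Assuming $\mathcal{R}_{\varepsilon}\to 0$, passing to the limit $\varepsilon\to 0$ produces
\begin{equation*}
\mathcal{I}(\Psi)-\mathcal{I}(\Psi_{0})\geq\tfrac{1}{2}\int_{\mathbb{R}^{3}}|\nabla f|^{2}dx.
\end{equation*}
The matched asymptotics force $f\to 0$ at spatial infinity and permit at worst logarithmic growth of $f$ at each puncture, so $f\in L^{6}(\mathbb{R}^{3})$ with $\nabla f\in L^{2}(\mathbb{R}^{3})$; the Sobolev inequality in $\mathbb{R}^{3}$ then gives
\begin{equation*}
\left(\int_{\mathbb{R}^{3}}f^{6}dx\right)^{1/3}\leq C\int_{\mathbb{R}^{3}}|\nabla f|^{2}dx,
\end{equation*}
and combining the last two displays yields \eqref{53}.
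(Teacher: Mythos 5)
Your overall framework — geodesic homotopy into $\H^2_\C$, second variation plus nonpositive curvature, then Sobolev — is the same skeleton the paper uses (following Schoen--Zhou). But the paper does \emph{not} run this on exhausting domains with a boundary-term-vanishing argument. It first performs a three-step cut-and-paste (Lemmas~\ref{converge1}, \ref{converge2}, \ref{converge3}) replacing $\Psi$ by $\Psi_{\delta,\varepsilon}$ satisfying the compact-support conditions \eqref{54}, so that $U_{\delta,\varepsilon}-U_0$ is compactly supported and $(v_{\delta,\varepsilon},\chi_{\delta,\varepsilon},\psi_{\delta,\varepsilon})\equiv(v_0,\chi_0,\psi_0)$ near the axis and punctures. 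Only then is the convexity argument run, and the first-variation identity $\tfrac{d}{dt}\mathcal{I}(\Psi^t_{\delta,\varepsilon})|_{t=0}=0$ established via the Euler--Lagrange equations with controlled boundary terms as in \eqref{113}--\eqref{116}. Showing the cut-and-paste is energy-negligible — using the weighted Poincar\'e inequalities, the refined axis asymptotics, and the puncture estimate \eqref{3.2} — is where the bulk of the proof lives. Your plan replaces all of this with the assertion ``$\mathcal{R}_\varepsilon\to 0$,'' which you yourself flag as ``the principal obstacle'' without resolving it.

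The specific claim that the boundary integral
\begin{equation*}
\bigl(\mathcal{I}_{\Omega_{\varepsilon}}(\Psi)-E_{\Omega_{\varepsilon}}(\tilde{\Psi})\bigr)-\bigl(\mathcal{I}_{\Omega_{\varepsilon}}(\Psi_{0})-E_{\Omega_{\varepsilon}}(\tilde{\Psi}_{0})\bigr)
=\int_{\partial\Omega_\varepsilon}2(U-U_0)\,\partial_\nu\log\rho\, dA
\end{equation*}
``vanishes in the limit because $U$ and $U_0$ share the same leading singularities'' is false. On the tube $\rho=\varepsilon$, $\partial_\nu\log\rho=-1/\varepsilon$ while $dA=\varepsilon\, dz\, d\phi$, so this contribution tends to $-4\pi\int_\Gamma (U-U_0)\,dz$, which has no reason to vanish: the matching hypothesis of the theorem constrains only $v,\chi,\psi$ on $\Gamma$, not $U$. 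The shared singularity ($\log r_n$ at punctures) addresses the punctures, not the axis. Likewise the outer sphere $\partial B_{1/\varepsilon}$ contributes $\sim R\int_{S^2}(U-U_0)\,d\sigma$ with $R=1/\varepsilon$, and \eqref{3.11} gives only boundedness of $U_0$ away from the punctures (the paper never asserts a decay rate for $U_0$ at infinity), so this is not controlled either. The same pair of difficulties infects $\frac{d}{dt}E_{\Omega_\varepsilon}|_{t=0}$. Finally, the application of Sobolev to $f$ requires membership in the homogeneous $\dot H^1(\R^3)$; the paper secures this by using $\operatorname{dist}_{\H^2_\C}(\Psi_{\delta,\varepsilon},\Psi_0)$, which is compactly supported, and then passes to the limit via \eqref{118}--\eqref{127}, whereas your assertion that ``$f\to 0$ at spatial infinity'' is again unsupported for the same reason. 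In short: the cut-and-paste is not an optional shortcut but the mechanism that makes the boundary terms and the Sobolev step go through, and your outline omits the estimates that would have to replace it.
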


This theorem is analogous to that of Theorem 6.1 in \cite{SchoenZhou}, where the role of extreme Kerr-Newman is now
played by the (possibly) multiple black hole solution $\Psi_{0}$ constructed in the previous section. The proof in
\cite{SchoenZhou} is based on convexity of the harmonic energy under geodesic deformations; such a property is true
under general circumstances when the target space is nonpositively curved. More precisely, let $\delta,\varepsilon>0$ be
small parameters and set $\Omega_{\delta,\varepsilon}=\{\delta<r_{n}\text{ for }n=1,\ldots,N; r<2/\delta;
\rho>\varepsilon\}$ and $\mathcal{A}_{\delta,\varepsilon}=B_{2/\delta}\setminus
\Omega_{\delta,\varepsilon}$, where $B_{2/\delta}$ is the ball of radius $2/\delta$ centered at the origin. Via a cut-and-paste argument, it will be shown that we may assume
$\Psi$ satisfies
\begin{equation}\label{54}
\operatorname{supp}(U-U_{0})\subset B_{2/\delta},\text{ }\text{ }\text{ }\text{ }\text{ }
\operatorname{supp}(v-v_{0},\chi-\chi_{0},\psi-\psi_{0})\subset \Omega_{\delta,\varepsilon}.
\end{equation}
If $\tilde{\Psi}_{t}$, $t\in[0,1]$, is a geodesic in $\mathbb{H}_{\mathbb{C}}^{2}$ connecting
$\tilde{\Psi}_{1}=\tilde{\Psi}$ and $\tilde{\Psi}_{0}$ (this means that for each $x$ in the domain, $t\rightarrow\tilde{\Psi}_{t}(x)$ is a geodesic), then $\Psi_{t}\equiv\Psi_{0}$ outside $B_{2/\delta}$ and
$(v_{t},\chi_{t},\psi_{t})\equiv(v_{0},\chi_{0},\psi_{0})$
in a neighborhood of $\mathcal{A}_{\delta,\varepsilon}$, so that in particular $U_{t}=U_{0}+t(U-U_{0})$ on these
domains. This simple expression for $U_{t}$ together with convexity of the harmonic energy yields
\begin{equation}\label{55}
\frac{d^{2}}{dt^{2}}\mathcal{I}(\Psi_{t})
\geq 2\int_{\mathbb{R}^{3}}|\nabla\operatorname{dist}_{\mathbb{H}^{2}_{\mathbb{C}}}(\Psi,\Psi_{0})|^{2}dx.
\end{equation}
Moreover, the fact that $\Psi_{0}$ is a critical point implies that
\begin{equation}\label{56}
\frac{d}{dt}\mathcal{I}(\Psi_{t})|_{t=0}=0.
\end{equation}
Theorem \ref{minimum} then follows by integrating \eqref{55} and applying a Sobolev inequality. In the remainder of this
section we will justify each of these steps, following closely the strategy of \cite{SchoenZhou} in the case of a single
black hole. Most of the effort required to establish each step consists of estimating certain integrals. Here, however,
the techniques used for these estimates will be significantly different since $\Psi_{0}$ is not known explicitly,
whereas in the single black hole case $\Psi_{0}$ is explicit as it arises from the extreme Kerr-Newman solution.

Before proceeding we record the appropriate asymptotic behavior of $\Psi$. Asymptotics for $U$ are given in \eqref{17},
\eqref{18}, \eqref{19}, and if $\omega=dv+\chi d\psi-\psi d\chi$ then
\begin{equation}\label{57}
|\omega|=\rho^{2}O(r^{-\lambda}),\text{ }\text{ }\text{ }\text{ }|\nabla\chi|+|\nabla\psi|=\rho O(r^{-\lambda})\text{
}\text{ }\text{ as }\text{ }\text{ }r\rightarrow\infty,
\end{equation}
\begin{equation}\label{58}
|\omega|=\rho^{2}O(r_{n}^{\lambda-6}),\text{ }\text{ }\text{ }|\nabla\chi|+|\nabla\psi|=\rho O(r_{n}^{\lambda-4})\text{
}\text{ }\text{ as }\text{ }\text{ }r_{n}\rightarrow 0\text{ }\text{ }\text{ in asymptotically flat ends},
\end{equation}
\begin{equation}\label{58.1}
|\omega|=\rho^{2}O(r_{n}^{\lambda-5}),\text{ }\text{ }|\nabla\chi|+|\nabla\psi|=\rho O(r_{n}^{\lambda-3})\text{ }\text{
as }\text{ }r_{n}\rightarrow 0\text{ }\text{ in asymptotically cylindrical ends},
\end{equation}
\begin{equation}\label{59}
|\omega|=O(\rho^{2}),\text{ }\text{ }\text{ }\text{ }|\nabla\chi|+|\nabla\psi|=O(\rho) \text{ }\text{ }\text{ as }\text{
}\text{ }\rho\rightarrow 0\text{ }\text{ }\text{ in }\text{ }\text{ }\Omega_{\delta,\varepsilon}.
\end{equation}
Note that with these asymptotics $\mathcal{I}(\Psi)$ is finite precisely when $\lambda>\frac{3}{2}$. Moreover, one may
integrate along lines perpendicular to $\Gamma$ to find
\begin{equation}\label{60}
|\chi|,|\psi|=\text{const}+\rho^{2}O(r^{-\lambda})\text{ }\text{ }\text{ as }\text{ }\text{ }r\rightarrow\infty,
\end{equation}
\begin{equation}\label{61}
|\chi|,|\psi|=\text{const}+\rho^{2}O(r_{n}^{\lambda-4})\text{ }\text{ }\text{ as }\text{ }\text{ }r_{n}\rightarrow
0\text{ }\text{ }\text{ in asymptotically flat ends},
\end{equation}
\begin{equation}\label{61.1}
|\chi|,|\psi|=\text{const}+\rho^{2}O(r_{n}^{\lambda-3})\text{ }\text{ }\text{ as }\text{ }\text{ }r_{n}\rightarrow
0\text{ }\text{ }\text{ in asymptotically cylindrical ends},
\end{equation}
\begin{equation}\label{61.2}
|\chi|,|\psi|=\text{const}+O(\rho^{2})\text{ }\text{ }\text{ as }\text{ }\text{ }\rho\rightarrow 0\text{ }\text{
}\text{ in }\text{ }\text{ }\Omega_{\delta,\varepsilon},
\end{equation}
from which it follows that
\begin{equation}\label{62}
|\nabla v|=\rho O(r^{-\lambda+1})\text{ }\text{ }\text{ as }\text{ }\text{ }r\rightarrow\infty,
\end{equation}
\begin{equation}\label{63}
|\nabla v|=\rho O(r_{n}^{\lambda-5})\text{ }\text{ }\text{ as }\text{ }\text{ }r_{n}\rightarrow 0\text{ }\text{ }\text{
in asymptotically flat ends},
\end{equation}
\begin{equation}\label{63.1}
|\nabla v|=\rho O(r_{n}^{\lambda-4})\text{ }\text{ }\text{ as }\text{ }\text{ }r_{n}\rightarrow 0\text{ }\text{ }\text{
in asymptotically cylindrical ends},
\end{equation}
\begin{equation}\label{64}
|\nabla v|=\rho O(r^{-\lambda+1})\text{ }\text{ }\text{ as }\text{ }\text{ }\rho\rightarrow 0\text{ }\text{ }\text{ in
}\text{ }\text{ }\Omega_{\delta,\varepsilon}.
\end{equation}

In order to carry out the proof of Theorem \ref{minimum} as outlined above, we must first show that it is possible to
approximate $\mathcal{I}(\Psi)$ by replacing $\Psi$ with a map that satisfies \eqref{54}. This may be achieved as in
\cite{SchoenZhou} with a three step cut and paste argument. Define smooth cut-off functions
\begin{equation}\label{65}
\varphi_{\delta}^{1}=\begin{cases}
1 & \text{ if $r\leq\frac{1}{\delta}$,} \\
|\nabla\varphi_{\delta}^{1}|\leq 2\delta &
\text{ if $\frac{1}{\delta}<r<\frac{2}{\delta}$,} \\
0 & \text{ if $r\geq\frac{2}{\delta}$,} \\
\end{cases}
\end{equation}
\begin{equation}\label{66}
\varphi_{\delta}=\begin{cases}
0 & \text{ if $r_{n}\leq\delta$,} \\
|\nabla\varphi_{\delta}|\leq \frac{2}{\delta} &
\text{ if $\delta<r_{n}<2\delta$,} \\
1 & \text{ if $r_{n}\geq2\delta$,} \\
\end{cases}
\text{ }\text{ }\text{ for }\text{ }\text{ }n=1,\ldots,N,
\end{equation}
\begin{equation}\label{67}
\phi_{\varepsilon}=\begin{cases}
0 & \text{ if $\rho\leq\varepsilon$,} \\
\frac{\log(\rho/\varepsilon)}{\log(\sqrt{\varepsilon}/
\varepsilon)} &
\text{ if $\varepsilon<\rho<\sqrt{\varepsilon}$,} \\
1 & \text{ if $\rho\geq\sqrt{\varepsilon}$.} \\
\end{cases}
\end{equation}
The first step deals with the region $M_{\text{end}}^{0}$. Let
\begin{equation}\label{68}
F_{\delta}^{1}(\Psi)=\Psi_{0}
+\varphi_{\delta}^{1}(\Psi-\Psi_{0})=:
(U_{\delta}^{1},v_{\delta}^{1},\chi_{\delta}^{1}
,\psi_{\delta}^{1}),
\end{equation}
so that $F_{\delta}^{1}(\Psi)=\Psi_{0}$ on $\mathbb{R}^{3}\setminus B_{2/\delta}$.

\begin{lemma}\label{converge1}
$\lim_{\delta\rightarrow 0}\mathcal{I}(F_{\delta}^{1}(\Psi))=\mathcal{I}(\Psi).$
\end{lemma}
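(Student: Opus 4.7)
The plan is to split $\mathcal{I}(F_\delta^1(\Psi))$ into three pieces according to the cutoff: the ball $B_{1/\delta}$ where $F_\delta^1(\Psi)=\Psi$, the exterior $\mathbb{R}^3\setminus B_{2/\delta}$ where $F_\delta^1(\Psi)=\Psi_0$, and the transition annulus $A_\delta=\{1/\delta<r<2/\delta\}$. The interior integrals cancel identically with those of $\mathcal{I}(\Psi)$. On the exterior, both $\int_{\mathbb{R}^3\setminus B_{2/\delta}}\mathcal{E}'(\Psi)\,dx$ and $\int_{\mathbb{R}^3\setminus B_{2/\delta}}\mathcal{E}'(\Psi_0)\,dx$ tend to zero as $\delta\to 0$: the first since $\mathcal{I}(\Psi)<\infty$ under the hypothesis $\lambda>3/2$, and the second since $\mathcal{I}(\Psi_0)<\infty$ by Lemma \ref{model} and harmonicity of $\tilde{\Psi}_0$. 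Hence the problem reduces to showing $\int_{A_\delta}\mathcal{E}'(F_\delta^1(\Psi))\,dx\to 0$.

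On $A_\delta$, each gradient appearing in $\mathcal{E}'(F_\delta^1(\Psi))$ splits via Leibniz as $\varphi_\delta^1\nabla(\Psi-\Psi_0)+(\Psi-\Psi_0)\nabla\varphi_\delta^1$, producing a pointwise schematic bound
\begin{equation*}
\mathcal{E}'(F_\delta^1(\Psi))\leq C\bigl(\mathcal{E}'(\Psi)+\mathcal{E}'(\Psi_0)\bigr)+C|\nabla\varphi_\delta^1|^2\left(|U-U_0|^2+\frac{|v-v_0|^2}{\rho^4}+\frac{|\chi-\chi_0|^2+|\psi-\psi_0|^2}{\rho^2}\right),
\end{equation*}
where $C$ absorbs the bounded factors $e^{\pm 2U},e^{\pm 2U_0}$ together with the bounded quantities $|\chi|,|\psi|,|\chi_0|,|\psi_0|$ arising in the twist one-form cross terms. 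The first bracket integrates to $o(1)$ by the tail argument. For the second, I would invoke the asymptotics \eqref{17}, \eqref{57}, \eqref{60}, \eqref{62} for $\Psi$ together with matching rates for $\Psi_0$ that stem from the fact that $\tilde{\Psi}_0$ is asymptotic to the extreme Kerr-Newman model $\tilde{\Phi}_3$ at spatial infinity and takes the prescribed axis values on $\Gamma$. These give $|U-U_0|^2=o(r^{-1})$ and electromagnetic differences controlled by $\rho^4 O(r^{-2\lambda+2})$ and $\rho^2 O(r^{-2\lambda})$ on $A_\delta$. Combined with $|\nabla\varphi_\delta^1|^2=O(\delta^2)$, $|A_\delta|=O(\delta^{-3})$, and $r\sim 1/\delta$, a direct computation in spherical coordinates shows each contribution is bounded by $\delta^{2\lambda-3}$ (or an $o(1)$ quantity for the $U$-term), vanishing as $\delta\to 0$ for $\lambda>3/2$.

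The main obstacle is establishing the refined decay of $\Psi-\Psi_0$ at spatial infinity, since Proposition \ref{CLWestimates} supplies only the global bound $C/\rho^2$ on $\mathcal{E}'(\Psi_0)$ rather than the precise rates needed to control the $\nabla\varphi_\delta^1$ cross terms. These are the ``expected'' asymptotics indicated after \eqref{3.11}, and must be extracted either from the distance decay of $\tilde{\Psi}_0$ to the model $\tilde{\Phi}_3$ via elliptic regularity for harmonic maps into negatively curved targets, or from a direct analysis of the harmonic map equations using the matching axis data. Once this decay is secured, the remaining estimates reduce to the routine integrations sketched above.
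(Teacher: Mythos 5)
Your decomposition into interior, transition annulus, and exterior pieces, the treatment of the interior and exterior, and the Leibniz splitting on the annulus are all the same as in the paper. The divergence — and the gap — is in how the $|\nabla\varphi_\delta^1|^2$ cross terms are controlled.

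You correctly identify the obstacle yourself: your estimate of the cross terms requires refined \emph{pointwise} decay rates for $\Psi-\Psi_0$ on the annulus $1/\delta<r<2/\delta$, in particular rates of the form $|U-U_0|^2=o(r^{-1})$ and $|v-v_0|^2=\rho^4 O(r^{-2\lambda+2})$, etc. For $\Psi$ these follow from \eqref{17}, \eqref{57}--\eqref{62}. For $\Psi_0$, however, the paper explicitly remarks after \eqref{3.11} that such refined fall-off for $\tilde\Psi_0$ is only ``expected'' and not established; what is actually proved is the uniform bound \eqref{3.11} on $(U_0,v_0,\chi_0,\psi_0)$ and the interior gradient estimate of Proposition~\ref{CLWestimates}, neither of which gives decay of $\Psi-\Psi_0$ at the rate you need. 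Your sketch of how one ``would'' extract these rates (distance decay plus elliptic regularity, or direct analysis of the harmonic map system) is plausible but constitutes real unproven work, and as written the argument does not close.

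The paper's proof is designed precisely to avoid needing any such refined decay. For the $U$-term it applies H\"older in the form
\begin{equation*}
\int_{A_\delta}(U-U_0)^2|\nabla\varphi_\delta^1|^2 \le \left(\int_{A_\delta}(U-U_0)^6\right)^{1/3}\left(\int_{A_\delta}|\nabla\varphi_\delta^1|^3\right)^{2/3}
\end{equation*}
(the $L^3$ norm of $|\nabla\varphi_\delta^1|$ on the annulus is uniformly bounded since $|\nabla\varphi_\delta^1|\le 2\delta$ and $|A_\delta|\sim\delta^{-3}$), followed by the Gagliardo--Nirenberg--Sobolev inequality to bound the first factor by $\int|\nabla(U-U_0)|^2$, which tends to zero on the annulus by finiteness of reduced energy. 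This uses only $U,U_0\in H^1(\mathbb{R}^3)$, not decay rates. For the potential terms the paper invokes the weighted Poincar\'e inequality of Lemma~\ref{poincare} (together with the boundedness from \eqref{3.11}) to convert $\int r^{-2}\rho^{-4}|v-v_0|^2$ and $\int r^{-2}\rho^{-2}|\chi-\chi_0|^2$ into integrals appearing in the reduced energies of $\Psi$ and $\Psi_0$, again requiring only finiteness of reduced energy and boundedness of the potentials, not their decay. If you replace your ``refined asymptotics'' step with these two functional inequalities, your argument aligns with the paper's and the gap closes.
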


\begin{proof}
Write
\begin{equation}\label{69}
\mathcal{I}(F_{\delta}^{1}(\Psi))
=\mathcal{I}_{r\leq\frac{1}{\delta}}(F_{\delta}^{1}(\Psi))
+\mathcal{I}_{\frac{1}{\delta}< r<\frac{2}{\delta}}(F_{\delta}^{1}(\Psi))
+\mathcal{I}_{r\geq\frac{2}{\delta}}(F_{\delta}^{1}(\Psi)),
\end{equation}
and observe that $\mathcal{I}_{r\leq\frac{1}{\delta}}(F_{\delta}^{1}(\Psi))
\rightarrow \mathcal{I}(\Psi)$ by the dominated convergence theorem (DCT) and  since $\Psi_{0}$ has finite reduced
energy $\mathcal{I}_{r\geq\frac{2}{\delta}}(F_{\delta}^{1}(\Psi))\rightarrow 0$. Now write
\begin{equation}\label{70}
\mathcal{I}_{\frac{1}{\delta}< r<\frac{2}{\delta}}(F_{\delta}^{1}(\Psi))
=\underbrace{\int_{\frac{1}{\delta}<r<\frac{2}{\delta}}
|\nabla U_{\delta}^{1}|^{2}}_{I_{1}}
+\underbrace{\int_{\frac{1}{\delta}<r<\frac{2}{\delta}}
\frac{e^{4U_{\delta}^{1}}}{\rho^{4}}|\omega_{\delta}^{1}|^{2}}_{I_{2}}
+\underbrace{\int_{\frac{1}{\delta}<r<\frac{2}{\delta}}
\frac{e^{2U_{\delta}^{1}}}{\rho^{2}}(|\nabla\chi_{\delta}^{1}|^{2}
+|\nabla\psi_{\delta}^{1}|^{2})}_{I_{3}}.
\end{equation}
We have
\begin{equation}\label{71}
I_{1}\leq 2\int_{\frac{1}{\delta}<r<\frac{2}{\delta}}
\left(|\nabla U|^{2}+|\nabla U_{0}|^{2}
+(U-U_{0})^{2}\underbrace{|\nabla\varphi_{\delta}^{1}|^{2}}
_{\leq 4\delta^{2}}\right),
\end{equation}
where the first two terms converge to zero by the DCT and finite reduced energy of $\Psi_{0}$, respectively. For the
third term we may apply H\"{o}lder's inequality and the Gagliardo-Nirenberg-Sobolev inequality to find
\begin{equation}\label{72}
\int_{\frac{1}{\delta}<r<\frac{2}{\delta}}\!\!
(U-U_{0})^{2}\underbrace{|\nabla\varphi_{\delta}^{1}|^{2}}
_{\leq 4\delta^{2}}
\leq \! \left(\int_{\frac{1}{\delta}<r<\frac{2}{\delta}}\!\!
(U-U_{0})^{6}\right)^{\frac{1}{3}}\!
\left(\int_{\frac{1}{\delta}<r<\frac{2}{\delta}}\!\!
|\nabla\varphi_{\delta}^{1}|^{3}\right)^{\frac{2}{3}}
\!\leq C\!\int_{\frac{1}{\delta}<r<\frac{2}{\delta}}\!\!
|\nabla (U-U_{0})|^{2}\rightarrow 0.
\end{equation}
Note that the Gagliardo-Nirenberg-Sobolev inequality applies here since $U,U_{0}\in H^{1}(\mathbb{R}^{3})$ (the Sobolev
space of square integrable derivatives) are limits of compactly supported functions.

Now consider $I_{2}$, and write
\begin{align}\label{73}
\begin{split}
\omega_{\delta}^{1}=&\varphi_{\delta}^{1}\omega
+(1-\varphi_{\delta}^{1})\omega_{0}
+(v-v_{0})\nabla\varphi_{\delta}^{1}
+(\chi_{0}\psi-\psi_{0}\chi)\nabla\varphi_{\delta}^{1}
\\
&+\varphi_{\delta}^{1}(1-\varphi_{\delta}^{1})
[(\psi-\psi_{0})\nabla(\chi-\chi_{0})
-(\chi-\chi_{0})\nabla(\psi-\psi_{0})].
\end{split}
\end{align}
Using \eqref{3.11} and \eqref{65} produces
\begin{align}\label{74}
\begin{split}
I_{2}\leq& C\int_{\frac{1}{\delta}<r<\frac{2}{\delta}}
\rho^{-4}(|\omega|^{2}+|\omega_{0}|^{2}
+r^{-2}|v-v_{0}|^{2}
+r^{-2}|\chi_{0}\psi-\psi_{0}\chi|^{2}\\
&\text{ }\text{ }\text{ }\text{ }\text{ }\text{ }\text{ }\text{ }\text{ }\text{ }
+|\psi-\psi_{0}|^{2}|\nabla(\chi-\chi_{0})|^{2}
+|\chi-\chi_{0}|^{2}|\nabla(\psi-\psi_{0})|^{2}).
\end{split}
\end{align}
The first and second terms converge to zero by the DCT and finite reduced energy of $\Psi_{0}$. Next, by \eqref{3.5} and
\eqref{64} we have $\rho^{-\frac{3}{2}}|v-v_{0}|\rightarrow 0$ as $\rho\rightarrow 0$ so that
Lemma \ref{poincare} applies, together with \eqref{3.11} to show
\begin{align}\label{75}
\begin{split}
\int_{\frac{1}{\delta}<r<\frac{2}{\delta}}
\frac{|v-v_{0}|^{2}}{r^{2}\rho^{4}}\leq &
C\int_{\frac{1}{\delta}<r<\frac{2}{\delta}}
\frac{|\nabla(v-v_{0})|^{2}}{r^{2}\rho^{2}}\\
\leq& C\int_{\frac{1}{\delta}<r<\frac{2}{\delta}}
\frac{e^{4U}}{\rho^{4}}|\omega|^{2}
+\frac{e^{4U_{0}}}{\rho^{4}}|\omega_{0}|^{2}\\
&+C\int_{\frac{1}{\delta}<r<\frac{2}{\delta}}
\frac{e^{2U}}{r^{2}\rho^{2}}(|\nabla\chi|^{2}
+|\nabla\psi|^{2})
+\frac{e^{2U_{0}}}{r^{2}\rho^{2}}(|\nabla\chi_{0}|^{2}
+|\nabla\psi_{0}|^{2})\\
&\rightarrow 0,
\end{split}
\end{align}
where the DCT and finite reduced energy were used in the last step. A similar argument holds for the fourth term on the
right-hand side of \eqref{74}, whereas the fifth and sixth terms may be directly estimated by terms in the reduced
energy of $\Psi$ and $\Psi_{0}$. It follows that $I_{2}\rightarrow 0$.

Consider the first term in the integral $I_{3}$ and write
\begin{equation}\label{76}
\nabla\chi^{1}_{\delta}=\nabla\chi_{0}
+(\chi-\chi_{0})\nabla\varphi_{\delta}^{1}
+\varphi_{\delta}^{1}\nabla(\chi-\chi_{0}),
\end{equation}
so that
\begin{equation}\label{77}
\int_{\frac{1}{\delta}<r<\frac{2}{\delta}}
\frac{e^{2U_{\delta}^{1}}}{\rho^{2}}|\nabla\chi_{\delta}^{1}|^{2}
\leq  C\int_{\frac{1}{\delta}<r<\frac{2}{\delta}}
\left(\frac{|\nabla\chi_{0}|^{2}}{\rho^{2}}
+\frac{|\chi-\chi_{0}|^{2}}{r^{2}\rho^{2}}
+\frac{|\nabla(\chi-\chi_{0})|^{2}}{\rho^{2}}\right).
\end{equation}
The first and third terms on the right-hand side may be estimated in terms of the reduced energy. The same is true of
the second term, after an application of Lemma \ref{poincare}. Since similar considerations hold for the second term in
$I_{3}$, it follows that $I_{3}\rightarrow 0$.
\end{proof}

Consider now small balls centered at the punctures $p_{n}$. Let
\begin{equation}\label{78}
F_{\delta}(\Psi)=(U,v_{\delta},\chi_{\delta},\psi_{\delta})
\end{equation}
where
\begin{equation}\label{79}
(v_{\delta},\chi_{\delta},\psi_{\delta})=
(v_{0},\chi_{0},\psi_{0})
+\varphi_{\delta}(v-v_{0},\chi-\chi_{0},\psi-\psi_{0}),
\end{equation}
so that $F_{\delta}(\Psi)=\Psi_{0}$ on $\cup_{n=1}^{N}B_{\delta}(p_{n})$.

\begin{lemma}\label{converge2}
$\lim_{\delta\rightarrow 0}\mathcal{I}(F_{\delta}(\Psi))=\mathcal{I}(\Psi).$ This also holds if $\Psi\equiv\Psi_{0}$
outside $B_{2/\delta}$.
\end{lemma}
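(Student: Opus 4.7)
The plan is to follow the three-region decomposition strategy used in Lemma~\ref{converge1}, with the ``transition annulus'' now being the spherical shells $\{\delta < r_n < 2\delta\}$ surrounding each puncture $p_n$ rather than the shell at infinity. Split $\mathcal{I}(F_\delta(\Psi))$ into integrals over the outer region $\{r_n \ge 2\delta\text{ for all }n\}$, the annular region, and the inner region $\{r_n \le \delta\text{ for some }n\}$. On the outer region $F_\delta(\Psi) = \Psi$, so this piece converges to $\mathcal{I}(\Psi)$ by the dominated convergence theorem (DCT). It remains to show the other two pieces vanish as $\delta \to 0$.

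For the inner piece, $F_\delta(\Psi) = (U, v_0, \chi_0, \psi_0)$. The $|\nabla U|^2$ contribution vanishes by DCT since $|\nabla U|^2 \in L^1$. For the potential terms, write
\begin{equation*}
\frac{e^{4U}}{\rho^4}|\omega_0|^2 = e^{4(U-U_0)} \cdot \frac{e^{4U_0}}{\rho^4}|\omega_0|^2, \qquad \frac{e^{2U}}{\rho^2}|\nabla\chi_0|^2 = e^{2(U-U_0)} \cdot \frac{e^{2U_0}}{\rho^2}|\nabla\chi_0|^2,
\end{equation*}
and similarly for $\nabla\psi_0$. The asymptotics \eqref{18}--\eqref{19} for $U$ together with $U_0 = \log r_n + O(1)$ from \eqref{3.14} yield $e^{4(U-U_0)} \le C r_n^4$ in the asymptotically flat case and $e^{4(U-U_0)} = O(1)$ in the cylindrical case, with analogous bounds for $e^{2(U-U_0)}$. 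In either regime, the finiteness of $\mathcal{I}(\Psi_0)$ combined with DCT delivers the vanishing.

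For the annular piece, expand
\begin{equation*}
\omega_\delta = \varphi_\delta\omega + (1-\varphi_\delta)\omega_0 + \bigl[\tilde v + \chi_0\tilde\psi - \psi_0\tilde\chi\bigr]\nabla\varphi_\delta - \varphi_\delta(1-\varphi_\delta)\bigl[\tilde\chi\nabla\tilde\psi - \tilde\psi\nabla\tilde\chi\bigr],
\end{equation*}
where $\tilde v = v - v_0$, $\tilde\chi = \chi - \chi_0$, $\tilde\psi = \psi - \psi_0$, and similarly $\nabla\chi_\delta = \varphi_\delta\nabla\chi + (1-\varphi_\delta)\nabla\chi_0 + \tilde\chi\nabla\varphi_\delta$ (and analog for $\nabla\psi_\delta$). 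The ``bulk'' contributions involving $\omega, \omega_0, \nabla\chi, \nabla\chi_0, \nabla\psi, \nabla\psi_0$ are handled exactly as in the inner region step above. The delicate terms are those where $|\nabla\varphi_\delta|^2 \le 4/\delta^2$ multiplies $\tilde v^2$, $(\chi_0^2 + \psi_0^2)(\tilde\chi^2 + \tilde\psi^2)$, $\tilde\chi^2$, or $\tilde\psi^2$. The hypotheses $v|_\Gamma = v_0|_\Gamma$, $\chi|_\Gamma = \chi_0|_\Gamma$, $\psi|_\Gamma = \psi_0|_\Gamma$ force $\tilde v, \tilde\chi, \tilde\psi$ to vanish on the axis at rate $O(\rho^2)$ (by \eqref{59}--\eqref{64} and \eqref{3.5}--\eqref{3.7}), so the first statement of Lemma~\ref{poincare} applies with appropriately chosen $\beta$ and weight $h(r)$. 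This trades these cross terms for gradient integrals such as $\int |\nabla\tilde v|^2/(\rho^2\delta^2)$ and $\int |\nabla\tilde\chi|^2/\delta^2$ (with suitable weights), each of which is dominated by terms of the reduced energies of $\Psi$ and $\Psi_0$, and hence vanishes in the limit by DCT.

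The principal obstacle is the annular analysis: verifying that the positive powers of $\delta$ supplied by $e^{4(U-U_0)}$ (in the AF case) or by direct DCT (in the cylindrical case) actually defeat the $\delta^{-2}$ blowup of $|\nabla\varphi_\delta|^2$, and that the tangential Poincar\'{e} inequality can be applied with weights consistent with the annular geometry. The addendum that the lemma holds when $\Psi \equiv \Psi_0$ outside $B_{2/\delta}$ requires no modification, since the cutoff $\varphi_\delta$ affects only the region near the punctures.
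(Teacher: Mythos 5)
Your three-region decomposition, your treatment of the outer region, and your inner-region argument using the pointwise comparison $e^{U}\leq Ce^{U_{0}}$ (from \eqref{18}, \eqref{19}, \eqref{3.14}) all match the paper's argument — that part is correct and takes essentially the same route the paper does (the paper states the comparison as equation \eqref{82.1}).

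However, the annular analysis has a gap that centers on the one term in $|\omega_{\delta}|^{2}$ that you do not classify. When you square
\begin{equation*}
\omega_{\delta}=\varphi_{\delta}\omega + (1-\varphi_{\delta})\omega_{0}
+ \bigl[\tilde v + \chi_{0}\tilde\psi - \psi_{0}\tilde\chi\bigr]\nabla\varphi_{\delta}
- \varphi_{\delta}(1-\varphi_{\delta})\bigl[\tilde\chi\nabla\tilde\psi - \tilde\psi\nabla\tilde\chi\bigr],
\end{equation*}
the contribution $\varphi_{\delta}^{2}(1-\varphi_{\delta})^{2}|\tilde\chi\nabla\tilde\psi - \tilde\psi\nabla\tilde\chi|^{2}$ is neither among your ``bulk'' terms (it carries the extra factors $|\tilde\chi|^{2}$, $|\tilde\psi|^{2}$ against the $\rho^{-4}$ weight, not the $\rho^{-2}$ weight that the reduced energy controls) nor among your ``delicate'' terms (it involves no $|\nabla\varphi_{\delta}|^{2}$). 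To absorb it into the reduced energies one needs the pointwise bound $\rho^{-2}e^{2U}|\tilde\chi|^{2}\leq C$ (and likewise for $\tilde\psi$) on the annulus $\delta < r_{n} < 2\delta$, which is exactly what the paper establishes as \eqref{87} via the mean-value argument \eqref{88}, using the gradient estimates \eqref{3.2} for $\Psi_{0}$ and \eqref{58.1} (resp.\ \eqref{58}) for $\Psi$.

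Relatedly, your justification for the boundary conditions in Lemma~\ref{poincare} — that ``$\tilde v,\tilde\chi,\tilde\psi$ vanish on the axis at rate $O(\rho^{2})$'' — is correct only away from the punctures, where \eqref{3.5} and \eqref{61.2} apply with uniform constants. On the annulus $\delta < r_{n} < 2\delta$, integrating \eqref{3.2} from the axis gives $|\tilde\chi|\lesssim \rho/r_{n}=\sin\theta$, and the implicit constant in any ``$O(\rho^{2})$'' estimate degenerates as $r_{n}\to 0$. The sharp available decay near the punctures is $O(\sin\theta)$, not $O(\rho^{2})$. That weaker rate is still enough for both the weighted Poincar\'{e} step and the cross term discussed above, but it needs to be established — via the mean-value argument of \eqref{88} — rather than cited from \eqref{3.5}--\eqref{3.7}, which are local to $\Gamma'$ away from punctures. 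Filling in this estimate and then applying it to the unclassified term in $|\omega_\delta|^2$ would close the proof.
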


\begin{proof}
Write
\begin{equation}\label{80}
\mathcal{I}(F_{\delta}(\Psi))
=\sum_{n=1}^{N}\left[\mathcal{I}_{r_{n}\leq\delta}(F_{\delta}(\Psi))
+\mathcal{I}_{\delta< r_{n}<2\delta}(F_{\delta}(\Psi))
+\mathcal{I}_{r_{n}\geq2\delta}(F_{\delta}(\Psi))\right],
\end{equation}
and observe that by DCT
\begin{equation}\label{81}
\sum_{n=1}^{N}\mathcal{I}_{r_{n}\geq2\delta}(F_{\delta}(\Psi))
=\sum_{n=1}^{N}\mathcal{I}_{r_{n}\geq2\delta}(\Psi)
\rightarrow \mathcal{I}(\Psi).
\end{equation}
Moreover
\begin{equation}\label{82}
\mathcal{I}_{r_{n}\leq\delta}(F_{\delta}(\Psi))=
\int_{r_{n}\leq\delta}|\nabla U|^{2}
+\frac{e^{4U}}{\rho^{4}}|\omega_{0}|^{2}
+\frac{e^{2U}}{\rho^{2}}(|\nabla\chi_{0}|^{2}
+|\nabla\psi_{0}|^{2}),
\end{equation}
where the first term on the right-hand side converges to zero again by DCT. The second and third terms may be estimated
by the reduced energy of $\Psi_{0}$ (and hence also converge to zero), since the asymptotics \eqref{18}, \eqref{19}, and
\eqref{3.14} imply that
\begin{equation}\label{82.1}
e^{U}\leq c e^{U_{0}}
\end{equation}
near each puncture.

Now write
\begin{equation}\label{83}
\mathcal{I}_{\delta< r_{n}<2\delta}(F_{\delta}(\Psi))
=\underbrace{\int_{\delta< r_{n}<2\delta}|\nabla U|^{2}}_{I_{1}}
+\underbrace{\int_{\delta< r_{n}<2\delta}
\frac{e^{4U}}{\rho^{4}}|\omega_{\delta}|^{2}}_{I_{2}}
+\underbrace{\int_{\delta< r_{n}<2\delta}
\frac{e^{2U}}{\rho^{2}}(|\nabla\chi_{\delta}|^{2}
+|\nabla\psi_{\delta}|^{2})}_{I_{3}},
\end{equation}
and observe that $I_{1}\rightarrow 0$ by DCT.
In order to estimate $I_{2}$, write
\begin{align}\label{84}
\begin{split}
\omega_{\delta}=&\varphi_{\delta}\omega
+(1-\varphi_{\delta})\omega_{0}
+(v-v_{0})\nabla\varphi_{\delta}
+(\chi_{0}\psi-\psi_{0}\chi)\nabla\varphi_{\delta}
\\
&+\varphi_{\delta}(1-\varphi_{\delta})
[(\psi-\psi_{0})\nabla(\chi-\chi_{0})
-(\chi-\chi_{0})\nabla(\psi-\psi_{0})].
\end{split}
\end{align}
Using \eqref{66} and \eqref{82.1} produces
\begin{align}\label{85}
\begin{split}
I_{2}\leq& C\int_{\delta<r_{n}<2\delta}
\left(\frac{e^{4U}}{\rho^{4}}|\omega|^{2}
+\frac{e^{4U_{0}}}{\rho^{4}}|\omega_{0}|^{2}
+\frac{e^{4U}}{r_{n}^{2}\rho^{4}}|v-v_{0}|^{2}
+\frac{e^{4U}}{r_{n}^{2}\rho^{4}}|\chi_{0}\psi-\psi_{0}\chi|^{2}\right)\\
&+C\int_{\delta<r_{n}<2\delta}
\frac{e^{4U}}{\rho^{4}}\left(|\psi-\psi_{0}|^{2}|\nabla(\chi-\chi_{0})|^{2}
+|\chi-\chi_{0}|^{2}|\nabla(\psi-\psi_{0})|^{2}\right).
\end{split}
\end{align}
The first and second terms converge to zero by the DCT and finite reduced energy of $\Psi_{0}$.

Next, assume that $p_{n}$ represents a cylindrical end, so that both $U,U_{0}\sim\log r_{n}$.
By \eqref{3.5} and \eqref{64} we have $\rho^{-\frac{3}{2}}|v-v_{0}|\rightarrow 0$ as $\rho\rightarrow 0$ so that
Lemma \ref{poincare} applies to yield
\begin{align}\label{86}
\begin{split}
\int_{\delta<r_{n}<2\delta}
\frac{e^{4U}}{r_{n}^{2}\rho^{4}}|v-v_{0}|^{2}\leq &C\int_{\delta<r_{n}<2\delta}
\frac{r_{n}^{2}}{\rho^{4}}|v-v_{0}|^{2}\\
\leq &C\int_{\delta<r_{n}<2\delta}
\frac{r_{n}^{2}}{\rho^{2}}|\nabla(v-v_{0})|^{2}\\
\leq& C\int_{\delta<r_{n}<2\delta}
\frac{e^{4U}}{\rho^{4}}|\omega|^{2}
+\frac{e^{4U_{0}}}{\rho^{4}}|\omega_{0}|^{2}\\
&+C\int_{\delta<r_{n}<2\delta}
\frac{e^{2U}}{\rho^{2}}(|\nabla\chi|^{2}
+|\nabla\psi|^{2})
+\frac{e^{2U_{0}}}{\rho^{2}}(|\nabla\chi_{0}|^{2}
+|\nabla\psi_{0}|^{2})\\
&\rightarrow 0,
\end{split}
\end{align}
where the DCT and finite reduced energy were used in the last step. A similar argument holds for the fourth term on the
right-hand side of \eqref{85}. The fifth and sixth terms may be directly estimated by terms in the reduced energy of
$\Psi$ and $\Psi_{0}$, since
\begin{equation}\label{87}
\frac{|\chi-\chi_{0}|}{\sin\theta}
+\frac{|\psi-\psi_{0}|}{\sin\theta}\leq C.
\end{equation}
To establish this, observe that by the mean value theorem
\begin{equation}\label{88}
|(\chi-\chi_{0})(r_{n},\theta)|
=\theta|\partial_{\theta}(\chi-\chi_{0})(r_{n},\theta')|
\leq Cr_{n}\theta(|\nabla\chi(r_{n},\theta')|+|\nabla\chi_{0}(r_{n},\theta')|)
\leq C\theta
\end{equation}
for some $\theta'<\theta$, where we have used \eqref{3.2} and \eqref{58.1}. Performing a similar calculation based at
$\theta=\pi$, then yields \eqref{87}, after noting that $\psi,\psi_{0}$ behave analogously to $\chi,\chi_{0}$. Lastly,
in the case that $p_{n}$ represents an asymptotically flat end, similar computations yield the desired result. It
follows that $I_{2}\rightarrow 0$.

Consider the first term in the integral $I_{3}$ and write
\begin{equation}\label{89}
\nabla\chi_{\delta}=\nabla\chi_{0}
+(\chi-\chi_{0})\nabla\varphi_{\delta}
+\varphi_{\delta}\nabla(\chi-\chi_{0}),
\end{equation}
so that
\begin{equation}\label{90}
\int_{\delta<r_{n}<2\delta}
\frac{e^{2U}}{\rho^{2}}|\nabla\chi_{\delta}|^{2}
\leq  C\int_{\delta<r_{n}<2\delta}
\left(\frac{e^{2U_{0}}}{\rho^{2}}|\nabla\chi_{0}|^{2}
+\frac{e^{2U}}{r_{n}^{2}\rho^{2}}|\chi-\chi_{0}|^{2}
+\frac{e^{2U}}{\rho^{2}}|\nabla(\chi-\chi_{0})|^{2}\right).
\end{equation}
The first and third terms on the right-hand side may be estimated in terms of the reduced energy. The same is true of
the second term, after an application of Lemma \ref{poincare} as above. Since similar considerations hold for the second
term in $I_{3}$, and it follows that $I_{3}\rightarrow 0$.
\end{proof}

Consider now cylindrical regions around the axis $\Gamma$ and away from the punctures given by
\begin{equation}\label{91}
\mathcal{C}_{\delta,\varepsilon}=
\{\rho\leq\varepsilon\}\cap
\{\delta\leq r_{n}\text{ for }n=1,\ldots,N; r\leq 2/\delta\},
\end{equation}
\begin{equation}\label{91.1}
\mathcal{W}_{\delta,\varepsilon}=
\{\varepsilon\leq \rho\leq\sqrt{\varepsilon}\}\cap
\{\delta\leq r_{n}\text{ for }n=1,\ldots,N; r\leq 2/\delta\}.
\end{equation}
Let
\begin{equation}\label{92}
G_{\varepsilon}(\Psi)=(U,v_{\varepsilon},
\chi_{\varepsilon},\psi_{\varepsilon})
\end{equation}
where
\begin{equation}\label{93}
(v_{\varepsilon},
\chi_{\varepsilon},\psi_{\varepsilon})=
(v_{0},\chi_{0},\psi_{0})
+\phi_{\varepsilon}(v-v_{0},\chi-\chi_{0},\psi-\psi_{0}),
\end{equation}
so that $G_{\varepsilon}(\Psi)=\Psi_{0}$ on $\rho\leq\varepsilon$.

\begin{lemma}\label{converge3}
Fix $\delta>0$ and suppose that $\Psi\equiv\Psi_{0}$ on $\cup_{n=1}^{N}B_{\delta}(p_{n})$, then
$\lim_{\varepsilon\rightarrow 0}\mathcal{I}(G_{\varepsilon}(\Psi))=\mathcal{I}(\Psi)$. This also holds if
$\Psi\equiv\Psi_{0}$ outside $B_{2/\delta}$.
\end{lemma}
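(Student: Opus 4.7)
The plan is to mimic the structure of Lemmas \ref{converge1} and \ref{converge2}: decompose
\begin{equation*}
\mathcal{I}(G_{\varepsilon}(\Psi))=\mathcal{I}_{\rho\geq\sqrt{\varepsilon}}(G_{\varepsilon}(\Psi))
+\mathcal{I}_{\varepsilon<\rho<\sqrt{\varepsilon}}(G_{\varepsilon}(\Psi))
+\mathcal{I}_{\rho\leq\varepsilon}(G_{\varepsilon}(\Psi)),
\end{equation*}
and show that the first piece converges to $\mathcal{I}(\Psi)$ while the other two vanish as $\varepsilon\to 0$. The outer piece is immediate since $\phi_{\varepsilon}\equiv 1$ on $\rho\geq\sqrt{\varepsilon}$, so $G_{\varepsilon}(\Psi)=\Psi$ there and dominated convergence (DCT) together with the finiteness of $\mathcal{I}(\Psi)$ give the claim.

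For the innermost piece $\rho\leq\varepsilon$, the map $G_{\varepsilon}(\Psi)=(U,v_{0},\chi_{0},\psi_{0})$, so
\begin{equation*}
\mathcal{I}_{\rho\leq\varepsilon}(G_{\varepsilon}(\Psi))
=\int_{\rho\leq\varepsilon}|\nabla U|^{2}
+\int_{\rho\leq\varepsilon}\frac{e^{4U}}{\rho^{4}}|\omega_{0}|^{2}
+\int_{\rho\leq\varepsilon}\frac{e^{2U}}{\rho^{2}}\left(|\nabla\chi_{0}|^{2}+|\nabla\psi_{0}|^{2}\right).
\end{equation*}
The first term tends to $0$ by DCT. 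For the remaining two, the working region $\cap\{\delta\leq r_{n}\leq 2/\delta\}$ is bounded away from the punctures and infinity, so the asymptotics \eqref{3.5}--\eqref{3.7} give $e^{U_{0}}\sim\rho$ there, and the hypothesis $v|_{\Gamma}=v_{0}|_{\Gamma}$, $\chi|_{\Gamma}=\chi_{0}|_{\Gamma}$, $\psi|_{\Gamma}=\psi_{0}|_{\Gamma}$ together with the asymptotics \eqref{17}--\eqref{19} and a parallel analysis of $U$ yield a pointwise comparison $e^{U}\leq C e^{U_{0}}$ on a neighborhood of the axis away from punctures. Hence both terms are controlled by the reduced energy density of $\Psi_{0}$, and the region shrinks to the axis, so DCT gives vanishing.

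The main work, as in Lemmas \ref{converge1} and \ref{converge2}, is the transition region $\mathcal{W}_{\delta,\varepsilon}=\{\varepsilon<\rho<\sqrt{\varepsilon}\}\cap\{\delta\leq r_{n}\leq 2/\delta\}$. Split into $I_{1}=\int |\nabla U|^{2}$, $I_{2}=\int (e^{4U}/\rho^{4})|\omega_{\varepsilon}|^{2}$, and $I_{3}=\int (e^{2U}/\rho^{2})(|\nabla\chi_{\varepsilon}|^{2}+|\nabla\psi_{\varepsilon}|^{2})$. Then $I_{1}\to 0$ by DCT. For $I_{2}$, expand
\begin{align*}
\omega_{\varepsilon}=&\,\phi_{\varepsilon}\omega+(1-\phi_{\varepsilon})\omega_{0}
+(v-v_{0})\nabla\phi_{\varepsilon}+(\chi_{0}\psi-\psi_{0}\chi)\nabla\phi_{\varepsilon}\\
&+\phi_{\varepsilon}(1-\phi_{\varepsilon})\bigl[(\psi-\psi_{0})\nabla(\chi-\chi_{0})-(\chi-\chi_{0})\nabla(\psi-\psi_{0})\bigr],
\end{align*}
and similarly for $\nabla\chi_{\varepsilon}$ and $\nabla\psi_{\varepsilon}$ in $I_{3}$. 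The contributions from $\phi_{\varepsilon}\omega$ and $(1-\phi_{\varepsilon})\omega_{0}$ vanish by DCT and finite reduced energy, and the quadratic cross terms are absorbed into the reduced energies of $\Psi$ and $\Psi_{0}$ using the bounds $e^{U}\leq Ce^{U_{0}}$, the axis asymptotics \eqref{3.5}, \eqref{61.2}, \eqref{64}, and Lemma \ref{poincare}.

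The crux is the $\nabla\phi_{\varepsilon}$ terms, and this is where I expect the main obstacle. Using the explicit profile \eqref{67}, one has $|\nabla\phi_{\varepsilon}|\leq C/(\rho\log(1/\sqrt{\varepsilon}))$ on $\mathcal{W}_{\delta,\varepsilon}$. Combining with the axis asymptotics $|v-v_{0}|=O(\rho^{2})$, $|\chi-\chi_{0}|,|\psi-\psi_{0}|=O(\rho^{2})$ from \eqref{3.5} and the matching axis-value hypotheses, the bound $e^{4U}/\rho^{4}\leq C$ near the axis (since $e^{U}\sim\rho$), and $r_{n}\geq\delta$ throughout, a direct computation yields for instance
\begin{equation*}
\int_{\mathcal{W}_{\delta,\varepsilon}}\frac{e^{4U}}{\rho^{4}}|v-v_{0}|^{2}|\nabla\phi_{\varepsilon}|^{2}
\leq \frac{C}{\log^{2}(1/\sqrt{\varepsilon})}\int_{\mathcal{W}_{\delta,\varepsilon}}\rho\,dx
\leq \frac{C_{\delta}\,\varepsilon^{2}}{\log^{2}(1/\sqrt{\varepsilon})}\to 0,
\end{equation*}
and the remaining $\nabla\phi_{\varepsilon}$ terms (including the product $\chi_{0}\psi-\psi_{0}\chi=\chi_{0}(\psi-\psi_{0})-\psi_{0}(\chi-\chi_{0})$ after subtracting the axis-value constants using \eqref{3.11} and \eqref{61.2}) are handled analogously. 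Collecting everything, $I_{1},I_{2},I_{3}\to 0$, completing the proof. The case when $\Psi\equiv\Psi_{0}$ outside $B_{2/\delta}$ follows by the same estimates, with the outer region now contributing nothing.
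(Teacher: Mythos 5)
Your decomposition and strategy match the paper's proof almost step for step: cut into an inner cylinder $\{\rho\leq\varepsilon\}$, a transition shell $\mathcal{W}_{\delta,\varepsilon}=\{\varepsilon<\rho<\sqrt{\varepsilon}\}$, and the outer region where $\phi_\varepsilon\equiv 1$; expand $\omega_\varepsilon$ and $\nabla\chi_\varepsilon$ as you do; control the $\nabla\phi_\varepsilon$ terms with the logarithmic profile \eqref{67} together with the axis estimates $|v-v_0|,|\chi-\chi_0|,|\psi-\psi_0|=O(\rho^2)$ from \eqref{3.5}, \eqref{61.2}, \eqref{64} and the matching boundary values. So the approach is the paper's.

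However there is a genuine error in a key estimate. You assert ``$e^{4U}/\rho^4\leq C$ near the axis (since $e^U\sim\rho$).'' This is false. In Brill coordinates $g_{\phi\phi}=\rho^2 e^{-2U}$, so smoothness of the metric across the axis forces $e^{-2U}$ to be bounded and non-degenerate there, i.e.\ $U=O(1)$ and $e^U\sim 1$, not $e^U\sim\rho$. (Equivalently, $u=U-\log\rho\to+\infty$, so $e^{4u}=e^{4U}/\rho^4\sim\rho^{-4}$.) This is exactly what the paper records in \eqref{97}: $|U|+|U_0|\leq C$ away from the punctures, hence $e^{4U}/\rho^4\leq C\rho^{-4}$. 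With this corrected bound the $\nabla\phi_\varepsilon$ term becomes
\begin{equation*}
\int_{\mathcal{W}_{\delta,\varepsilon}}\frac{e^{4U}}{\rho^4}|v-v_0|^2|\nabla\phi_\varepsilon|^2
\leq \frac{C}{\log^2(1/\sqrt{\varepsilon})}\int_{\mathcal{W}_{\delta,\varepsilon}}\frac{dx}{\rho^2}
\leq \frac{C}{\log(1/\sqrt{\varepsilon})}\to 0,
\end{equation*}
which is the paper's rate in \eqref{101} and is borderline: the integrand is $\sim\rho^{-2}$, which with $dx=\rho\,d\rho\,dz\,d\phi$ produces a $\log$ that exactly cancels one power of the normalizing $\log$. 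Your claimed rate $\varepsilon^2/\log^2(1/\sqrt{\varepsilon})$ is far too strong, and the error is not merely arithmetic --- if $e^{4U}/\rho^4$ really were bounded, the logarithmic cutoff in \eqref{67} would be unnecessary and a linear cutoff would suffice, which contradicts the design of the argument. The conclusion $I_2\to 0$ survives, so your proof can be repaired by replacing the incorrect bound with \eqref{97} and redoing the power counting, but as written the crucial estimate is wrong and the stated dependence on $\varepsilon$ is misleading.

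A smaller gap: for the innermost piece you silently restrict attention to $\{\delta\leq r_n\leq 2/\delta\}$ when discussing the $\omega_0$ and $\nabla\chi_0$ terms. You should note that on $\cup_n B_\delta(p_n)$ the hypothesis gives $G_\varepsilon(\Psi)=\Psi_0=\Psi$ (so that part is handled by dominated convergence for $\mathcal{I}(\Psi)$), and that outside $B_{2/\delta}$ both $U$ and $U_0$ are bounded by \eqref{17} and \eqref{3.11}, so the same comparison $e^U\leq Ce^{U_0}$ applies; otherwise the reduction to the central region is unjustified.
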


\begin{proof}
Write
\begin{equation}\label{94}
\mathcal{I}(G_{\varepsilon}(\Psi))
=\mathcal{I}_{\mathcal{C}_{\delta,\varepsilon}}(G_{\varepsilon}(\Psi))
+\mathcal{I}_{\mathcal{W}_{\delta,\varepsilon}}(G_{\varepsilon}(\Psi))
+\mathcal{I}_{\mathbb{R}^{3}\setminus(\mathcal{C}_{\delta,\varepsilon}
\cup\mathcal{W}_{\delta,\varepsilon})}(G_{\varepsilon}(\Psi)).
\end{equation}
Since
$\Psi\equiv\Psi_{0}$ on $\cup_{n=1}^{N}B_{\delta}(p_{n})$, the DCT and finite energy of $\Psi_{0}$ imply that
\begin{equation}\label{95}
\mathcal{I}_{\mathbb{R}^{3}\setminus(\mathcal{C}_{\delta,\varepsilon}
\cup\mathcal{W}_{\delta,\varepsilon})}(G_{\varepsilon}(\Psi))
\rightarrow \mathcal{I}(\Psi).
\end{equation}
Moreover
\begin{equation}\label{96}
\mathcal{I}_{\mathcal{C}_{\delta,\varepsilon}}(G_{\varepsilon}(\Psi))=
\int_{\mathcal{C}_{\delta,\varepsilon}}|\nabla U|^{2}
+\frac{e^{4U}}{\rho^{4}}|\omega_{0}|^{2}
+\frac{e^{2U}}{\rho^{2}}(|\nabla\chi_{0}|^{2}
+|\nabla\psi_{0}|^{2}),
\end{equation}
where the first term on the right-hand side converges to zero again by DCT. The second and third terms may be estimated
by the reduced energy of $\Psi_{0}$ (and hence also converge to zero), since
\begin{equation}\label{97}
|U|+|U_{0}|\leq C\text{ }\text{ }\text{ on }\text{ }\text{ }\mathbb{R}^{3}\setminus\cup_{n=1}^{N}B_{\delta}(p_{n})
\end{equation}
by \eqref{3.11}.

Now write
\begin{equation}\label{98}
\mathcal{I}_{\mathcal{W}_{\delta,\varepsilon}}(G_{\varepsilon}(\Psi))
=\underbrace{\int_{\mathcal{W}_{\delta,\varepsilon}}|\nabla U|^{2}}_{I_{1}}
+\underbrace{\int_{\mathcal{W}_{\delta,\varepsilon}}
\frac{e^{4U}}{\rho^{4}}|\omega_{\varepsilon}|^{2}}_{I_{2}}
+\underbrace{\int_{\mathcal{W}_{\delta,\varepsilon}}
\frac{e^{2U}}{\rho^{2}}(|\nabla\chi_{\varepsilon}|^{2}
+|\nabla\psi_{\varepsilon}|^{2})}_{I_{3}},
\end{equation}
and notice that $I_{1}\rightarrow 0$ by DCT.
In order to estimate $I_{2}$, write
\begin{align}\label{99}
\begin{split}
\omega_{\varepsilon}=&\phi_{\varepsilon}\omega
+(1-\phi_{\varepsilon})\omega_{0}
+(v-v_{0})\nabla\phi_{\varepsilon}
+(\chi_{0}\psi-\psi_{0}\chi)\nabla\phi_{\varepsilon}
\\
&+\phi_{\varepsilon}(1-\phi_{\varepsilon})
[(\psi-\psi_{0})\nabla(\chi-\chi_{0})
-(\chi-\chi_{0})\nabla(\psi-\psi_{0})].
\end{split}
\end{align}
Using \eqref{67} and \eqref{97} produces
\begin{align}\label{100}
\begin{split}
I_{2}\leq& C\int_{\mathcal{W}_{\delta,\varepsilon}}
\rho^{-4}(|\omega|^{2}
+|\omega_{0}|^{2}
+(\log\varepsilon)^{-2}\rho^{-2}|v-v_{0}|^{2}
+(\log\varepsilon)^{-2}\rho^{-2}|\chi_{0}\psi-\psi_{0}\chi|^{2}\\
&\text{ }\text{ }\text{ }\text{ }\text{ }\text{ }\text{ }\text{ }\text{ }\text{ }+
|\psi-\psi_{0}|^{2}|\nabla(\chi-\chi_{0})|^{2}
+|\chi-\chi_{0}|^{2}|\nabla(\psi-\psi_{0})|^{2}).
\end{split}
\end{align}
The first and second terms converge to zero by the DCT and finite reduced energy of $\Psi_{0}$. The third term may be
directly estimated with the help of \eqref{3.5} and \eqref{64}
\begin{equation}\label{101}
\int_{\mathcal{W}_{\delta,\varepsilon}}
(\log\varepsilon)^{-2}\rho^{-6}|v-v_{0}|^{2}
\leq C\int_{\mathcal{W}_{\delta,\varepsilon}}
(\log\varepsilon)^{-2}\rho^{-2}\leq C(\log\varepsilon)^{-1}\rightarrow 0.
\end{equation}
A similar calculation holds for the fourth term. Consider now the fifth term, and use \eqref{3.5}, \eqref{59}, and
\eqref{61.2} to find
\begin{equation}\label{102}
\int_{\mathcal{W}_{\delta,\varepsilon}}
\rho^{-4}|\psi-\psi_{0}|^{2}|\nabla(\chi-\chi_{0})|^{2}
\leq C\int_{\mathcal{W}_{\delta,\varepsilon}}
\rho^{2}\leq C\varepsilon^{2}\rightarrow 0.
\end{equation}
The sixth term behaves in the same way, and hence $I_{2}\rightarrow 0$.

Consider the first term in the integral $I_{3}$ and write
\begin{equation}\label{103}
\nabla\chi_{\varepsilon}=\nabla\chi_{0}
+(\chi-\chi_{0})\nabla\phi_{\varepsilon}
+\phi_{\varepsilon}\nabla(\chi-\chi_{0}),
\end{equation}
so that
\begin{equation}\label{104}
\int_{\mathcal{W}_{\delta,\varepsilon}}
\frac{e^{2U}}{\rho^{2}}|\nabla\chi_{\varepsilon}|^{2}
\leq  C\int_{\mathcal{W}_{\delta,\varepsilon}}
\rho^{-2}(|\nabla\chi_{0}|^{2}
+(\log\varepsilon)^{-2}\rho^{-2}|\chi-\chi_{0}|^{2}
+|\nabla(\chi-\chi_{0})|^{2}).
\end{equation}
All of these terms may be estimated as above, showing that $I_{3}\rightarrow 0$.
\end{proof}

By composing the three cut and paste operations defined above, we obtain the desired replacement for $\Psi$ which
satisfies \eqref{54}. Namely, let
\begin{equation}\label{105}
\Psi_{\delta,\varepsilon}
=G_{\varepsilon}\left(F_{\delta}\left(
F_{\delta}^{1}(\Psi)\right)\right).
\end{equation}

\begin{prop}\label{prop1}
Let $\varepsilon\ll\delta\ll 1$ and suppose that $\Psi$ satisfies the hypotheses of Theorem \ref{minimum}. Then
$\Psi_{\delta,\varepsilon}$ satisfies \eqref{54} and
\begin{equation}\label{106}
\lim_{\delta\rightarrow 0}\lim_{\varepsilon\rightarrow 0}
\mathcal{I}(\Psi_{\delta,\varepsilon})=\mathcal{I}(\Psi).
\end{equation}
\end{prop}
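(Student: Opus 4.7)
The argument splits into two parts: verification of the support condition \eqref{54}, and the convergence of the energies.

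For \eqref{54}, one simply traces through the three cut-and-paste operations. The $U$-component is altered only by $F_{\delta}^{1}$, since $F_{\delta}$ and $G_{\varepsilon}$ leave $U$ untouched; thus $U_{\delta,\varepsilon} - U_{0} = \varphi_{\delta}^{1}(U - U_{0})$, which is supported in $B_{2/\delta}$. For each of $v,\chi,\psi$, all three cutoffs act multiplicatively, yielding for instance
\[
v_{\delta,\varepsilon} - v_{0} = \phi_{\varepsilon}\,\varphi_{\delta}\,\varphi_{\delta}^{1}(v - v_{0}),
\]
whose support lies in $\{\rho > \varepsilon\} \cap \{r_{n} > \delta \text{ for all } n\} \cap \{r < 2/\delta\} = \Omega_{\delta,\varepsilon}$, and similarly for $\chi-\chi_{0}$ and $\psi-\psi_{0}$.

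For the convergence I would work through the iterated limit in two stages. First fix $\delta>0$ small enough that the puncture balls $\{r_{n}<2\delta\}$, the outer annulus $\{1/\delta<r<2/\delta\}$, and any axial shell $\{\varepsilon<\rho<\sqrt{\varepsilon}\}$ near $\Gamma$ away from punctures are pairwise disjoint. The intermediate map $F_{\delta}(F_{\delta}^{1}(\Psi))$ equals $\Psi_{0}$ outside $B_{2/\delta}$ (with potentials equal to those of $\Psi_{0}$ on $\cup_{n} B_{\delta}(p_{n})$), so the second statement of Lemma~\ref{converge3} applies and yields
\[
\lim_{\varepsilon\to 0}\mathcal{I}(\Psi_{\delta,\varepsilon}) = \mathcal{I}\bigl(F_{\delta}(F_{\delta}^{1}(\Psi))\bigr).
\]
Next, I would send $\delta\to 0$ in $\mathcal{I}(F_{\delta}(F_{\delta}^{1}(\Psi)))$ via the telescoping decomposition
\[
\mathcal{I}(F_{\delta}(F_{\delta}^{1}(\Psi))) - \mathcal{I}(\Psi) = \Bigl[\mathcal{I}(F_{\delta}(F_{\delta}^{1}(\Psi))) - \mathcal{I}(F_{\delta}^{1}(\Psi))\Bigr] + \Bigl[\mathcal{I}(F_{\delta}^{1}(\Psi)) - \mathcal{I}(\Psi)\Bigr].
\]
The second bracket vanishes in the limit by Lemma~\ref{converge1}. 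For the first bracket, note that for small $\delta$ the $F_{\delta}$-modification region $\cup_{n}\{\delta<r_{n}<2\delta\}$ sits inside $\{r<1/\delta\}$, where $F_{\delta}^{1}(\Psi)=\Psi$; outside this region the two composed maps coincide. Thus the bracket reduces to $\mathcal{I}_{\cup_{n}B_{2\delta}(p_{n})}(F_{\delta}(\Psi)) - \mathcal{I}_{\cup_{n}B_{2\delta}(p_{n})}(\Psi)$. The second term tends to $0$ by dominated convergence (finite energy over a shrinking region), and the first tends to $0$ by rewriting it as $\mathcal{I}(F_{\delta}(\Psi)) - \mathcal{I}_{\mathbb{R}^{3}\setminus\cup_{n}B_{2\delta}(p_{n})}(\Psi)$ and invoking Lemma~\ref{converge2} together with DCT.

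The main technical point is ensuring that the hypotheses of Lemma~\ref{converge3}, and the ``also holds'' clauses of Lemmas~\ref{converge2}--\ref{converge3}, are legitimately satisfied by the composed maps. Although $F_{\delta}$ does not alter the $U$-component, so $F_{\delta}(F_{\delta}^{1}(\Psi))$ is not strictly equal to $\Psi_{0}$ on $\cup_{n} B_{\delta}(p_{n})$, the only ingredient actually needed from this hypothesis in the proof of Lemma~\ref{converge3} is the uniform bound $|U|\leq C$ on $\mathbb{R}^{3}\setminus\cup_{n} B_{\delta}(p_{n})$ provided by \eqref{3.11}, which does hold for the composition since the asymptotics \eqref{18}--\eqref{19} and \eqref{3.14} force $U$ and $U_{0}$ to have the same leading logarithmic behavior near each puncture. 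The rest is routine bookkeeping made possible by the disjointness of the three modification regions for $\varepsilon\ll\delta\ll 1$.
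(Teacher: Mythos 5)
Your proposal is correct, and it fills in details that the paper states but does not prove: the paper presents Proposition \ref{prop1} as an immediate consequence of Lemmas \ref{converge1}--\ref{converge3} without writing out the composition. Your verification of \eqref{54} via the multiplicative factorization $v_{\delta,\varepsilon}-v_0=\phi_\varepsilon\varphi_\delta\varphi_\delta^1(v-v_0)$ (and $U_{\delta,\varepsilon}-U_0=\varphi_\delta^1(U-U_0)$) is exactly right, and the iterated limit handled by first invoking Lemma \ref{converge3} at fixed $\delta$, then telescoping the $\delta\to 0$ limit through Lemmas \ref{converge1} and \ref{converge2}, is a clean and valid realization of what the paper intends.

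Two of your explanatory remarks are imprecise, though neither affects the validity. First, when you say the only ingredient needed from the hypothesis of Lemma \ref{converge3} is the bound $|U|\leq C$ off the punctures, that misidentifies the role of the hypothesis. The bound $|U|+|U_0|\leq C$ comes for free from \eqref{3.11} and the asymptotics \eqref{17}--\eqref{19}; it is not what the hypothesis ``$\Psi\equiv\Psi_0$ on $\cup_n B_\delta(p_n)$'' buys you. What that hypothesis actually secures is that $G_\varepsilon$ leaves the map unchanged on the puncture balls, so that $\mathcal{I}_{\mathbb{R}^3\setminus(\mathcal{C}_{\delta,\varepsilon}\cup\mathcal{W}_{\delta,\varepsilon})}$ of the modified map reduces to the original reduced energy on an exhausting sequence of domains. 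Since $G_\varepsilon$ only modifies $(v,\chi,\psi)$ and the composition $F_\delta(F_\delta^1(\Psi))$ already has $(v,\chi,\psi)\equiv(v_0,\chi_0,\psi_0)$ on $\cup_n B_\delta(p_n)$, the effective form of the hypothesis is satisfied even though $U\neq U_0$ there. So your conclusion is right but the stated reason is not. Second, the claim that the three modification regions are pairwise disjoint for $\varepsilon\ll\delta\ll 1$ is not accurate: the axial shell $\{\varepsilon<\rho<\sqrt\varepsilon\}$ where $\phi_\varepsilon$ is nontrivial does intersect both $\cup_n\{\delta<r_n<2\delta\}$ and $\{1/\delta<r<2/\delta\}$ near the axis. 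What saves the argument is not disjointness, but the facts that the $(v,\chi,\psi)$ cut-offs compose multiplicatively and that on the axis the potentials of $\Psi$ and $\Psi_0$ already agree, so the interpolated quantities retain the required $\rho$-decay needed in the estimates of Lemma \ref{converge3}.
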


We are now in a position to prove the main result of this section.\medskip

\noindent\textit{Proof of Theorem \ref{minimum}.} By Proposition \ref{prop1} $\Psi_{\delta,\varepsilon}$ satisfies
\eqref{54}. Thus, if $\tilde{\Psi}^{t}_{\delta,\varepsilon}$ is the geodesic connecting $\tilde{\Psi}_{0}$ to
$\tilde{\Psi}_{\delta,\varepsilon}$ as described at the beginning of this section, then
$U^{t}_{\delta,\varepsilon}=U_{0}+t(U_{\delta,\varepsilon}-U_{0})$. Following \cite{SchoenZhou} we have
\begin{equation}\label{107}
\frac{d^{2}}{dt^{2}}\mathcal{I}(\Psi^{t}_{\delta,\varepsilon})
=\underbrace{\frac{d^{2}}{dt^{2}}\mathcal{I}_{\Omega_{\delta,\varepsilon}}(\Psi^{t}_{\delta,\varepsilon})}_{I_{1}}+
\underbrace{\frac{d^{2}}{dt^{2}}\mathcal{I}_{\mathcal{A}_{\delta,\varepsilon}}(\Psi^{t}_{\delta,\varepsilon})}_{I_{2}},
\end{equation}
with
\begin{align}\label{108}
\begin{split}
I_{1}&=\frac{d^{2}}{dt^{2}}E_{\Omega_{\delta,\varepsilon}}
(\tilde{\Psi}^{t}_{\delta,\varepsilon})
+\frac{d^{2}}{dt^{2}}\int_{\partial\Omega_{\delta,\varepsilon}
\cap\partial\mathcal{A}_{\delta,\varepsilon}}
(\partial_{\nu}\log\rho)\left[2(U_{0}
+t(U_{\delta,\varepsilon}-U_{0}))-\log\rho\right]\\
&\geq 2\int_{\Omega_{\delta,\varepsilon}}
|\nabla\operatorname{dist}_{\mathbb{H}^{2}_{\mathbb{C}}}
(\Psi_{\delta,\varepsilon},\Psi_{0})|^{2}
\end{split}
\end{align}
where convexity of the harmonic energy \cite{SchoenZhou} was used in the last step, and
\begin{align}\label{109}
\begin{split}
I_{2}=&\int_{\mathcal{A}_{\delta,\varepsilon}}
2|\nabla(U_{\delta,\varepsilon}-U_{0})|^{2}
+16(U_{\delta,\varepsilon}-U_{0})^{2}\frac{e^{4[U_{0}+t(U_{\delta,\varepsilon}-U_{0})]}}{\rho^{4}}|\omega_{0}|^{2}\\
&+\int_{\mathcal{A}_{\delta,\varepsilon}}4(U_{\delta,\varepsilon}-U_{0})^{2}\frac{e^{2[U_{0}+t(U_{\delta,\varepsilon}-U_
{0})]}}{\rho^{2}}
(|\nabla\chi_{0}|^{2}+|\nabla\psi_{0}|^{2})\\
\geq &2\int_{\mathcal{A}_{\delta,\varepsilon}}
|\nabla\operatorname{dist}_{\mathbb{H}^{2}_{\mathbb{C}}}
(\Psi_{\delta,\varepsilon},\Psi_{0})|^{2}
\end{split}
\end{align}
since $\operatorname{dist}_{\mathbb{H}^{2}_{\mathbb{C}}}(\Psi_{\delta,\varepsilon},\Psi_{0})
=|U_{\delta,\varepsilon}-U_{0}|$ on $\mathcal{A}_{\delta,\varepsilon}$.

It remains to show that passing $\frac{d^{2}}{dt^{2}}$ into the integral in \eqref{109} is justified. For this it is
sufficient to show that each term on the right-hand side of the equality in \eqref{109} is uniformly integrable. There
is no issue with the first term since $U_{\delta,\varepsilon},U_{0}\in H^{1}(\mathbb{R}^{3})$. Consider now the second
and third terms, and write $\mathcal{A}_{\delta,\varepsilon}=\mathcal{C}_{\delta,\varepsilon}
\cup_{n=1}^{N}B_{\delta}(p_{n})$. Uniform integrability will follow if
$(U_{\delta,\varepsilon}-U_{0})^{2}e^{at(U_{\delta,\varepsilon}-U_{0})}$, $a=2,4$ is uniformly bounded, since then these
terms may be estimated by the reduced energy of $\Psi_{0}$. This is clearly the case on
$\mathcal{C}_{\delta,\varepsilon}$, as $U$ and $U_{0}$ are bounded on this region. On $B_{\delta}(p_{n})$,
$U_{\delta,\varepsilon}-U_{0}\sim\log r_{n}$ if $p_{n}$ represents an asymptotically flat end and
$U_{\delta,\varepsilon}-U_{0}\sim 1$ if $p_{n}$ represents an asymptotically cylindrical end. Thus, the desired
conclusion follows if $r_{n}^{at}(\log r_{n})^{2}$ is uniformly bounded, which occurs for $0<t_{0}<t\leq 1$. Since
$t_{0}>0$ is arbitrary, we conclude that \eqref{55} holds for $\Psi_{\delta,\varepsilon}$ when $t\in(0,1]$.

We now aim to verify \eqref{56} for $\Psi_{\delta,\varepsilon}$. Choose $\varepsilon_{0}<\varepsilon$,
$\delta_{0}<\delta$ and write
\begin{equation}\label{110}
\frac{d}{dt}\mathcal{I}(\Psi^{t}_{\delta,\varepsilon})
=\underbrace{\frac{d}{dt}\mathcal{I}_{\Omega_{\delta_{0},\varepsilon_{0}}}(\Psi^{t}_{\delta,\varepsilon})}_{I_{3}}+
\underbrace{\frac{d}{dt}\mathcal{I}_{\mathcal{A}_{\delta_{0},\varepsilon_{0}}}(\Psi^{t}_{\delta,\varepsilon})}_{I_{4}}.
\end{equation}
Justification for passing $\frac{d}{dt}$ into the integrals, for $t\in(0,1]$, is similar to the arguments of the
previous paragraph.
Then integrating by parts, and using the Euler-Lagrange equations \eqref{c7} satisfied by $\Psi_{0}$ together with the fact that the functionals $\mathcal{I}$ and $E$ have the same critical points, produces
\begin{equation}\label{111}
I_{3}=O(t)-\sum_{n=1}^{N}\int_{\partial B_{\delta_{0}}(p_{n})}2(U_{\delta,\varepsilon}-U_{0})\partial_{\nu}U_{0}
-\int_{\partial\mathcal{C}_{\delta_{0},\varepsilon_{0}}}
2(U_{\delta,\varepsilon}-U_{0})\partial_{\nu}U_{0}
\end{equation}
for small $t$, where $\nu$ is the unit outer normal pointing toward $M_{\text{end}}^{0}$. Next, using that
$U^{t}_{\delta,\varepsilon}=U_{0}+t(U_{\delta,\varepsilon}-U_{0})$
and $\frac{d}{dt}v^{t}_{\delta,\varepsilon}=
\frac{d}{dt}\chi^{t}_{\delta,\varepsilon}
=\frac{d}{dt}\psi^{t}_{\delta,\varepsilon}=0$ yields
\begin{align}\label{112}
\begin{split}
I_{4}=&O(t)+\int_{\mathcal{A}_{\delta_{0},\varepsilon_{0}}}
2\nabla U_{0}\cdot\nabla(U_{\delta,\varepsilon}-U_{0})
+4(U_{\delta,\varepsilon}-U_{0})\frac{e^{4[U_{0}+t(U_{\delta,\varepsilon}-U_{0})]}}{\rho^{4}}|\omega_{0}|^{2}\\
&+\int_{\mathcal{A}_{\delta_{0},\varepsilon_{0}}}2(U_{\delta,\varepsilon}-U_{0})\frac{e^{2[U_{0}+t(U_{\delta,\varepsilon
}-U_{0})]}}{\rho^{2}}
(|\nabla\chi_{0}|^{2}+|\nabla\psi_{0}|^{2}).
\end{split}
\end{align}
Observe that according to the first Euler-Lagrange equation of \eqref{c7}
\begin{align}\label{113}
\begin{split}
\int_{\partial B_{\delta_{0}}(p_{n})}(U_{\delta,\varepsilon}-U_{0})\partial_{\nu}U_{0}
=&\int_{B_{\delta_{0}}(p_{n})}
\nabla U_{0}\cdot\nabla(U_{\delta,\varepsilon}-U_{0})
+2(U_{\delta,\varepsilon}-U_{0})\frac{e^{4U_{0}}}{\rho^{4}}|\omega_{0}|^{2}\\
&+\int_{B_{\delta_{0}}(p_{n})}(U_{\delta,\varepsilon}-U_{0})\frac{e^{2U_{0}}}{\rho^{2}}
(|\nabla\chi_{0}|^{2}+|\nabla\psi_{0}|^{2}).
\end{split}
\end{align}
Note that this is justified since \eqref{3.1} implies that
\begin{equation}\label{114}
\left|\int_{\partial B_{r_{n}}(p_{n})}(U_{\delta,\varepsilon}-U_{0})\partial_{\nu}U_{0}\right|
\leq C\int_{\partial B_{r_{n}}(p_{n})}\rho^{-1}|\log r_{n}| =Cr_{n}|\log r_{n}|\rightarrow 0\text{ }\text{ }\text{ as
}\text{ }\text{ }r_{n}\rightarrow 0.
\end{equation}
It follows that
\begin{align}\label{115}
\begin{split}
\lim_{t\rightarrow 0}\frac{d}{dt}\mathcal{I}(\Psi^{t}_{\delta,\varepsilon})
=&\int_{\mathcal{C}_{\delta_{0},\varepsilon_{0}}}
2\nabla U_{0}\cdot\nabla(U_{\delta,\varepsilon}-U_{0})
+4(U_{\delta,\varepsilon}-U_{0})\frac{e^{4U_{0}}}{\rho^{4}}|\omega_{0}|^{2}\\
&-\int_{\partial\mathcal{C}_{\delta_{0},\varepsilon_{0}}}
2(U_{\delta,\varepsilon}-U_{0})\partial_{\nu}U_{0}
+\int_{\mathcal{C}_{\delta_{0},\varepsilon_{0}}}2(U_{\delta,\varepsilon}-U_{0})\frac{e^{2U_{0}}}{\rho^{2}}
(|\nabla\chi_{0}|^{2}+|\nabla\psi_{0}|^{2}).
\end{split}
\end{align}
This in fact vanishes, since \eqref{113} holds with $B_{\delta_{0}}(p_{n})$ replaced by
$\mathcal{C}_{\delta_{0},\varepsilon_{0}}$. Verification of this statement follows from
\begin{equation}\label{116}
\left|\int_{\mathcal{C}_{\delta_{0},\varepsilon_{0}}}
\nabla U_{0}\cdot\nabla(U_{\delta,\varepsilon}-U_{0})
+(U_{\delta,\varepsilon}-U_{0})\Delta U_{0}\right|
\leq\int_{\mathcal{C}_{\delta_{0},\varepsilon_{0}}}|\nabla U_{\delta,\varepsilon}|^{2}
+|\nabla U_{0}|^{2}+|\Delta U_{0}|\rightarrow 0\text{ }\text{ }\text{ as }\text{ }\text{ }
\varepsilon_{0}\rightarrow 0,
\end{equation}
which is true since $U_{\delta,\varepsilon}=U$ and $U_{0}$ have finite reduced energy and $|\Delta U_{0}|\leq
C\rho^{-2+\epsilon}$ for some $\epsilon>0$ by \eqref{3.6}. Hence \eqref{56} holds for $\Psi_{\delta,\varepsilon}$.

Now integrating \eqref{55} twice and applying the Gagliardo-Nirenberg-Sobolev inequality produces
\begin{equation}\label{117}
\mathcal{I}(\Psi_{\delta,\varepsilon})-\mathcal{I}(\Psi_{0})
\geq 2\int_{\mathbb{R}^{3}}|\nabla\operatorname{dist}_{\mathbb{H}^{2}_{\mathbb{C}}}
(\Psi_{\delta,\varepsilon},\Psi_{0})|^{2}dx
\geq C\left(\int_{\mathbb{R}^{3}}
\operatorname{dist}_{\mathbb{H}_{\mathbb{C}}^{2}}^{6}(\Psi_{\delta,\varepsilon},\Psi_{0})dx
\right)^{\frac{1}{3}}.
\end{equation}
By Proposition \ref{prop1} $\lim_{\delta\rightarrow 0}\lim_{\varepsilon\rightarrow 0}
\mathcal{I}(\Psi_{\delta,\varepsilon})=\mathcal{I}(\Psi)$, and thus in order to complete the
proof it suffices to show that the limits may be passed under the integral on the right-hand side. By the triangle
inequality, it is enough to show
\begin{equation}\label{118}
\lim_{\delta\rightarrow 0}\lim_{\varepsilon\rightarrow 0}\int_{\mathbb{R}^{3}}
\operatorname{dist}_{\mathbb{H}_{\mathbb{C}}^{2}}^{6}(\Psi_{\delta,\varepsilon},\Psi)dx
=0.
\end{equation}

As mentioned in Section \ref{sec3}, the geometry of complex hyperbolic space is invariant under
the translations $\overline{v}=v+b\chi-c\psi$, $\overline{\chi}=\chi+c$, $\overline{\psi}=\psi+b$. Then using the
triangle inequality and direct calculation produces
\begin{align}\label{119}
\begin{split}
&\operatorname{dist}_{\mathbb{H}_{\mathbb{C}}^{2}}(\Psi_{\delta,\varepsilon},\Psi)\\
\leq&\operatorname{dist}_{\mathbb{H}_{\mathbb{C}}^{2}}
((U_{\delta,\varepsilon},\overline{v}_{\delta,\varepsilon},\overline{\chi}_{\delta,\varepsilon},
\overline{\psi}_{\delta,\varepsilon}),(U,\overline{v}_{\delta,\varepsilon},\overline{\chi}_{\delta,\varepsilon},
\overline{\psi}_{\delta,\varepsilon}))
+\operatorname{dist}_{\mathbb{H}_{\mathbb{C}}^{2}}
((U,\overline{v}_{\delta,\varepsilon},\overline{\chi}_{\delta,\varepsilon},
\overline{\psi}_{\delta,\varepsilon}),(U,\overline{v},\overline{\chi}_{\delta,\varepsilon},
\overline{\psi}_{\delta,\varepsilon}))\\
&+\operatorname{dist}_{\mathbb{H}_{\mathbb{C}}^{2}}
((U,\overline{v},\overline{\chi}_{\delta,\varepsilon},
\overline{\psi}_{\delta,\varepsilon}),(U,\overline{v},\overline{\chi},
\overline{\psi}_{\delta,\varepsilon}))
+\operatorname{dist}_{\mathbb{H}_{\mathbb{C}}^{2}}
((U,\overline{v},\overline{\chi},
\overline{\psi}_{\delta,\varepsilon}),(U,\overline{v},\overline{\chi},
\overline{\psi}))\\
\leq& C\left(|U-U_{\delta,\varepsilon}|+\frac{e^{2U}}{\rho^{2}}(|\overline{v}-\overline{v}_{\delta,\varepsilon}|
+|\overline{\psi}_{\delta,\varepsilon}||\overline{\chi}-\overline{\chi}_{\delta,\varepsilon}|
+|\overline{\chi}||\overline{\chi}-\overline{\chi}_{\delta,\varepsilon}|)
+\frac{e^{U}}{\rho}(|\overline{\chi}-\overline{\chi}_{\delta,\varepsilon}|
+|\overline{\chi}-\overline{\chi}_{\delta,\varepsilon}|)\right),
\end{split}
\end{align}
where $\overline{v}_{\delta,\varepsilon}=v_{\delta,\varepsilon}+b\chi_{\delta,\varepsilon}
-c\psi_{\delta,\varepsilon}$ and similarly for $\overline{\chi}_{\delta,\varepsilon}$,
$\overline{\psi}_{\delta,\varepsilon}$. Observe that
\begin{equation}\label{120}
\int_{\mathbb{R}^{3}}|U-U_{\delta,\varepsilon}|^{6}\leq
\int_{\mathbb{R}^{3}\setminus B_{1/\delta}}|U-U_{0}|^{6}.
\end{equation}
Since $U$ and $U_{0}$ are limits in $H^{1}(\mathbb{R}^{3})$ of compactly supported functions, the Sobolev inequality
implies that $U-U_{0}\in L^{6}(\mathbb{R}^{3})$, and hence this integral converges to zero as $\delta\rightarrow 0$.
Next, we have
\begin{equation}\label{121}
\int_{\mathbb{R}^{3}}\frac{e^{12U}}{\rho^{12}}|\overline{v}-\overline{v}_{\delta,\varepsilon}|^{6}
\leq\int_{\mathbb{R}^{3}\setminus B_{1/\delta}}+\int_{\mathcal{C}_{\delta,\sqrt{\varepsilon}}}
+\sum_{n=1}^{N}\int_{B_{2\delta}(p_{n})}
\frac{e^{12U}}{\rho^{12}}|\overline{v}-\overline{v}_{0}|^{6}.
\end{equation}
By Lemma \ref{poincare} and \eqref{97}
\begin{align}\label{122}
\begin{split}
\int_{\mathbb{R}^{3}\setminus B_{1/\delta}}
\frac{e^{12U}}{\rho^{12}}|\overline{v}-\overline{v}_{0}|^{6}
\leq& C\int_{\mathbb{R}^{3}\setminus B_{1/\delta}}
\frac{1}{\rho^{12}}|\overline{v}-\overline{v}_{0}|^{6}\\
\leq& C\int_{\mathbb{R}^{3}\setminus B_{1/\delta}}
\frac{|\overline{v}-\overline{v}_{0}|^{4}}{\rho^{10}}
(|\nabla\overline{v}|^{2}+|\nabla\overline{v}_{0}|^{2})\\
\leq& C\int_{\mathbb{R}^{3}\setminus B_{1/\delta}}
\frac{e^{4U}}{\rho^{4}}
|\nabla\overline{\omega}|^{2}+\frac{e^{4U_{0}}}{\rho^{4}}|\nabla\overline{\omega}_{0}|^{2}\\
&
+C\int_{\mathbb{R}^{3}\setminus
B_{1/\delta}}\frac{e^{2U}}{\rho^{2}}(|\nabla\overline{\chi}|^{2}+|\nabla\overline{\psi}|^{2})
+\frac{e^{2U_{0}}}{\rho^{2}}(|\nabla\overline{\chi}_{0}|^{2}+|\nabla\overline{\psi}_{0}|^{2}),
\end{split}
\end{align}
since $\rho^{-6}|\overline{v}-\overline{v}_{0}|^{4}$ is bounded. This integral then converges to
zero as $\delta\rightarrow 0$, as each integrand appears in the reduced energy. Furthermore
\begin{equation}\label{123}
\int_{\mathcal{C}_{\delta,\sqrt{\varepsilon}}}\frac{e^{12U}}{\rho^{12}}|\overline{v}-\overline{v}_{0}|^{6}
\leq C\int_{\mathcal{C}_{\delta,\sqrt{\varepsilon}}}\frac{|\overline{v}-\overline{v}_{0}|^{6}}{\rho^{12}}
\leq C|\mathcal{C}_{\delta,\sqrt{\varepsilon}}|\rightarrow 0\text{ }\text{ }\text{ as }\text{ }\text{ }
\varepsilon\rightarrow 0,
\end{equation}
as \eqref{3.5} and \eqref{64} imply that $|\overline{v}-\overline{v}_{0}|\leq C\rho^{2}$ here.

Consider now an asymptotically cylindrical end represented by $p_{n}$. By choosing constants $b$ and $c$ (used to define
$\overline{v}$) appropriately in certain domains, we may assume without loss of generality that $\overline{\chi}$,
$\overline{\psi}$, $\overline{\chi}_{0}$, $\overline{\psi}_{0}$ vanish on the axis. Therefore we have that \eqref{3.2}
implies $|\overline{v}-\overline{v}|_{\Gamma}|\leq Cr_{n}^{-2}\rho^{2}$ in $B_{2\delta}(p_{n})$. Moreover \eqref{63.1}
yields
\begin{equation}\label{124}
|\overline{v}-\overline{v}_{0}|\leq C\frac{\rho^{2}}{r_{n}^{5/2}}\text{ }\text{ }\text{ on }\text{ }\text{
}B_{2\delta}(p_{n}),
\end{equation}
and similarly
\begin{equation}\label{125}
|\overline{\chi}|+|\overline{\psi}|+|\overline{\chi}_{0}|+|\overline{\psi}_{0}|\leq C\frac{\rho^{2}}{r_{n}^{3/2}}\text{
}\text{ }\text{ on }\text{ }\text{ }B_{2\delta}(p_{n}).
\end{equation}
Next, using Lemma \ref{poincare} produces
\begin{align}\label{126}
\begin{split}
\int_{B_{2\delta}(p_{n})}\frac{e^{12U}}{\rho^{12}}
|\overline{v}-\overline{v}_{0}|^{6}
\leq& C\int_{B_{2\delta}(p_{n})}\frac{r_{n}^{12}}{\rho^{12}}
|\overline{v}-\overline{v}_{0}|^{6}\\
\leq & C\int_{B_{2\delta}(p_{n})}\frac{r_{n}^{12}|\overline{v}-\overline{v}_{0}|^{4}}{\rho^{10}}
\left(|\nabla\overline{v}|^{2}+|\nabla\overline{v}_{0}|^{2}\right)\\
\leq& C\int_{B_{2\delta}(p_{n})}\frac{r_{n}^{8}|\overline{v}-\overline{v}_{0}|^{4}}{\rho^{6}}
\left(\frac{e^{4U}}{\rho^{4}}|\overline{\omega}|^{2}+
\frac{e^{4U_{0}}}{\rho^{4}}|\overline{\omega}_{0}|^{2}\right)\\
&+C\int_{B_{2\delta}(p_{n})}\frac{r_{n}^{10}|\overline{v}-\overline{v}_{0}|^{4}}{\rho^{8}}
\frac{e^{2U}}{\rho^{2}}\left(|\overline{\psi}|^{2}|\nabla\overline{\chi}|^{2}+
|\overline{\chi}|^{2}|\nabla\overline{\psi}|^{2}\right)\\
&
+C\int_{B_{2\delta}(p_{n})}\frac{r_{n}^{10}|\overline{v}-\overline{v}_{0}|^{4}}{\rho^{8}}
\frac{e^{2U_{0}}}{\rho^{2}}\left(|\overline{\psi}_{0}|^{2}|\nabla\overline{\chi}_{0}|^{2}+
|\overline{\chi}_{0}|^{2}|\nabla\overline{\psi}_{0}|^{2}\right).
\end{split}
\end{align}
From \eqref{124} and \eqref{125} it follows that
\begin{equation}\label{127}
\frac{r_{n}^{8}|\overline{v}-\overline{v}_{0}|^{4}}{\rho^{6}}
+\frac{r_{n}^{10}|\overline{v}-\overline{v}_{0}|^{4}}{\rho^{8}}\leq C,
\end{equation}
and hence \eqref{126} may be estimated by reduced energies restricted to $B_{2\delta}(p_{n})$,
which converge to zero as $\delta\rightarrow 0$. Analogous arguments hold if $p_{n}$ represents
an asymptotically flat end. We conclude that \eqref{121} converges to zero.

Similar computations show that the remaining integrals arising from the right-hand side of \eqref{119} also converge to
zero, and therefore \eqref{118} holds. \hfill\qedsymbol\medskip

\textit{Proof of Theorem \ref{thm2}.}
The asymptotic assumptions on the initial data $(g,k,E,B)$ imply that $(U,v,\chi,\psi)$ satisfy the asymptotics
\eqref{17}-\eqref{19}, \eqref{57}-\eqref{59}.
Thus Theorem \ref{minimum} applies,
and Theorem \ref{thm2} follows from \eqref{20} after setting
\begin{equation}\label{128}
\mathcal{F}(\mathcal{J}_{1},\ldots,\mathcal{J}_{N},q^{e}_{1},\ldots,q^{e}_{N},
q^{b}_{1},\ldots,q^{b}_{N})=\mathcal{M}(U_{0},v_{0},\chi_{0},\psi_{0}).
\end{equation}
\hfill\qedsymbol\medskip

Consider now Conjecture \ref{con1}, and assume that equality is achieved in \eqref{25} for initial data $(M,g,k,E,B)$
with $N>1$ black holes. If $\Psi$ denotes the associated harmonic map data, then following the proof of Theorem
\ref{thm2} yields $\Psi\equiv\Psi_{0}$. Arguments in Section \ref{sec2} suggest that $(M,g,k,E,B)$ should then give rise
to a stationary axisymmetric electrovacuum extremal black hole spacetime with disconnected horizon, and with $g$ conformally flat. It is likely that this spacetime falls into the Israel-Wilson-Perj\'{e}s class, which consists of solutions to the stationary (not necessarily axisymmetric) Einstein-Maxwell equations that are distinguished by having a conformally flat orbit space. Moreover, since the initial data set is maximal, it would then follow from \cite{CRT} that such a spacetime must be the Majumdar-Papapetrou spacetime.

\appendix

\section{Revisiting the Heuristic Arguments}\label{sec5}

The heuristic physical arguments which motivate \eqref{0} go back to Penrose's original derivation of
the Penrose inequality \cite{Penrose}. Typically in such arguments,
it is assumed that the end state of gravitational
collapse is a single Kerr-Newman black hole.
However, a more appropriate assumption for the end state is a
finite number of mutually distant Kerr-Newman black holes moving apart with
asymptotically
constant velocity. This should be the result, if for instance, two distant black
holes were initially moving away from each other sufficiently fast. We will
now describe the heuristic arguments for the mass-angular momentum-charge
inequality in this setting. It appears that this has not been previously considered in the literature.

Let $m_{i}$, $\mathcal{J}_i$,
$q_{i}$ denote the ADM masses, angular momenta, and total charges of the end state black
holes. Then the total (ADM) mass, angular momentum, and charge of the end state is
$m=\sum m_i$, $\mathcal{J}=\sum\mathcal{J}_{i}$, $q=\sum q_i$.
In a Kerr-Newman black hole these quantities satisfy the equation \cite{DainKhuriWeinsteinYamada}
\begin{equation}\label{5.1}
 m_{i}^{2}=\frac{A_{i}}{16\pi}+\frac{q_{i}^2}{2}+\frac{\pi(q_{i}^{4}+4\mathcal{J}_{i}^{2})}{A_{i}},
\end{equation}
where $A_{i}$ denotes horizon area. Moreover, as a function of $A_{i}$ (keeping $\mathcal{J}_{i}$ and $q_{i}$ fixed),
the right-hand
side is nondecreasing precisely when
\begin{equation}\label{5.2}
A_{i}\geq 4\pi\sqrt{q_{i}^{4}+4\mathcal{J}_{i}^{2}},
\end{equation}
and this inequality is always satisfied with equality only for extreme black holes.
Thus, computing the minimum value of the right-hand side of \eqref{5.1} yields
\begin{equation}\label{5.3}
 m_{i}^2\geq\frac{q_{i}^{2}+ \sqrt{q_{i}^4 + 4\mathcal{J}_{i}^2}}{2},
\end{equation}
with equality only for extreme black holes.
Let $m_{0}$, $\mathcal{J}_{0}$, $q_{0}$ denote the ADM mass, angular momentum, and total
charge of an initial state. Under appropriate hypotheses, such as axisymmetry and the existence of
a twist potential, angular momentum is conserved $\mathcal{J}_{0}=\mathcal{J}=\sum\mathcal{J}_{i}$.
Moreover, by assuming that no charged matter is present, the total charge is conserved
$q_0=q=\sum q_i$, and since gravitational waves may only carry away
positive energy $m_0 \geq m = \sum m_i$.

\begin{lemma} \label{1sqrt}
Let $a_i, b_i\in\mathbb{R}$ and let $a=\sum a_i$, $b=\sum b_i$. Then
\begin{equation}\label{5.4}
 \left(a^4+b^2\right)^{1/4} \leq \sum \left(a_i^4 + b_i^2\right)^{1/4}.
\end{equation}
\end{lemma}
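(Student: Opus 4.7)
The plan is to reduce the inequality to the Minkowski (triangle) inequality for the standard $\ell^{4}$-norm on $\mathbb{R}^{2}$, via a substitution that converts the anisotropic expression $(a^{4}+b^{2})^{1/4}$ into a genuine norm.

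First I would observe that $F(a,b) := (a^{4}+b^{2})^{1/4}$ depends only on $|a|$ and $|b|$, and is nondecreasing in each. Combined with $|\sum a_{i}| \leq \sum |a_{i}|$ and $|\sum b_{i}| \leq \sum |b_{i}|$, this immediately reduces matters to the case where all $a_{i}, b_{i} \geq 0$, so I may assume this throughout what follows.

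Next I would perform the substitution $t_{i} = \sqrt{b_{i}}$, which rewrites the summands on the right-hand side as $F(a_{i},b_{i}) = (a_{i}^{4} + t_{i}^{4})^{1/4} = \|(a_{i},t_{i})\|_{4}$. For the left-hand side, the elementary inequality $\sum t_{i}^{2} \leq (\sum t_{i})^{2}$ (valid since $t_{i}\geq 0$) gives $\sum b_{i} \leq (\sum t_{i})^{2}$, so together with monotonicity of $\|\cdot\|_{4}$ in each nonnegative coordinate I obtain
\begin{equation*}
\Bigl( \bigl(\textstyle\sum a_{i}\bigr)^{4} + \bigl(\sum b_{i}\bigr)^{2}\Bigr)^{1/4}
\leq \Bigl\| \bigl(\textstyle\sum a_{i},\, \sum t_{i}\bigr)\Bigr\|_{4}.
\end{equation*}
The final step is the triangle inequality for $\|\cdot\|_{4}$, which yields
\begin{equation*}
\Bigl\| \bigl(\textstyle\sum a_{i},\, \sum t_{i}\bigr)\Bigr\|_{4}
\leq \sum \|(a_{i},t_{i})\|_{4}
= \sum (a_{i}^{4}+b_{i}^{2})^{1/4},
\end{equation*}
closing the chain.

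There is no serious obstacle here: the only conceptual point is recognizing the substitution $t=\sqrt{b}$, after which the lemma is simply Minkowski for the $\ell^{4}$-norm combined with the subadditivity of the square root on $[0,\infty)$. An equivalent route would be induction on the number of terms, reducing to the two-term case $F(a_{1}+a_{2},b_{1}+b_{2}) \leq F(a_{1},b_{1})+F(a_{2},b_{2})$, which can be verified by the same Minkowski argument.
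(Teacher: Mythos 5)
Your proof is correct and is essentially the same as the paper's: both make the substitution $c_i=|b_i|^{1/2}$ (your $t_i$), show $b^2\le(\sum c_i)^4$, and then invoke the triangle inequality for the $\ell^4$-norm on $\mathbb{R}^2$.
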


\begin{proof}
Let $c_i=|b_i|^{1/2}$ and $c=\sum c_i$, then
\begin{equation}\label{5.5}
  |b|^{1/2} \leq \left(\sum|b_i|\right)^{1/2}=\left(\sum c_i^2\right)^{1/2} \leq \sum c_i = c.
\end{equation}
Hence $b^2\leq c^4$. We conclude that
\begin{equation}\label{5.6}
  \left(a^4 + b^2\right)^{1/4} \leq \left(a^4+c^4\right)^{1/4} \leq \sum \left(a_i^4+c_i^4\right)^{1/4} = \sum
\left(a_i^4 + b_i^2\right)^{1/4}.
\end{equation}
\end{proof}

\begin{lemma} \label{2sqrt}
Let $a_i, b_i\in\mathbb{R}$ and let $a=\sum a_i$, $b=\sum b_i$. Then
\begin{equation}\label{5.7}
 \sqrt{a^2 + \sqrt{a^4+b^2}} \leq \sum \sqrt{a_i^2 + \sqrt{a_i^4+b_i^2}}.
\end{equation}
\end{lemma}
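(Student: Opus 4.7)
The plan is to reduce this directly to Lemma \ref{1sqrt} plus the triangle inequality for the Euclidean norm on $\mathbb{R}^2$. Set $c_i=(a_i^4+b_i^2)^{1/4}\ge 0$ and $c=(a^4+b^2)^{1/4}\ge 0$, so that the desired inequality takes the compact form
\begin{equation}
\sqrt{a^2+c^2}\ \leq\ \sum\sqrt{a_i^2+c_i^2},
\end{equation}
with the crucial auxiliary fact $c\leq\sum c_i$ supplied by Lemma \ref{1sqrt}.

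First I would note that since $0\leq c\leq\sum c_i$, monotonicity of the square root gives
\begin{equation}
\sqrt{a^2+c^2}\ \leq\ \sqrt{a^2+\Big(\textstyle\sum c_i\Big)^{2}}.
\end{equation}
Next, viewing $v_i=(a_i,c_i)\in\mathbb{R}^2$ (where the sign of $a_i$ is kept but $c_i\geq 0$), the identity $\sum v_i=(a,\sum c_i)$ together with the triangle inequality for the Euclidean norm gives
\begin{equation}
\sqrt{a^2+\Big(\textstyle\sum c_i\Big)^{2}}\ =\ \Big\|\textstyle\sum v_i\Big\|\ \leq\ \sum\|v_i\|\ =\ \sum\sqrt{a_i^2+c_i^2}.
\end{equation}
Chaining these two inequalities and unpacking the definitions of $c$ and $c_i$ yields the claim.

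There is essentially no obstacle here: the only nontrivial input is Lemma \ref{1sqrt}, which was just established, and the rest is the triangle inequality plus monotonicity of $t\mapsto\sqrt{a^2+t^2}$ in $t\ge 0$. The one mildly subtle point worth mentioning is that the $a_i$ can have arbitrary sign while the $c_i$ are nonnegative; this is precisely what makes the triangle-inequality step match the left-hand side $\sqrt{a^2+(\sum c_i)^2}$ rather than $\sqrt{(\sum|a_i|)^2+(\sum c_i)^2}$.
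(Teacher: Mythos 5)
Your argument is correct and is essentially the same as the paper's: both first use Lemma \ref{1sqrt} to replace $\sqrt{a^4+b^2}$ by $\bigl(\sum(a_i^4+b_i^2)^{1/4}\bigr)^2$, and then apply the triangle inequality for the Euclidean norm on $\mathbb{R}^2$ (Minkowski) to conclude. You have merely made the two steps more explicit by naming $c_i$ and the vectors $v_i$.
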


\begin{proof}
By Lemma~\ref{1sqrt}
\begin{equation}\label{5.8}
  \left(a^4+b^2\right)^{1/2} \leq \left( \sum \left(a_i^4+b_i^2\right)^{1/4} \right)^2.
\end{equation}
Thus, it follows that
\begin{equation}\label{5.9}
  \sqrt{a^2 + \sqrt{a^4+b^2}} \leq \sqrt{\left( \sum a_i\right)^2 +\left( \sum \left(a_i^4+b_i^2\right)^{1/4} \right)^2}
  \leq \sum \sqrt{a_i^2 + \sqrt{a_i^4+b_i^2}}
\end{equation}
\end{proof}

Now, let $a_i=q_i$, and $b_i=2\mathcal{J}_i$, then we get
\begin{equation}\label{5.10}
  \sqrt2 m = \sqrt2 \sum m_i \geq \sum \sqrt{q_i^2 + \sqrt{q_i^4+4\mathcal{J}_i^2}} \geq  \sqrt{q^2 +
\sqrt{q^4+4\mathcal{J}^2}}.
\end{equation}
Squaring both sides yields the desired result \eqref{0}. We conclude that the heuristic arguments are
sufficiently robust to support the mass-angular momentum-charge inequality, even for spacetimes with
multiple black holes moving apart from one another at high velocities.

\section{The Extreme Kerr-Newman and Majumdar-Papapetrou Harmonic Maps}\label{sec6}

First we record formulas for the extreme Kerr-Newman harmonic map.
Recall that in Boyer-Lindquist coordinates the Kerr-Newmann metric takes the form
\begin{equation} \label{c1}
 -\frac{\Delta - a^{2}\sin^{2}\theta}{\Sigma} dt^{2}
 + \frac{2a\sin^{2}\theta}{\Sigma}\left( \widetilde{r}^{2}+a^{2}-\Delta \right)dtd\phi
+  \frac{(\widetilde{r}^{2}+a^{2})^{2} - \Delta a^{2}\sin^{2}\theta}{\Sigma} \sin^{2}\theta d\phi^{2}
+ \frac{\Sigma}{\Delta} d\widetilde{r}^{2} + \Sigma d\theta^{2}
\end{equation}
where
\begin{equation}\label{c2}
\Delta = \widetilde{r}^{2} + a^{2} +q^{2} -2m\widetilde{r},\text{ }\text{ }\text{ }\text{ }\text{ } \Sigma =
\widetilde{r}^{2} + a^{2}\cos^{2}\theta,
\end{equation}
and the electromagnetic 4-potential is given by
\begin{equation} \label{c3}
\mathbf{A} = -\frac{q_{e}\widetilde{r}}{\Sigma} \left( dt+ a\sin^{2}\theta d\phi \right)
- \frac{q_{b}\cos\theta}{\Sigma} \left( a dt + (\widetilde{r}^{2}+a^{2}) d \phi \right),
\end{equation}
The event horizon is located at the larger of the two solutions to the quadratic equation
$\Delta=0$, namely $\widetilde{r}_{+}=m+\sqrt{m^{2}-a^{2}-q^{2}}$, where the angular momentum is given by $\mathcal{J} =
ma$. For $\widetilde{r}>\widetilde{r}_{+}$ it holds that $\Delta>0$, so that a new radial coordinate may be defined by
\begin{equation}\label{c4}
r=\frac{1}{2}(\widetilde{r}-m+\sqrt{\Delta}),
\end{equation}
or rather
\begin{align} \label{c5}
\begin{split}
\widetilde{r} &= r + m + \frac{m^{2}-a^{2}-q^{2}}{4r},\text{ }\text{ }\text{ }\text{ }\text{ } m^{2} \neq a^{2}+q^{2}
\\
\widetilde{r} &= r + m, \text{ }\text{ }\text{ }\text{ }\text{ } m^{2}=a^{2}+q^{2}.
\end{split}
\end{align}
Note that the new coordinate is defined for $r>0$, and a critical point for the right-hand side of \eqref{c5}
($m^{2}\neq a^{2} + q^{2}$) occurs at the horizon, so that two isometric copies of the outer region are encoded on this
interval. The coordinates $(r,\theta,\phi)$ then form a (polar) Brill coordinate system, which is related to the
(cylindrical) Brill coordinates via the usual transformation $\rho=r\sin\theta$, $z=r\cos\theta$. Finally, the
harmonic map $(u_\KN,v_\KN,\chi_\KN,\psi_\KN):\mathbb{R}^{3}\setminus\Gamma\rightarrow \mathbb{H}^{2}_{\mathbb{C}}$,
$U_\KN=u_\KN+\log\rho$, which determines the extreme Kerr-Newman solution is given by
\begin{align}\label{c6}
\begin{split}
u_\KN&=-\frac{1}{2}\log\left[\left(\widetilde{r}^{2}+a^{2}+\frac{a^{2}\sin^{2}\theta(2m\widetilde{r}-q^{2})}
{\Sigma}\right)\sin^{2}\theta\right],\\
v_\KN&=ma\cos\theta(3-\cos^{2}\theta)-\frac{a(q^{2}\widetilde{r}-ma^{2}\sin^{2}\theta)\cos\theta\sin^{2}\theta}
{\Sigma},\\
\chi_\KN&=-\frac{qa\widetilde{r}\sin^{2}\theta}{\Sigma},\\
\psi_\KN&=\frac{q(\widetilde{r}^{2}+a^{2})\cos\theta}{\Sigma}.
\end{split}
\end{align}
The Euler-Lagrange equations satisfied by this and any other harmonic map $\Psi:\mathbb{R}^{3}\rightarrow\mathbb{H}^{2}_{\mathbb{C}}$ are given by
\begin{align}\label{c7}
\begin{split}
\Delta u-2e^{4u}|\nabla v+\chi\nabla\psi-\psi\nabla\chi|^{2}
-e^{2u}(|\nabla\chi|^{2}+|\nabla\psi|^{2})&=0,\\
\operatorname{div}\left[e^{4u}(\nabla v+\chi\nabla\psi-\psi\nabla\chi)\right]&=0,\\
\operatorname{div}(e^{2u}\nabla\chi)-2e^{4u}\nabla\chi\cdot
(\nabla v+\chi\nabla\psi-\psi\nabla\chi)&=0,\\
\operatorname{div}(e^{2u}\nabla\psi)+2e^{4u}\nabla\psi\cdot
(\nabla v+\chi\nabla\psi-\psi\nabla\chi)&=0.
\end{split}
\end{align}

Consider now the Majumdar-Papapetrou spacetime $\left(\mathbb{R}\times(\mathbb{R}^{3}
\setminus\cup_{n=1}^{N}p_{n}),ds^2\right)$ with
\begin{equation}\label{c8}
ds^2=-f^{-2}dt^2+f^{2}\delta,\text{ }\text{ }\text{ }\text{ }\text{ }
f=1+\sum_{n=1}^{N}\frac{m_{n}}{r_{n}},
\end{equation}
where $m_{n}=\sqrt{(q^{e}_{n})^{2}+(q^{b}_{n})^{2}}$ represents the mass and total eletromagnetic charge of each black
hole, $\delta$ is the Euclidean metric, and $r_{n}$ is the Euclidean distance to each puncture. Axisymmetry may be
imposed by choosing the punctures $p_{n}$ to lie on the $z$-axis. Cylindrical coordinates $(\rho,z,\phi)$ in 3-space
give rise to Brill coordinates with $U_\MP=-\log f$, and the 4-potential is given by
\begin{equation}\label{c9}
\mathbf{A}=\kappa fdt+\sqrt{1-\kappa^2}\sum_{n=1}^{N}\frac{m_{n}(z-z_{n})}{r_{n}}d\phi,\text{ }\text{ }\text{ }\text{
}\text{ }0\leq\kappa\leq 1.
\end{equation}
The constant $\kappa$ relates the electric and magnetic charges to the mass by $q_{n}^{e}=\kappa m_{n}$ and
$q_{n}^{b}=\sqrt{1-\kappa^{2}}m_{n}$. Typically the Majumdar-Papapetrou spacetime is stated without magnetic charges,
however through a duality rotation
\begin{equation}\label{c10}
E=(\cos\vartheta)\tilde{E}-(\sin\vartheta)\tilde{B},\text{ }\text{ }\text{ }\text{ }
B=(\sin\vartheta)\tilde{E}+(\cos\vartheta)\tilde{B},
\end{equation}
magnetic charge may be introduced so that $\kappa=\cos\vartheta$. Since $E=\kappa\nabla\log f$ and
$B=\sqrt{1-\kappa^{2}}\nabla\log f$, the electromagnetic potentials are obtained from \eqref{32}
\begin{equation}\label{c11}
d\chi_\MP=\kappa\rho(\partial_{z}f d\rho-\partial_{\rho}f dz),\text{ }\text{ }\text{ }\text{ }\text{
}d\psi_\MP=\sqrt{1-\kappa^{2}}\rho(\partial_{z}f d\rho-\partial_{\rho}f dz),
\end{equation}
so that
\begin{equation}\label{c12}
\chi_\MP=\kappa\sum_{n=1}^{N}\frac{m_{n}(z-z_{n})}{r_{n}},\text{ }\text{ }\text{ }\text{ }\text{
}\psi_\MP=\sqrt{1-\kappa^{2}}\sum_{n=1}^{N}\frac{m_{n}(z-z_{n})}{r_{n}}.
\end{equation}
Lastly, since this spacetime is static there is no angular momentum, and hence $v_\MP=0$. This, combined with the fact
that $\chi_\MP$ and $\psi_\MP$ are proportional leads to a harmonic map
with a 2-dimensional target that is isometric to hyperbolic space, namely
$(u_\MP,v_\MP,\chi_\MP,\psi_\MP):\mathbb{R}^{3}\setminus\Gamma\rightarrow
\mathbb{H}^{2}\subset\mathbb{H}_{\mathbb{C}}^{2}$ where $U_\MP=u_\MP+\log\rho$.

\end{document}